\pdfoutput=1 

\documentclass[a4paper,USenglish,cleveref,autoref,thm-restate,authorcolumns,runningheads,envcountsame]{lipics-v2019}
\nolinenumbers
\hideLIPIcs
\usepackage{amsmath,amssymb,mathtools,bbm}
\usepackage{xcolor}
\usepackage{caption}
\usepackage{subcaption}
\usepackage{booktabs}
\usepackage[version=0.96]{pgf}
\usepackage{tikz}
\usetikzlibrary{arrows,calc,fit,shapes,automata,backgrounds,positioning}
\usepackage{pifont}
\newcommand{\cmark}{\ding{51}}
\newcommand{\xmark}{{\color{gray}\ding{55}}}

\usepackage{sty/colors}      
\usepackage{sty/utilities}   
\usepackage{sty/notation}    
\usepackage{thm-restate}     

\bibliographystyle{plainurl}
\hypersetup{hidelinks}

\title{A Classification of Weak Asynchronous Models of Distributed Computing}

\author{Javier Esparza}{Technische Universit\"{a}t M\"{u}nchen, Germany}{esparza@in.tum.de}{https://orcid.org/0000-0001-9862-4919}{}

\author{Fabian Reiter}{LIGM, Universit\'e Gustave Eiffel, France}{fabian.reiter@gmail.com}{https://orcid.org/0000-0003-1268-4107}{}

\authorrunning{J.\,Esparza and F.\,Reiter}

\Copyright{Javier Esparza and Fabian Reiter}

\keywords{Asynchrony, Concurrency theory, Weak models of distributed computing}

\ccsdesc{Theory of computation~Automata extensions}
\ccsdesc{Theory of computation~Concurrency}
\ccsdesc{Theory of computation~Distributed computing models}

\acknowledgements{The authors thank Ahmed Bouajjani for many interesting discussions,
  and several anonymous reviewers for their helpful feedback.}

\funding{This work was supported by the
  \href{https://paves.model.in.tum.de}
  {ERC project PaVeS (Advanced Grant 787367)}.}

\relatedversion{To appear in the proceedings of CONCUR 2020 (published by LIPIcs).}

\begin{document}

\maketitle

\begin{abstract}
We conduct a systematic study of asynchronous models of distributed computing
consisting of identical finite-state devices that cooperate in a network
to decide if the network satisfies a given graph-theoretical property.
Models discussed in the literature differ in
the detection capabilities of the agents residing at the nodes of the network
(detecting the set of states of their neighbors,
or counting the number of neighbors in each state),
the notion of acceptance
(acceptance by halting in a particular configuration, or by stable consensus),
the notion of step
(synchronous move, interleaving, or arbitrary timing),
and the fairness assumptions
(non-starving, or stochastic-like).
We study the expressive power of the combinations of these features,
and show that the initially twenty possible combinations
fit into seven equivalence classes.
The classification is the consequence of several equi-expressivity results
with a clear interpretation. In particular, we show that
acceptance by halting configuration only has non-trivial expressive power
if it is combined with counting, and that synchronous and interleaving models have the same power
as those in which an arbitrary set of nodes can move at the same time.
We also identify simple graph properties that distinguish
the expressive power of the seven classes.
\end{abstract}

\section{Introduction}
\label{sec:intro}

Distributed computing is increasingly interested in the study of networks 
of natural or artificial devices, like molecules, cells, microorganisms, or nano-robots.
These devices have very limited computational and communication capabilities, and are  indistinguishable. In particular, a device cannot recognize whether its current communication 
partner is the same as a past one. This stands in stark contrast to the devices of standard computer 
networks, which has motivated researchers to question the suitability of traditional distributed computing models for the study of these networks, and to propose new ones. Examples include
population protocols \cite{AADFP06,angluin2005stably}, chemical reaction networks \cite{SoloveichikCWB08}, networked finite state machines \cite{EmekW13}, the weak models of distributed computing of \cite{HJKLLLSV15}, and the beeping model \cite{CK10}. A survey discussing many of them, and more, can be found in \cite{NB15}. 

All these models share several common features, introduced to capture the limitations of the devices \cite{EmekW13}:  the network can have an arbitrary topology; all nodes of the network have a finite number of states, independent of the size of the network or its topology; all nodes run the same protocol; and state changes only depend on the states of a bounded number of neighbors, again independent of the size of the network.

Unfortunately, despite this very substantial common ground, the models still differ in many aspects, which makes it hard to compare results across them, or decide which features are essential for a particular result. A study of the models allows one to identify four specific junctions at which they choose different paths:
\begin{itemize}
\item \textit{Detection}. In some models, agents can only detect the \emph{existence} of neighbors in  a certain state \cite{HJKLLLSV15}. In others, they can \emph{count} their number, up to a fixed threshold \cite{EmekW13,HJKLLLSV15}.  For example, in biological models, cells communicate by emitting special kinds of proteins, and detecting them; in some models the cells may detect the presence of the protein when its concentration exceeds a given threshold, while in others they are able to detect different concentration levels.
\item \textit{Acceptance}. Some models compute by \emph{stable consensus}, which requires all nodes to eventually agree on the outcome of the computation (but the nodes do not need to \emph{know} that consensus has been reached) \cite{AADFP06,angluin2005stably,SoloveichikCWB08}, while others require the nodes to reach a consensus in a halting configuration \cite{HJKLLLSV15}. Acceptance by stable consensus is computationally powerful, since it permits the algorithm designer to concentrate on ensuring that every bad input is eventually rejected; declaring all non-rejecting states accepting ensures that every good input is eventually accepted. 
\item \textit{Selection}. In some models, at each step a scheduler chooses an
  arbitrary set of nodes to make a step \cite{EmekW13,Reiter17}, while in
  others it is exactly one, or exactly one pair of neighboring nodes
  \cite{AADFP06,angluin2005stably, SoloveichikCWB08}. We call the latter \emph{exclusive} or \emph{interleaving} models.
Intuitively, interleaving models are useful when it can be assumed that process steps are much faster than the time interval between them, while the former policy does not need this assumption. In addition, they help the algorithm designer, who can assume that agents act in mutual exclusion.
(Examples where this is useful can be found in the proofs of Propositions~\ref{prop:CasF-recog-stars} and~\ref{prop:population-protocols}.)
Another common option for selection is the \emph{synchronous} execution model  \cite{HJKLLLSV15},
where all nodes are selected in each step.
Again this can be helpful for designing algorithms,
but it is incompatible with exclusive selection.
\item \textit{Fairness}. Some models use fairness assumptions designed to model or approximate stochastic behavior \cite{AADFP06,angluin2005stably,SoloveichikCWB08}, while others choose minimal notions, like ``all nodes make a step infinitely often'', which only assume the absence of crash faults
(see, e.g., \cite{Francez,LehmannPS81}).
Stochastic-like assumptions are reasonable for biological or chemical models, but can be too strong for networks of artificial nodes, which may follow non-random execution policies. Stochastic models may be able to solve problems that cannot be solved with weaker fairness assumptions.
\end{itemize}
The goal of this paper is to explore the space of models spanned by the above parameters, and compare their computational power within a specific framework. For this  we use {\em distributed automata}, a generic formalism for the description of finite-state distributed algorithms. Such an automaton consists of a set of rules that tell the nodes of a graph how to change their state depending on the states of their neighbors. Intuitively, the automaton describes an algorithm that allows the nodes of an input graph to decide, in a distributed way, whether the graph satisfies a given property. The computational power of a class of distributed automata is then given by the class of graph languages recognized by the automata in the class, or, in other words, by the graph properties that the class of automata can decide.

We start with twenty classes of distributed automata, and show that with respect to their computational power, they fall into seven different classes. This reduction is a consequence of two results presented in this paper: (1) acceptance by halting configuration only has non-trivial expressive power if it is  combined with counting; (2) both interleaving and synchronous selection have the same power as liberal selection where arbitrarily many nodes can move at the same time (and therefore, one can design an automaton in an interleaving or synchronous model, which is less error prone, and then translate it to a liberal model). Some of the simulations we design to prove the results are of independent interest. In particular, we give explicit constructions showing how to simulate interleaving models by non-interleaving~ones.  

The paper is organized as follows. Section \ref{sec:prelim} introduces distributed automata and their variants. Sections \ref{sec:trivial} to \ref{sec:exclusive} show that the variants collapse to at most the seven equivalence classes mentioned above. Section \ref{sec:separations} contains separation results showing that the seven classes are different. Finally, Section~\ref{sec:pprotocols} presents further results on their expressive power. Proofs missing or only sketched in the main text can be found in the Appendix.

%


\section{A taxonomy of distributed automata}
\label{sec:prelim}

Given sets $X, Y$,
we denote by $\PowerSet{X}$ the power set of $X$,
and by $X^Y$ the set of functions~${Y \to X}$.
We define
$\Range[m]{n} \DefEq \SetBuilder{i \in \Integers}{m \leq i \leq n}$
and
$\Range{n} \DefEq \Range[0]{n}$,
for any $m, n \in \Integers$ such that $m \leq n$.
Angle brackets indicate excluded endpoints, e.g.,
$\Range*[m]{n} \DefEq \Range[m - 1]{n}$
and
$\Range{n}* \DefEq \Range[0]{n-1}$.

Let~$\Alphabet$ be a finite set.
A \intro{($\Alphabet$-labeled, undirected) graph} is a triple
$G = \Tuple{V, E, \lambda}$,
where
$V$~is a finite nonempty set of \intro{nodes},
$E$~is a set of undirected \intro{edges} of the form
$e = \Set{u, v} \subseteq V$
such that $u \neq v$,
and
$\lambda \colon V \to \Alphabet$
is a \intro{labeling}.
Isomorphic graphs are considered to be equal.
\textbf{Convention}: Throughout the paper,
all graphs have at least two nodes and are connected.

\subsection{Distributed automata}

Distributed automata take a graph as input,
and either accept or reject it.
To define them we first introduce distributed machines.

\smallskip\noindent\textbf{Distributed machines.}
Let~$\Alphabet$ be a finite set of symbols
and let $\beta \in \Positives$.
A \intro{(distributed) machine}
with \emph{input alphabet}~$\Alphabet$ and \emph{counting bound}~$\beta$
is a tuple
$M~=~\Tuple{Q, \delta_0, \delta, Y, N}$,
where
$Q$~is a finite set of \intro{states},
$\delta_0 \colon \Alphabet \to Q$
is an \intro{initialization function},
$\delta \colon Q \times \Range{\beta}^{Q} \to Q$
is a \intro{transition function},
and $Y, N \subseteq Q$ are two sets
of \emph{accepting} and \emph{rejecting} states,
respectively.
The function~$\delta$ updates the state of a node~$v$
based on the number of neighbors $v$ has in each state,
but it can only detect
if $v$ has $0, 1, \ldots, (\beta-1)$,
or at least $\beta$ neighbors in a given state.


\smallskip\noindent\textbf{Selections, schedules, configurations, runs, and acceptance.}
A \emph{selection} of a $\Alphabet$-labeled graph
$G = \Tuple{V, E, \lambda}$ is a set~$\Selection \subseteq V$,
and a \emph{schedule} of~$G$
is an infinite  sequence of selections
$\Schedule = (\Selection_0, \Selection_1, \Selection_2, \ldots) \in (2^V)^\omega$.
Intuitively,
the selection~$\Selection_t$ is the set of nodes activated by the scheduler at time $t$.

Let
$M = \Tuple{Q, \delta_0, \delta, Y, N}$
be a distributed machine with input alphabet $\Alphabet$.
A \intro{configuration} of $M$ on~$G$ is a mapping
$\Configuration \colon V \to Q$.
Given a configuration $\Configuration$ and a node $v \in V$,
we let
$N_v^{\Configuration} \colon Q \to \Range{\beta}$
denote the function that assigns to each state $q$
the number of neighbors of~$v$ that are in state~$q$
up to threshold~$\beta$, i.e.,
$\min \bigSet{
  \beta,\,
  \Card{\SetBuilder{u}{\Set{u, v} \in E \land \Configuration(u) = q}}}$.
We call~$N_v^{\Configuration}$
the \intro{$\beta$-bounded multiset} of states of $v$'s neighbors.

For any selection~$\Selection$,
we define the \intro{successor configuration}
of~$\Configuration$ via~$\Selection$
to be the configuration
$\mathit{succ}_{\delta}(\Configuration, \Selection)$
that one obtains from~$\Configuration$
if all nodes in~$\Selection$
evaluate the transition function~$\delta$ simultaneously
while the remaining nodes keep their current state.
Formally,
for all $v \in V$,
\begin{align*}
  \mathit{succ}_{\delta}(\Configuration, \Selection)(v) =
  \begin{cases*}
    \Configuration(v)
    & if $v \notin \Selection$ \\
    \delta
    \bigl(
    \Configuration(v), N_v^{\Configuration}
    \bigr)
    & if $v \in \Selection$.
  \end{cases*}
\end{align*}

This brings us directly to the notion of a run.
Given a schedule
$\Schedule = (\Selection_0, \Selection_1, \Selection_2, \ldots)$,
the \intro{run} of~$M$ on~$G$ scheduled by~$\Schedule$
is the infinite sequence
$\Run = \Tuple{\Configuration_0, \Configuration_1, \Configuration_2, \dots}$
of configurations that are defined inductively as follows,
where~$\circ$ denotes function composition,
and $t \in \Naturals$:
\begin{align*}
  \Configuration_0 = \delta_0 \circ \lambda
  \qquad \text{and} \qquad
  \Configuration_{t+1} = \mathit{succ}_{\delta}(\Configuration_t, \Selection_t).
\end{align*}

A configuration $\Configuration$ is \intro{accepting} if $\Configuration(v) \in Y$ for every $v \in V$, and \intro{rejecting} if $\Configuration(v) \in N$ for every $v \in V$.  
A run~$\Run = \Tuple{\Configuration_0, \Configuration_1, \Configuration_2, \dots}$ of~$M$ on~$G$ is \intro{accepting} if there is a  time $t \in \Naturals$ such that $\Configuration_{t'}$ is accepting for every $t' \geq t$. In other words, a run is accepting if from some time on it only visits accepting configurations. Similarly, $\Run$ is \intro{rejecting} if eventually all visited configurations are rejecting. Following \cite{AADFP06}, we call this \emph{acceptance by stable consensus}.

\smallskip\noindent\textbf{Distributed automata.} Not every schedule of a distributed machine models an execution; for example, schedules in which a node is never activated are usually considered illegal.  We assume that distributed machines are controlled by a scheduler that ensures that the machine executes a legal run. Formally, a \emph{scheduler} is a pair $\Scheduler = (\SelectionConstr,\FairnessConstr)$, where $\SelectionConstr$ is a \intro{selection constraint} that assigns to every graph  $G=\Tuple{V, E, \lambda}$ a set~$\SelectionConstr(G) \subseteq 2^V$ of \emph{permitted selections} such that every node $v \in V$ occurs in at least one selection $\Selection \in \SelectionConstr(G)$, and $\FairnessConstr$ is a \emph{fairness constraint} that assigns to every graph $G$ a set $\FairnessConstr(G) \subseteq \SelectionConstr(G)^\omega$ of \intro{fair schedules} of $G$.
We call the runs with schedules in $\FairnessConstr(G)$ \emph{fair runs}
(with respect to~$\Scheduler$).

A \emph{distributed automaton} is a pair $A=(M,\Scheduler)$, where $M$ is a machine 
and $\Scheduler$ is a scheduler satisfying the \emph{consistency condition}: for every graph $G$, either all fair runs of $M$ on $G$ are accepting, or all fair runs of $M$ on $G$ are rejecting.
Intuitively,
the machine is “immune” to the scheduler
because its answer is independent of the scheduler's choices.
This formalizes the standard notion of “asynchronous distributed algorithm”.
Notice that the consistency condition is a very strong \emph{semantic} requirement.
Although we will not do so in this paper,
one can prove that it is undecidable
whether a given pair $\Tuple{M, \Scheduler}$ satisfies it.

$A$ \intro{accepts} $G$ if every fair run of $A$ on $G$ is accepting, and \intro{rejects} $G$ otherwise. The language $L(A)$ recognized by $A$ is the set of graphs it accepts.
Two automata are equivalent if they recognize the same language.

\subsection{Classifying distributed automata.} 

We classify automata according to four criteria: detection capabilities, acceptance condition,  selection constraint, and fairness constraint. The first two criteria concern the distributed machine, and the other two the scheduler.  For each criterion, we investigate some of the major options that have been considered in the literature.

\smallskip\noindent \textbf{Detection.} In some models, agents can only detect the existence of neighbors in a certain state. This corresponds to \emph{non-counting machines}, i.e., machines with counting bound $\beta = 1$.  Other models can detect the number of neighbors up to a higher bound \cite{HJKLLLSV15}.

\smallskip\noindent \textbf{Acceptance.} As mentioned above, distributed machines accept by stable consensus. This is the acceptance condition of population protocols and chemical reaction networks \cite{AADFP06,angluin2005stably, SoloveichikCWB08}. Other models consider a notion of acceptance where each node explicitly decides to accept or reject \cite{HJKLLLSV15}. This notion is captured by halting automata. A machine~$M$ is \emph{halting} if its transition function does not allow the nodes to leave accepting or rejecting states, i.e., if $\delta(q, P) = q$ for every $q \in Y \cup N$ and every $\beta$-bounded multiset $P \in \Range{\beta}^{Q}$. In halting machines, each node knows whether the input graph will be accepted the moment it enters an accepting or rejecting state. Indeed, by the consistency condition, in every fair run, eventually either all nodes occupy accepting states, or all nodes occupy rejecting states. Since nodes can never leave an accepting state once they enter it, each node that enters such a state knows that all other nodes will eventually do likewise. The same applies to rejecting states.

\smallskip\noindent \textbf{Selection.} A scheduler  $\Scheduler = (\SelectionConstr,\FairnessConstr)$ is \emph{synchronous} on $G=\Tuple{V, E, \lambda}$ if $\SelectionConstr(G) = \Set{V}$. Intuitively, at every step all nodes make a move. $\Scheduler$ is \emph{exclusive} or \emph{interleaving-based} on $G$ if $\SelectionConstr(G) = \{ \{v\} \mid v \in V \}$.
Intuitively, at every step exactly one node makes a move, i.e., nodes execute steps in mutual exclusion.
Finally, $\Scheduler$ is \emph{liberal} on~$G$ if $\SelectionConstr(G) = 2^V$. Intuitively, at every step an arbitrary subset of nodes makes a move.
A scheduler is called \emph{synchronous} if it is synchronous on every graph. Exclusive and liberal schedulers are defined analogously.


\smallskip\noindent \textbf{Fairness.} A schedule $\Schedule = (\Selection_0, \Selection_1, \ldots)$ of a graph $G$ is \emph{weakly fair} if for every node~$v$ of~$G$, there exist infinitely many indices $t$ such that $v \in \Selection_t$. 
In other words, a schedule is weakly fair if every node is active infinitely often. 
A scheduler $\Scheduler =(\SelectionConstr,\FairnessConstr)$ is  \emph{weakly fair} if $\FairnessConstr(G)$ contains precisely the weakly-fair schedules of $\SelectionConstr(G)^{\omega}$ for every graph $G$.  This is the weakest fairness constraint one can impose on distributed automata; it only excludes runs in 
which a node crashes, and does not participate in the computation anymore. 

With respect to a given selection constraint~$\SelectionConstr$, a schedule $\Schedule = (\Selection_0, \Selection_1, \ldots) \in \SelectionConstr(G)^\omega$ of a graph $G$ is \emph{strongly fair} if for every finite sequence $(\Selection[2]_0, \ldots, \Selection[2]_n) \in \SelectionConstr(G)^*$ there exist infinitely many indices $t$ such that $(\Selection[1]_t, \Selection[1]_{t+1}, \ldots, \Selection[1]_{t+n}) = (\Selection[2]_0, \Selection[2]_1, \ldots, \Selection[2]_n)$.  Intuitively, strong fairness requires that every possible finite sequence of selections is scheduled infinitely often. 
If every node is selected independently with positive probability, stochastic schedules are almost surely strongly fair. A scheduler $\Scheduler =(\SelectionConstr,\FairnessConstr)$ is \emph{strongly fair} if for every graph $G$, the set $\FairnessConstr(G)$ contains precisely the strongly-fair schedules of $\SelectionConstr(G)^\omega$. 

\begin{remark}
\label{rem:fairness}
Whether a schedule $\sigma$ of a graph $G=\Tuple{V, E, \lambda}$ is strongly fair or not depends on $\SelectionConstr(G)$. For example,  if $\SelectionConstr(G) = \{ V \}$, then the synchronous schedule $V^\omega$ is strongly fair, but if $\SelectionConstr(G) = 2^V$, then it is not. 
\end{remark}

Our notion of strong fairness implies an apparently stronger one, used frequently in the literature, stating that
in a strongly fair run, a sequence of configurations that is enabled infinitely often must occur infinitely often:

\begin{restatable}{lemma}{LemStrongFairness}
\label{lem:strongfairness}
  Let~$A$ be a strongly fair automaton
  and
  $\Tuple{\Configuration[2]_0, \dots, \Configuration[2]_n}$
  be a sequence of configurations of~$A$
  such that
  $\Configuration[2]_{i+1}$ is the successor configuration of~$\Configuration[2]_i$
  via some selection~$\Selection_i$ permitted by~$A$,
  for $i \in \Range[0]{n}*$.
  For any fair run
  $\Run = \Tuple{\Configuration[1]_0, \Configuration[1]_1, \dots}$
  of~$A$,
  if $\Configuration[1]_i = \Configuration[2]_0$
  for infinitely many indices $i \in \Naturals$,
  then
  $\Tuple{\Configuration[1]_j, \dots, \Configuration[1]_{j+n}} =
  \Tuple{\Configuration[2]_0, \dots, \Configuration[2]_n}$
  for infinitely many indices $j \in \Naturals$.
\end{restatable}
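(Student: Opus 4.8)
The plan is to prove the equivalent statement that for every $N$ there is an index $j \geq N$ with $\Tuple{\Configuration[1]_j, \dots, \Configuration[1]_{j+n}} = \Tuple{\Configuration[2]_0, \dots, \Configuration[2]_n}$; this immediately yields infinitely many such $j$. Since $\mathit{succ}_\delta$ is deterministic, it is in fact enough to find $j \geq N$ with $\Configuration[1]_j = \Configuration[2]_0$ and with the $n$ selections that $\Run$ performs at times $j, \dots, j+n-1$ equal to $\Tuple{\Selection_0, \dots, \Selection_{n-1}}$: the configurations $\Configuration[1]_{j+1}, \dots, \Configuration[1]_{j+n}$ are then forced to be $\Configuration[2]_1, \dots, \Configuration[2]_n$.

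Fix $N$. The idea is to use strong fairness to make the schedule ``replay'' a pattern that first drives the machine into $\Configuration[2]_0$ and then along the given path. Choose a time $i \geq N$ with $\Configuration[1]_i = \Configuration[2]_0$ (possible by hypothesis), and let $v$ be the length-$i$ prefix of the schedule of $\Run$, so that running $v$ from the initial configuration reaches $\Configuration[2]_0$. Then the finite selection sequence $v \cdot \Tuple{\Selection_0, \dots, \Selection_{n-1}}$ drives the initial configuration through $\Configuration[1]_0, \dots, \Configuration[2]_0, \Configuration[2]_1, \dots, \Configuration[2]_n$, and by strong fairness it occurs infinitely often as a factor of the schedule of $\Run$. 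If one such occurrence, starting at some position $p$, satisfies $\Configuration[1]_p = \Configuration[1]_0$, then replaying the factor from time $p$ produces $\Tuple{\Configuration[2]_0, \dots, \Configuration[2]_n}$ starting at time $j := p+i \geq N$, as required.

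To secure an occurrence positioned at the initial configuration, I would first pass to a suffix of $\Run$ in which every visited configuration is visited infinitely often: over the fixed graph there are only finitely many configurations, so $\Run$ eventually remains inside the set $R$ of configurations it visits infinitely often, and a suffix of a strongly fair schedule is again strongly fair (a factor occurring infinitely often still does so after deleting a finite prefix). In this suffix, its own first configuration is recurrent, and $R$ is strongly connected via permitted selections (two configurations visited infinitely often are mutually reachable along the run). Picking a first-return selection word $c$ of the suffix at its initial configuration and applying strong fairness to factors $c^{M}\cdot v \cdot \Tuple{\Selection_0, \dots, \Selection_{n-1}}$ for large $M$ — together with a pigeonhole argument over the finitely many configurations that can appear at the infinitely many occurrences of such a factor, and the fact that iterating $\mathit{succ}_\delta$ along $c$ eventually settles on a cycle containing the suffix's initial configuration — one extracts an occurrence positioned at the required configuration. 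This has to be done relative to the suffix, after which the resulting occurrence lifts back to one in the original run.

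The genuinely delicate point is exactly this alignment step: strong fairness constrains only the schedule, so an occurrence of a prescribed selection word carries no information about the configuration at its starting position, and forcing the two to coincide is where the recurrence of the chosen suffix's initial configuration and the finiteness of the configuration space must be used with care. Everything else — the suffix-closure of strong fairness, the determinism of $\mathit{succ}_\delta$, and the reductions above — is routine bookkeeping that I would carry out once the alignment is in place.
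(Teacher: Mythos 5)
Your reduction to finding infinitely many positions $j$ at which both $\Configuration[1]_j = \Configuration[2]_0$ and the schedule reads $\Selection_0,\dots,\Selection_{n-1}$ is fine, and so is passing to a suffix in which every visited configuration recurs. The genuine gap is precisely the alignment step you flag yourself, and the mechanism you sketch does not close it. Strong fairness guarantees that the selection word $c^{M}\cdot v\cdot(\Selection_0,\dots,\Selection_{n-1})$ occurs infinitely often in the schedule, and pigeonhole gives you a single recurrent configuration $X$ appearing at infinitely many of these occurrences; but nothing forces $X$ to lie in the basin of the fixed point $F_0$ of the map ``apply $c$''. That map is a function on a finite set having $F_0$ as a fixed point, yet an arbitrary recurrent $X$ may settle into a completely different cycle --- for instance $X$ could itself be a fixed point of ``apply $c$'' distinct from $F_0$ --- so $c^{M}$ applied at $X$ never reaches $F_0$, however large $M$ is. Since strong fairness constrains only the schedule and says nothing about the configurations at the occurrence positions, it is consistent that \emph{every} occurrence of your word is positioned at such an $X$, and the argument stalls exactly where you called it delicate. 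This alignment is not bookkeeping; it is the entire content of the lemma.

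The paper resolves it by abandoning any single word tailored to one starting configuration and instead building a \emph{universal} word. Enumerating the recurrent configurations $E_1,\dots,E_k$, it constructs by induction a finite selection sequence $\sigma=\sigma_k$ such that, started from \emph{any} $E_i$, applying $\sigma$ either visits a non-recurrent configuration (which happens only finitely often in the run, hence not at the tail occurrences of $\sigma$) or passes through $(\Configuration[2]_0,\dots,\Configuration[2]_n)$. The induction step appends to $\sigma_j$ a word leading from wherever $E_{j+1}$ lands after $\sigma_j$ back to $\Configuration[2]_0$, followed by $\Selection_0\cdots\Selection_{n-1}$; since $\sigma_j$ is a prefix of $\sigma_{j+1}$, the property already secured for $E_1,\dots,E_j$ is preserved. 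One application of strong fairness to this single word then finishes the proof, with no alignment problem because the word works from every recurrent starting point. To salvage your writeup, replace the $c^{M}$ device by this concatenation over all recurrent configurations.
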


The classification above yields 24 classes of automata (four classes of machines and six classes of schedulers). 
To assign mnemonics to them, we use lowercase letters for the most restrictive machine variants (i.e., non-counting and halting), and the same letters in uppercase for the other variants. With schedulers we proceed the other way round, assigning lowercase letters to the most liberal variants (i.e., liberal selection and weak fairness). Intuitively, due to the consistency condition, the more liberal a scheduler, the harder it is for an automaton to recognize a graph language, because more runs have to yield the same result. So, loosely speaking, we expect the expressive power to increase with the number of uppercase letters.

\begin{center}
  \begin{tabular}{l@{\hspace{5ex}}l@{\hspace{5ex}}l@{\hspace{5ex}}l}
    \emph{Detection}                   & \emph{Acceptance}                              & \emph{Selection}                         & \emph{Fairness} \\
    \midrule
    \DetectionType{set}: non-counting  & \AcceptanceType{halting}: halting              & \SelectionType{liberal}: liberal         & \FairnessType{weak}: weak \\
    \DetectionType{multiset}: counting & \AcceptanceType{stabilizing}: stable consensus & \SelectionType{exclusive}: exclusive     & \FairnessType{strong}: strong \\
                                       &                                                & \SelectionType{synchronous}: synchronous &
  \end{tabular}
\end{center}

\newcommand{\res}[1]{\textbf{#1.}}
We denote each class of automata by a string
$wxyz \in
\Set{\DetectionType{set},\DetectionType{multiset}}
\times
\Set{\AcceptanceType{halting},\AcceptanceType{stabilizing}}
\times
\Set{\SelectionType{liberal},\SelectionType{exclusive},\SelectionType{synchronous}}
\times
\Set{\FairnessType{weak},\FairnessType{strong}}$.
The class of languages recognized by $wxyz$-automata is denoted
$\LanguageClass(wxyz)$.
The following lemma states all relations between language classes
that follow directly from the definitions.
Statement~1 abbreviates
``$\LanguageClass(\DetectionType{set}xyz) \subseteq  \LanguageClass(\DetectionType{multiset}xyz)$
for all
$x \in \{\AcceptanceType{halting},\AcceptanceType{stabilizing} \}$,
$y \in
\{\SelectionType{liberal},\SelectionType{exclusive},\SelectionType{synchronous}\}$,
$z \in \{\FairnessType{weak},\FairnessType{strong}\}$''.
We use the same convention in Statements~2 to~5,
and throughout the paper.
That is,
any statement with
four-letter strings containing the wildcard symbol~\DetectionType{*}
must be expanded into the list of all statements
that can be obtained by replacing
identically positioned occurrences of~\DetectionType{*}
with the same letter.

\begin{restatable}{lemma}{LemSimple}\label{lem:simple}
\res{1}~$\LanguageClass(\Type{set}{*}{*}{*}) \subseteq \LanguageClass(\Type{multiset}{*}{*}{*})$,\quad
\res{2}~$\LanguageClass(\Type{*}{halting}{*}{*}) \subseteq \LanguageClass(\Type{*}{stabilizing}{*}{*})$,\quad
\res{3}~$\LanguageClass(\Type{*}{*}{*}{weak}) \subseteq \LanguageClass(\Type{*}{*}{*}{strong})$,\quad
\res{4}~$\LanguageClass(\Type{*}{*}{liberal}{weak}) \subseteq \LanguageClass(\Type{*}{*}{exclusive}{weak})$,\quad
\res{5}~$\LanguageClass(\Type{*}{*}{liberal}{weak}) \subseteq \LanguageClass(\Type{*}{*}{synchronous}{weak})$,\quad
\res{6}~$\LanguageClass(\Type{*}{*}{synchronous}{strong}) \subseteq \LanguageClass(\Type{*}{*}{synchronous}{weak})$.
\end{restatable}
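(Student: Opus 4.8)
The plan is to derive the six inclusions directly from the definitions, separating those that only change the machine from those that only change the scheduler. Statements~1 and~2 are of the first kind. A non-counting machine is, by definition, a machine whose counting bound is $\beta=1$, hence in particular a counting machine; and a halting machine is, by definition, a machine whose transition function satisfies $\delta(q,P)=q$ for every $q\in Y\cup N$ and every $\beta$-bounded multiset $P$, hence in particular an arbitrary (stabilizing) machine. Since the scheduler is left untouched, the very same automaton witnesses both sides of each inclusion and recognizes the same language, so these two statements require nothing beyond unfolding the definitions.

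For Statements~3--6 the machine $M$ is kept fixed and only the scheduler is replaced, so I would first record the following auxiliary fact, which rests on the observation that the run of $M$ on a graph $G$ scheduled by a sequence $\Schedule$ depends only on $M$ and $\Schedule$, not on the scheduler that produced it. \emph{Fact:} if $A=(M,\Scheduler)$ is an automaton with $\Scheduler=(\SelectionConstr,\FairnessConstr)$ and $\Scheduler'=(\SelectionConstr',\FairnessConstr')$ is a scheduler with $\emptyset\neq\FairnessConstr'(G)\subseteq\FairnessConstr(G)$ for every graph $G$, then $A'=(M,\Scheduler')$ is again an automaton and $L(A')=L(A)$. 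Indeed, the fair runs of $A'$ on $G$ then form a nonempty subset of the fair runs of $A$ on $G$, so the consistency condition is inherited (``all accepting or all rejecting'' is preserved under passing to a subfamily), and $L(A')=L(A)$: if $A$ accepts $G$ then all fair runs of $A$, hence all fair runs of $A'$, are accepting; and if $A$ does not accept $G$ then by consistency all fair runs of $A$, hence all the fair runs of $A'$ (of which there is at least one), are rejecting, so $A'$ does not accept $G$ either.

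It then remains, for each of Statements~3--6, to choose a scheduler of the target class whose fair schedules form a nonempty subset of those of the given scheduler, so that the Fact applies. For Statement~3 I keep $\SelectionConstr$ and pass from weak to strong fairness: every strongly fair schedule is weakly fair, because for each node~$v$ the definition of a selection constraint yields a permitted selection that contains~$v$, and strong fairness forces the length-one sequence consisting of that selection to occur infinitely often, so $v$ is selected infinitely often; moreover strongly fair schedules exist, by interleaving all finite sequences over $\SelectionConstr(G)$ so that each occurs infinitely often. For Statement~4 I pass from liberal to exclusive selection, keeping weak fairness: a weakly fair schedule over the singleton selections is a sequence of singletons in which every node occurs infinitely often, which is in particular a weakly fair schedule over $2^V$, and such schedules exist (round-robin). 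For Statement~5 I pass from liberal to synchronous selection: the unique synchronous schedule $V^\omega$ is weakly fair with respect to $2^V$. For Statement~6, when $\SelectionConstr(G)=\Set{V}$ the only schedule is $V^\omega$, which by Remark~\ref{rem:fairness} is strongly fair and is trivially weakly fair, so the strongly fair and the weakly fair synchronous schedulers even coincide and the two automata classes are literally equal.

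I do not anticipate a genuine difficulty, as every step is an immediate consequence of the definitions; the only care needed is the bookkeeping in the Fact: one must check that a fair run of~$A'$ is literally a fair run of~$A$ (which is where scheduler-independence of runs enters), and that the consistency condition together with the definition of~$L(\cdot)$ survives the restriction to a nonempty subfamily of fair runs.
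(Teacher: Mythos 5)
Your proof is correct and follows essentially the same route as the paper: Statements~1 and~2 by unfolding the machine definitions, and Statements~3--6 by fixing the machine and replacing the scheduler with one whose fair schedules form a subset of the original ones, so that consistency and the language are inherited. The only difference is cosmetic — you package the scheduler argument as a single reusable fact and explicitly check nonemptiness of the restricted schedule sets, whereas the paper spells out Statement~3 and says ``proceed as in 3'' for Statements~4 and~5.
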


Lemma~\ref{lem:simple} leads to the diagram in Figure~\ref{fig:lattice}, showing 20 automata classes (we have $\LanguageClass(\Type{*}{*}{synchronous}{weak}) = \LanguageClass(\Type{*}{*}{synchronous}{strong})$ by Statements~3 and~6).  An arrow between two classes means that every graph language recognized by the source class is also recognized by the target class.

The reader probably finds Figure~\ref{fig:lattice} very complicated. We also do, and this was the motivation for the present paper. How many of these classes are really different? In the next sections we show that classes with the same color have the same expressivity, and thus that the diagram of Figure~\ref{fig:lattice} collapses to the one of Figure \ref{fig:lattice-CAF}, which contains only seven classes.

\begin{figure}[htb]
\begin{center}
\includegraphics{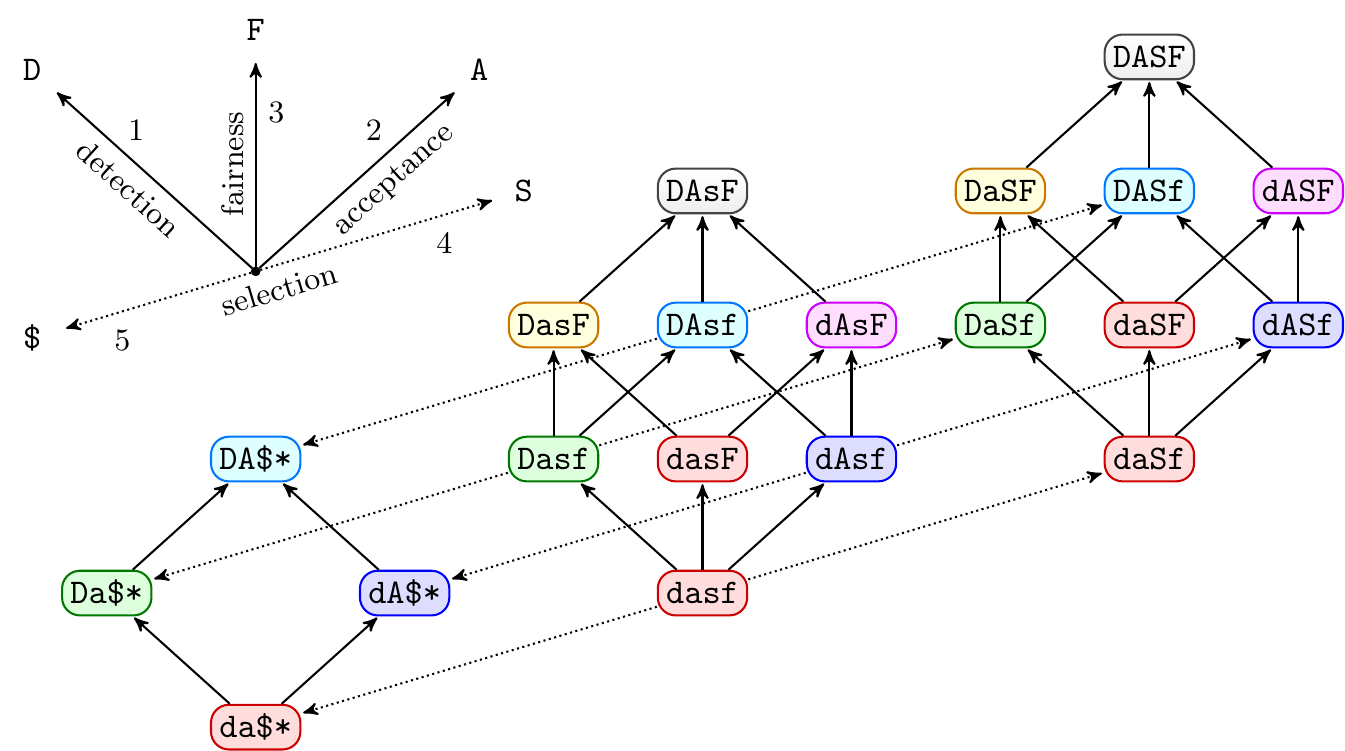}
\end{center}
\vspace{-2ex}
\caption{Initial classification of the models according to the class of graph languages they recognize. Arrows indicate inclusion between classes of languages. The diagram can be thought of as lying in four-dimensional space, where each dimension represents one of our four parameters. The vectors of the “coordinate system” are labeled with the statement number of Lemma \ref{lem:simple} that proves the inclusions in the corresponding direction. In the coming sections, classes are shown to be equal if and only if they have the same color, reducing the 20 classes to 7,  as shown in Figure \ref{fig:lattice-CAF}. This means in particular that we completely eliminate the dimension of selection (shown in dotted lines), leaving us with only three dimensions.}
\label{fig:lattice}
\end{figure}


\section{The weakest classes have no expressiveness}
\label{sec:trivial}
We prove that
\Type{set}{halting}{liberal}{*}-automata have no expressive power,
and the results in Sections~\ref{sec:synchro} and~\ref{sec:exclusive}
will generalize this to \Type{set}{halting}{*}{*}-automata.
Intuitively,
if agents cannot count their neighbors,
and must reach a halting configuration,
then they cannot distinguish any two graphs.
Formally,
a graph property is \emph{trivial} if either every graph satisfies it,
or no graph satisfies it.
We have:

\begin{restatable}{theorem}{ThmTrivial}
  \label{thm:trivial}
  Every \Type{set}{halting}{liberal}{*}-automaton
  recognizes a trivial graph property.
\end{restatable}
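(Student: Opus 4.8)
The plan is to isolate a single state $q^\star\in Q$ that depends only on the machine $M$ (not on the input graph), and to show that on every graph every fair run of $A$ is eventually constant at the configuration in which all nodes occupy~$q^\star$. Once that is in place, the consistency condition forces $q^\star\in Y\cup N$, and $A$ then accepts a graph if and only if $q^\star\in Y$ — a verdict that does not depend on the graph — which is precisely triviality.

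The state $q^\star$ comes from the uniform synchronous behaviour of $M$. Since the counting bound is $\beta=1$ and every node of a connected graph on at least two nodes has a neighbour, in any configuration where all nodes share a common state~$q$ every node sees the same bounded multiset of neighbour states (the one recording a single neighbour in state $q$); hence the synchronous selection $V$ sends the configuration ``all nodes in $q$'' to the configuration ``all nodes in $f(q)$'', for a function $f\colon Q\to Q$ determined by $\delta$. Iterating $f$ from $q_0=\delta_0(a)$ (we are dealing with graph properties, so the label $a$ is immaterial) gives a sequence that is eventually periodic in the finite set $Q$; I would first argue it is eventually \emph{constant} at a fixed point $q^\star=f(q^\star)$, so that the configuration ``all nodes in $q^\star$'' is frozen under \emph{every} selection (each activated node re-enters $q^\star$). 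In the weak-fairness case this follows from the consistency condition applied to the schedule $V^\omega$: that schedule is weakly fair, its run on any fixed graph is exactly $(\text{all }q_0,\ \text{all }q_1,\ \text{all }q_2,\dots)$, so it is accepting or rejecting, and being halting it is then eventually constant, which pins the $f$-orbit at some $q^\star\in Y\cup N$. Consequently, for any graph $G$ the synchronous run $V^\omega$ reaches the frozen all-$q^\star$ configuration after finitely many steps and is a fair run, so $A$ accepts $G$ iff $q^\star\in Y$. For strong fairness the schedule $V^\omega$ is no longer permitted as a fair schedule (strong fairness is sensitive to the selection constraint, cf.\ Remark~\ref{rem:fairness}), so instead I would use a schedule $V^m\cdot\tau$ with $m$ chosen so that $f^m(q_0)=q^\star$ and $\tau$ an arbitrary strongly fair schedule of $G$: prepending finitely many selections preserves strong fairness, the run reaches the frozen all-$q^\star$ configuration within $m$ steps, and the verdict is again ``accept iff $q^\star\in Y$'' (alternatively one applies Lemma~\ref{lem:strongfairness} to the reachable finite sequence $(\text{all }q_0,\dots,\text{all }q^\star)$).

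The delicate point is the strong-fairness case, and it is the place I expect to need genuine care. One must guarantee that the run really does settle into the \emph{frozen} all-$q^\star$ configuration before the adversarial tail $\tau$ can act, which hinges on the synchronous orbit of $q_0$ reaching a true fixed point rather than merely entering a cycle. For weak fairness this is handed to us for free by consistency applied to $V^\omega$, but a strongly fair automaton need not be valid under weak fairness, so for those automata the existence of such a $q^\star$ (and hence the whole argument) requires a separate justification — either via Lemma~\ref{lem:strongfairness}, or via a direct fooling argument that joins two graphs hypothesised to receive different verdicts by a single edge and exploits that the endpoint of that edge cannot count its neighbours and so cannot notice the extra one. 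Everything else in the proof is routine bookkeeping.
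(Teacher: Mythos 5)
There is a genuine gap, and it is not the one you flag. Your entire argument lives on configurations in which all nodes share a common state: only then does non-counting plus synchrony give you the update function $f\colon Q\to Q$ and the orbit $q_0, f(q_0), f^2(q_0),\dots$. But the theorem is about arbitrary $\Lambda$-labeled graphs, and the initial configuration $\delta_0\circ\lambda$ is uniform only when all nodes of $G$ carry the same label. For a graph with mixed labels the synchronous run never passes through a uniform configuration, your $q^\star$ says nothing about it, and the consistency condition gives you no leverage. This is precisely where the content of the theorem sits: the language $\mathcal{B}$ of graphs containing at least one black node is recognized by a \Type{set}{stabilizing}{liberal}{weak}-automaton (Proposition~\ref{prop:ASf-non-trivial}), so the only thing separating halting from stabilizing acceptance here is the behaviour on non-uniformly labeled inputs. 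Relatedly, even restricted to uniformly labeled graphs your argument produces one fixed point $q^\star_a$ per label $a$ and gives no reason why the verdicts ``$q^\star_a\in Y$'' should agree across labels; ruling out an automaton that accepts all-$a$ graphs and rejects all-$b$ graphs already requires comparing two graphs that cannot both be reached by your orbit argument. (Your uniform-orbit reasoning is essentially the proof of Proposition~\ref{prop:ASf-trivial-unlabeled}, which is explicitly stated only for \emph{unlabeled} graphs and for stabilizing acceptance.)

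The paper's proof is the fooling construction you relegate to a parenthetical fallback, and it has to carry the whole weight, not just the strong-fairness corner case: assume $A$ accepts $G$ and rejects $H$, pick a time $t$ by which fixed fair runs on both have halted, and build a chain of $t$ copies of $G$ followed by $t$ copies of $H$, each consecutive pair joined by a single edge between copies of adjacent nodes. Non-counting ensures every node except the two at the central junction initially sees exactly the neighbourhood it would see in $G$ or $H$, so the perturbation originates only at the junction and propagates at most one copy per step; after $t$ steps the outermost copy of $G$ has halted accepting and the outermost copy of $H$ has halted rejecting, and halting freezes them there, contradicting consistency. Note that joining $G$ and $H$ ``by a single edge,'' as you suggest, does not suffice: the disturbance at the junction could reach every node before anyone halts, so the chain of $t$ copies (with $t$ chosen \emph{after} fixing the runs) is essential. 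Your weak-fairness observation about $V^\omega$ is correct as far as it goes, but it proves only the uniformly-labeled, single-label case of the theorem; the reduction the paper actually uses goes the other way (Lemma~\ref{lem:simple}(3) reduces everything to the strong case), and the strong-fairness existence of $q^\star$ that you leave open is never needed because the orbit argument is not the right tool here.
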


\begin{proof}[Proof sketch]
  By Statement~3 of Lemma~\ref{lem:simple},
  it suffices to prove the claim for
  \Type{set}{halting}{liberal}{strong}-automata.
  So let~$A$ be a \Type{set}{halting}{liberal}{strong}-automaton,
  and let~$G$ and~$H$ be two graphs
  (connected and with at least two nodes by convention).
  Assume that~$A$ accepts~$G$ but rejects~$H$.
  By the consistency condition,
  all fair runs of~$A$ on~$G$ accept,
  and all fair runs on~$H$ reject.
  Now let~$\Run^G$ and~$\Run^H$ be any such runs,
  and let $t \in \Naturals$ be a time at which
  all nodes in~$\Run^G$ and~$\Run^H$ have halted.
  We define a new graph~$K$ that consists of
  $t$ copies
  $\Set{G_i}_{i \in \Range[1]{t}}$
  and
  $\Set{H_i}_{i \in \Range[1]{t}}$
  of~$G$ and~$H$,
  with additional edges defined as follows.
  For each node~$w^X$ of the original graph $X \in \Set{G, H}$,
  we denote its copy in~$X_i$ by~$w^X_i$,
  where $i \in \Range[1]{t}$.
  Let~$u^G$ and~$v^G$ be two \emph{adjacent} nodes of~$G$,
  and~$u^H$ and~$v^H$ be two \emph{adjacent} nodes of~$H$.
  We add the connecting edges
  $\Set{u^X_i, v^X_{i+1}}$
  for all $i \in \Range[1]{t}*$ and $X \in \Set{G, H}$,
  as well as the edge
  $\Set{u^G_t, u^H_t}$.
  This is illustrated in Figure~\ref{fig:chain-construction}.

  \begin{figure}[htb]
    \centering
    \begin{tikzpicture}[semithick,auto,on grid]
  \tikzstyle{vertex}=[draw,circle,inner sep=0ex,minimum size=4ex]
  \tikzstyle{vertexG}=[vertex,fill=black!20]
  \tikzstyle{vertexH}=[vertex,fill=white]
  \tikzstyle{phantom}=[]
  \tikzstyle{graph}=[draw,semithick,inner ysep=-0.5ex,
                     cloud,cloud ignores aspect,cloud puff arc=80]
  \tikzstyle{graphG}=[graph,label={[xshift=-2ex]center:#1},
                      cloud puffs=10]
  \tikzstyle{graphH}=[graph,label={[xshift=+2ex]center:#1},fill=black!20,
                      cloud puffs=14]
  \def\dv{7ex}
  \def\dh{12.5ex}

  \node[vertexG] (uG1)                    {$u^G_1$};
  \node[vertexG] (vG1) [below=\dv of uG1] {$v^G_1$};
  \node[vertexG] (uG2) [right=\dh of uG1] {$u^G_2$};
  \node[vertexG] (vG2) [below=\dv of uG2] {$v^G_2$};
  \node[phantom] (pG)  [below right=0.5*\dv and 0.75*\dh of uG2] {$\dots$};
  \node[vertexG] (uGt) [right=1.5*\dh of uG2] {$u^G_t$};
  \node[vertexG] (vGt) [below=\dv of uGt] {$v^G_t$};
  \node[vertexH] (uHt) [right=\dh of uGt] {$u^H_t$};
  \node[vertexH] (vHt) [below=\dv of uHt] {$v^H_t$};
  \node[phantom] (pH)  [below right=0.5*\dv and 0.75*\dh of uHt] {$\dots$};
  \node[vertexH] (uH2) [right=1.5*\dh of uHt] {$u^H_2$};
  \node[vertexH] (vH2) [below=\dv of uH2] {$v^H_2$};
  \node[vertexH] (uH1) [right=\dh of uH2] {$u^H_1$};
  \node[vertexH] (vH1) [below=\dv of uH1] {$v^H_1$};
  \begin{scope}[on background layer]
    \node[graphG=$G_1$,fit=(uG1) (vG1)] (G1) {};
    \node[graphG=$G_2$,fit=(uG2) (vG2)] (G2) {};
    \node[graphG=$G_t$,fit=(uGt) (vGt)] (Gt) {};
    \node[graphH=$H_t$,fit=(uHt) (vHt)] (Ht) {};
    \node[graphH=$H_2$,fit=(uH2) (vH2)] (H2) {};
    \node[graphH=$H_1$,fit=(uH1) (vH1)] (H1) {};
  \end{scope}

  \draw
    (uG1) edge (vG1)
          edge (vG2)
    (uG2) edge (vG2)
          edge (pG)
    (pG)  edge (vGt)
    (uGt) edge (vGt)
          edge (uHt)
    (uHt) edge (vHt)
    (pH)  edge (vHt)
    (uH2) edge (vH2)
          edge (pH)
    (uH1) edge (vH1)
          edge (vH2)
    ;
\end{tikzpicture}

    \caption{
      Graph~$K$ used in the proof of Theorem~\ref{thm:trivial}.
    }
    \label{fig:chain-construction}
  \end{figure}

  We show that there is a fair run~$\Run$ of~$A$ on~$K$
  that neither accepts nor rejects.
  It follows that $A$ does not satisfy the consistency condition,
  contradicting the hypothesis.
  Since $A$ is a non-counting automaton,
  initially every node~$w^X_i$ except for~$u^G_t$ and~$u^H_t$
  “sees” the same neighborhood as
  the corresponding node~$w^X$ in the original graph~$X$.
  Only the two nodes~$u^G_t$ and~$u^H_t$ may have a different neighborhoods
  than~$u^G$ and~$u^H$,
  and this might affect their behavior starting at time~$1$.
  Their different behavior can be propagated to other nodes
  in subsequent rounds,
  but it takes time before it reaches every node.
  We exploit this to construct~$\Run$ in such a way that
  some nodes of~$K$ (those of~$G_1$) reach an accepting state,
  while others (those of~$H_1$) reach a rejecting state.
  Since $A$ is a halting automaton,
  these nodes will never change their state again,
  and so the run is neither accepting nor rejecting.
\end{proof}


\section{Synchronicity can always be simulated}
\label{sec:synchro}
We show that every class with synchronous selection is equivalent to the corresponding class with liberal selection.
Albeit non-trivial, this is easy to prove by a standard technique of distributed computing known as \emph{alpha synchronizer}. (The term was introduced in~\cite{Awerbuch85}, but a similar idea appeared earlier in cellular automata theory~\cite{Nakamura81}.) Given a machine $M=\Tuple{Q, \delta_0, \delta, Y, N}$, we define a machine $\tilde{M}=\Tuple{\tilde{Q}, \tilde{\delta_0}, \tilde{\delta}, \tilde{Y}, \tilde{N}}$ such that for every graph~$G$, the unique synchronous run of $M$ on $G$ accepts (rejects) if{}f every weakly fair run $\Run$ of $\tilde{M}$ on $G$ accepts (rejects). The gadget achieving this is called a ``synchronizer'', because it ensures that the nodes of $G$ behave ``as in the synchronous case'', even when selection is~liberal. 

The set of states of $\tilde{M}$ is $\tilde{Q} := Q \times Q \times \{0,1,2\}$. Given $(q, q', i) \in \tilde{Q}$, we call $q$ the \emph{past $M$-state}, $q'$ the \emph{current $M$-state}, and $i$ the \emph{phase}. The initialization function is given by $\tilde{\delta}_0 (a) := (\delta_0(a), \delta_0(a), 0)$. In order to define the transition function $\tilde{\delta}$, let $v$ be a node in state $(q, q', i)$. If $v$ is selected by the scheduler, its next state is determined as follows: 
\begin{itemize}
\item If at least one neighbor of $v$ is in phase $(i-1) \bmod 3$, then $v$ does not change state.\\
Intuitively, if some neighbor is still one phase behind, then $v$ waits for it to ``catch up''.
\item If every neighbor of $v$ is in phase $i$ or $(i+1) \bmod 3$, then $v$  moves to $(q', q'', (i+1) \bmod 3)$, where $q''$ is defined as follows. Let $N_v$ be the set of neighbors of $v$, and for each $u \in N_v$, let $(q_u, q_u', i_u)$ be the state of $u$. Further, let $q_u'' \DefEq q'_u$ if $i_u = i$, and $q_u'' \DefEq q_u$  if $i_u = (i+1) \bmod 3$, and let
${\cal M}$ be the multiset over $Q$ containing for each $u \in N_v$ a copy of the state $q''_u$. (Loosely speaking, ${\cal M}$ contains the current $M$-states of the neighbors of $v$ that are in the same phase as $v$, and the past $M$-states of the neighbors that are one phase ahead, i.e., the states they had when they were in the same phase as $v$). Let ${\cal M}_\beta$ be given by ${\cal M}_\beta(q) = \min \{ \beta, {\cal M}_\beta(q) \}$. We define $q'' := \delta(q, {\cal M}_\beta)$; loosely speaking, $v$ moves to the state it would move to in $M$ if all its neighbors were in the same phase.
\end{itemize}
Let $\tilde{\Run}$ be any weakly-fair run of $\tilde{M}$ on a graph $G$. Fix a node $v$ of $G$, and extract from $\tilde{\Run}$ the sequence $q'_{10} q'_{11} q'_{12} \, q'_{20} q'_{21} q'_{22} \ldots q'_{i0} q'_{i1} q'_{i2} \ldots$, where $q'_{ij}$ denotes the current $M$-state of $v$ immediately after entering phase $j$ for the $i$-th time. Now, let $\Run$ be the unique synchronous run of $M$ on $G$, and let $q_0' q'_1 q'_2 \ldots$ be the sequence obtained by projecting $\Run$ onto the states of~$v$. It is easy to see that these two sequences coincide.  By the definition of stable acceptance, $\tilde{\Run}$ accepts if{}f $\Run$ accepts, and rejects if{}f $\Run$ rejects.
Using this construction, we obtain:

\begin{restatable}{theorem}{ThmSynchronizer}
  \label{lem:synchronizer}
  For every \Type{*}{*}{synchronous}{*}-automaton there is an equivalent \Type{*}{*}{liberal}{*}-automaton.
\end{restatable}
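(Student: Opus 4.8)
Let $A = (M, \Scheduler)$ be a \Type{*}{*}{synchronous}{*}-automaton, and write $w$, $x$, $z$ for its detection, acceptance and fairness parameters. The plan is to take $\tilde{A} \DefEq (\tilde{M}, \tilde{\Scheduler})$, where $\tilde{M}$ is the synchronizer machine constructed above and $\tilde{\Scheduler}$ is the liberal scheduler with fairness constraint~$z$. The core fact is already supplied by the construction: for every graph~$G$ and every \emph{weakly-fair} run of $\tilde{M}$ on~$G$, each node passes, when sampled at its successive phase boundaries, through the very same sequence of current $M$-states as in the unique synchronous run of~$M$ on~$G$; hence such a run accepts iff the synchronous run accepts, and rejects iff it rejects. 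What remains is to check that $w$, $x$, $z$ are preserved, and to derive the consistency of~$\tilde{A}$ together with $L(\tilde{A}) = L(A)$.

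For \emph{detection}, note that $\tilde{\delta}$ inspects a neighbourhood only through the test ``does some neighbour lag one phase behind?'', which needs mere existence detection, and through the $\beta$-truncated multiset $\mathcal{M}_\beta$ of neighbours' $M$-states, which is already capped at~$\beta$; so $\tilde{M}$ retains the counting bound of~$M$, and is non-counting whenever $M$ is. For \emph{fairness}, I attach to $\tilde{M}$ the liberal scheduler carrying exactly the constraint~$z$; this is sound because every strongly-fair liberal schedule is in particular weakly fair (strong fairness forces each singleton $\Set{v}$ to be selected infinitely often), so the correctness fact above, which is stated for weakly-fair runs, covers \emph{all} fair runs of~$\tilde{A}$, whether $z$ is weak or strong. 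For \emph{acceptance} with $x = \AcceptanceType{stabilizing}$, put $\tilde{Y} \DefEq Q \times Y \times \Set{0,1,2}$ and $\tilde{N} \DefEq Q \times N \times \Set{0,1,2}$, so a node is accepting iff its current $M$-state is; since a weakly-fair run makes every node advance its phase infinitely often, each node's current $M$-state runs through the synchronous projection $q'_0, q'_1, q'_2, \dots$ with only finite stuttering, which makes ``$\tilde{\Run}$ eventually accepting'' equivalent to ``the synchronous run eventually accepting'', and symmetrically for rejection.

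The case $x = \AcceptanceType{halting}$ is the delicate one. Here $\tilde{M}$ as given is \emph{not} halting --- a node in $\tilde{Y} \cup \tilde{N}$ keeps incrementing its phase --- and one cannot simply freeze such a node, since a frozen node would break the ``adjacent phases differ by at most one'' invariant on which the synchronizer relies. I would instead adjoin genuine sink states $\mathrm{done}_q$, one for each $q \in Y \cup N$, declare these (and only these) accepting or rejecting according to whether $q \in Y$ or $q \in N$, and arrange that a node whose current $M$-state enters $Y \cup N$ keeps running the ordinary synchronizer --- its $M$-state is then constant, because $M$ is halting --- until it is selected at a moment when \emph{no} neighbour lags behind it, whereupon instead of advancing it collapses to the corresponding $\mathrm{done}_q$; a $\mathrm{done}$-node is transparent to its neighbours' phase tests and simply contributes its final $M$-state to their multisets. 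Correctness survives because, $M$ being halting, a node that has collapsed has had a constant $M$-state since a phase count that all its neighbours had already reached by the time it collapsed, so no lagging neighbour ever needs an out-of-date value; and because $A$ is consistent, the synchronous run is eventually all-accepting or all-rejecting, which forces every fair run of $\tilde{M}$ to collapse all nodes to $\mathrm{done}$-states of the matching kind. Working out these invariants --- in particular that a node never collapses before its neighbours have caught up, and that the multiset it hands out always equals the corresponding synchronous configuration --- is the step I expect to be the main obstacle.

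Finally, assembling the pieces: for every graph~$G$, every fair run of $\tilde{M}$ on~$G$ accepts iff the synchronous run of~$M$ on~$G$ accepts and rejects iff it rejects; that synchronous run is the unique fair run of~$A$ on~$G$, so by the consistency of~$A$ it is either accepting or rejecting, whence all fair runs of $\tilde{M}$ on~$G$ share this outcome. Thus $\tilde{A}$ meets the consistency condition, recognises exactly $L(A)$, and is a \Type{*}{*}{liberal}{*}-automaton with the same detection, acceptance and fairness parameters as~$A$.
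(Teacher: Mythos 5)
Your proposal takes the same route as the paper: its proof of this theorem simply reuses the synchronizer machine $\tilde{M}$ from Section~\ref{sec:synchro}, attaches a liberal scheduler carrying the original fairness constraint, and invokes the fact that every at-least-weakly-fair run of $\tilde{M}$ reproduces, at phase boundaries, the state sequence of the unique synchronous run of $M$; consistency and $L(A)=L(\tilde{A})$ then follow exactly as in your closing paragraph. Your explicit checks of detection (the capped multiset $\mathcal{M}_\beta$ is computable from the $\beta$-bounded view because $\min\{\beta,\sum_j n_j\}=\min\{\beta,\sum_j\min\{\beta,n_j\}\}$) and of fairness (strongly fair liberal schedules are in particular weakly fair) are correct and are left implicit in the paper. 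Where you genuinely go beyond it is the halting case: you are right that $\tilde{M}$ as defined is not a halting machine---a node whose current $M$-state lies in $Y\cup N$ still changes its phase and past-state components---and the paper's proof does not address this, even though the wildcard convention requires the target automaton to be halting whenever the source is. Your repair (keep such a node cycling until it is selected with no lagging neighbour, then collapse it to a sink $\mathrm{done}_q$ that is transparent to neighbours' phase tests and permanently exports $q$) is sound: because $M$ is halting, $q$ equals the node's synchronous state at every round a neighbour may still query after the collapse, and the usual synchronizer progress argument under weak fairness guarantees that each node is eventually selected with no lagging neighbour, so all nodes collapse and the run halts in the correct consensus. In short, this is the paper's construction plus a correct patch for a detail its own proof glosses over.
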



\section{Exclusivity does not increase expressiveness}
\label{sec:exclusive}
In this section,
we obtain the rather surprising result that
the computational power of a class of automata does not increase
if we restrict its schedulers to interleaving ones
(which guarantees that agents act in mutual exclusion with all other agents).

\subsection{Exclusivity under strong fairness}
\label{subsec:ex-strong}

We start by considering strongly fair models,
i.e., we compare a class of the form
\Type{*}{*}{liberal}{strong}
with the corresponding class \Type{*}{*}{exclusive}{strong}.
On an intuitive level,
their equivalence
might be less surprising than the subsequent result
presented in Section~\ref{ssec:exclusivity-under-weak-fairness}
because strong fairness provides a way to break symmetry,
which can be exploited to simulate exclusivity.
Nevertheless,
neither class trivially subsumes the other,
so we have to prove inclusions in both directions.

\begin{restatable}{theorem}{ThmSimulateLiberalFairnessWithExclusiveFairness}
  \label{thm:simulate-liberal-fairness-with-exclusive-fairness}
  For every \Type{*}{*}{liberal}{strong}-automaton
  there is an equivalent \Type{*}{*}{exclusive}{strong}-automaton.
\end{restatable}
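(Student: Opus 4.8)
The plan is to simulate a liberal, strongly fair automaton $A = (M, \Scheduler)$ by an exclusive, strongly fair automaton $\tilde A$. The key observation is that in a liberal schedule, a "batch" of nodes fires simultaneously, and we must reproduce the effect of such a batch using only single-node moves. The difficulty is that if we let the nodes of a batch fire one after another under the exclusive scheduler, later-firing nodes in the batch would see the already-updated states of earlier-firing ones, which is the wrong behavior. The standard remedy is the same double-buffering trick used in the synchronizer of Section~\ref{sec:synchro}: each node of $\tilde M$ stores a \emph{past $M$-state} and a \emph{current $M$-state}, together with a local flag indicating whether it has "committed" its current step. When a node is selected, it reads the \emph{past} $M$-states of the neighbors that are still in the batch and the \emph{current} $M$-states of neighbors that have already left it, computes $\delta$ on that $\beta$-bounded multiset, and writes the result into its current slot while marking itself committed; once all neighbors are committed it may copy current into past and reopen. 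This way, a maximal run of consecutive single-node selections that together cover some set $S$ has exactly the effect on the $M$-states that the liberal selection $S$ would have in $A$.

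First I would make the simulation robust to the adversarial interleaving forced by the exclusive scheduler: because we cannot control \emph{which} single node fires next, we must allow batches to be built up incrementally and to overlap with neighboring batches in a controlled way, exactly as the three-phase counter $\{0,1,2\}$ does in the synchronizer. Concretely, I would reuse the construction of $\tilde M$ from Section~\ref{sec:synchro} almost verbatim, since it already has the property that \emph{every} weakly fair liberal run — in particular every exclusive run, which is a special liberal run — yields, for each node $v$, the same projected sequence of $M$-states as the unique synchronous run of $M$. But here the source automaton is liberal, not synchronous, so instead I would design $\tilde M$ so that its exclusive runs reproduce the \emph{liberal} behavior of $M$: the scheduler of $A$ is a liberal strongly fair one, and by strong fairness of the target I can appeal to Lemma~\ref{lem:strongfairness} to argue that any finite sequence of single-node selections that assembles a desired liberal batch occurs infinitely often, so every liberal schedule of $A$ is "realized" by some strongly fair exclusive schedule of $\tilde A$, and conversely every strongly fair exclusive run of $\tilde A$ projects onto a weakly fair liberal run of $A$.

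The two inclusions then go as follows. For $\LanguageClass(\Type{*}{*}{liberal}{strong}) \subseteq \LanguageClass(\Type{*}{*}{exclusive}{strong})$: given $A$, build $\tilde A = (\tilde M, \tilde\Scheduler)$ with $\tilde\Scheduler$ the exclusive strongly fair scheduler; show that the $M$-state projection of any strongly fair exclusive run of $\tilde M$ is a fair (indeed weakly fair) liberal run of $M$, hence accepts iff $A$ accepts $G$; and verify the consistency condition for $\tilde A$ transfers from that of $A$. For the reverse inclusion I would observe that an exclusive scheduler is in particular liberal (its permitted selections are a subset of $2^V$), and that strong fairness with respect to the exclusive selection constraint implies the needed recurrence of singleton selections; combined with Lemma~\ref{lem:strongfairness} this lets a liberal automaton mimic an exclusive one directly, essentially by ignoring selections of size $\neq 1$ — or more carefully, by a small gadget that makes multi-node selections harmless (e.g.\ a node only acts when it can tell, via the phase flags, that it is effectively alone), so that the liberal run simulates an exclusive one. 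The main obstacle I expect is the second, more delicate direction and the bookkeeping in the consistency-condition argument: one must show that \emph{every} strongly fair exclusive schedule of $\tilde A$ — not just the "nice" ones that cleanly assemble batches — yields a projected liberal run of $A$ that is genuinely fair (every node moves infinitely often) and therefore inherits $A$'s verdict; degenerate interleavings where a node is perpetually "waiting" for a lagging neighbor must be ruled out using weak fairness of the underlying schedule plus the phase-counter invariant, exactly as in the synchronizer analysis.
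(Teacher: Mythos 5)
Your overall plan---adapt the synchronizer of Section~\ref{sec:synchro}, keep a past and a current $M$-state plus a phase counter modulo~$3$, and use Lemma~\ref{lem:strongfairness} to argue that the exclusive scheduler assembles arbitrary batches---is exactly the paper's approach. But there is a genuine gap at the load-bearing step. You claim that every strongly fair exclusive run of $\tilde A$ ``projects onto a weakly fair liberal run of $A$, hence accepts iff $A$ accepts $G$.'' Weak fairness is not enough: $A$ is a \Type{*}{*}{liberal}{strong}-automaton, so its consistency condition constrains only its \emph{strongly} fair runs; a weakly fair liberal run of $M$ that is not strongly fair carries no guarantee about its verdict, and the inference ``hence accepts iff $A$ accepts $G$'' does not follow. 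What must be shown---and what the paper shows---is that the projected run is \emph{strongly} fair with respect to the liberal selection constraint, i.e.\ that every finite sequence $(\Selection_1,\dots,\Selection_n)$ of subsets of $V$ occurs infinitely often as a sequence of simulated batches. This is where Lemma~\ref{lem:strongfairness} is actually applied: for each such sequence one exhibits a finite sequence of singleton selections of $\tilde A$ that realizes it from any recurring configuration (first complete the current phase, then raise the activity flags of exactly the nodes of $\Selection_1$, let them fire, advance the rest, and so on), and strong fairness of the \emph{exclusive} scheduler forces that singleton sequence to recur. You have the ingredients but draw the weaker conclusion, and the weaker conclusion does not close the proof.

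Two smaller points. First, your ``committed'' flag records that a node has already fired, but the construction also needs a mechanism for a node to \emph{decline} to fire in a given phase and merely advance its counter; without it every node transitions in every phase and the projected run is the synchronous run $V^\omega$, which by Remark~\ref{rem:fairness} is \emph{not} strongly fair under liberal selection. The paper uses a separate activity flag for this: a node raises it only when selected while no neighbor is in the next phase, and a node selected with the flag down and a neighbor already ahead skips the phase. Second, the theorem as stated requires only the single construction from liberal-strong to exclusive-strong; the reverse simulation you sketch at the end is a different statement (Theorem~\ref{thm:simulate-exclusivity-with-fairness}) proved by a different and simpler gadget, so the effort you anticipate spending on that ``more delicate direction'' is not needed here.
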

\begin{proof}[Proof sketch]
  Given a \Type{*}{*}{liberal}{strong}-automaton~$A$,
  we construct a \Type{*}{*}{exclusive}{strong}-automaton~$B$ such that
  for all input graphs~$G$,
  every strongly fair run of~$B$ on~$G$
  simulates a strongly fair run of~$A$ on~$G$.
  The difficulty lies in the fact that
  $A$ and~$B$ do not share the same notion of strong fairness
  because they have different selection constraints.
  While $A$'s liberal scheduler guarantees that
  arbitrary sequences of selections will occur infinitely often,
  $B$'s exclusive scheduler can select only one node at a time.
  To simulate $A$'s behavior with~$B$,
  we adapt the synchronizer
  from Section~\ref{sec:synchro}.
  Just like there, nodes keep track of
  their previous and current state in~$A$,
  as well as the current phase number modulo~$3$.
  However,  instead of updating their state in every phase,
  they only do so if an additional \emph{activity flag} is set.
  Thus,
  we can simulate an arbitrary selection~$\Selection$
  by raising the flags
  of exactly those nodes that lie in~$\Selection$.
  The outcome of a phase simulated in this way
  will be the same as if
  all the nodes in~$\Selection$ made a transition simultaneously.
  The main issue is how to set the activity flags in each phase
  in such a way that
  every finite sequence
  $(\Selection_1, \ldots, \Selection_n)$ of selections
  is guaranteed to occur infinitely often.
  We show that this is possible,
  exploiting the fact that $B$'s scheduler is strongly fair.
\end{proof}

\begin{restatable}{theorem}{ThmSimulateExclusivityWithFairness}
  \label{thm:simulate-exclusivity-with-fairness}
  For every \Type{*}{*}{exclusive}{strong}-automaton
  there is an equivalent \Type{*}{*}{liberal}{strong}-automaton.
\end{restatable}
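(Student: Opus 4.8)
The plan is to establish the reverse direction of Theorem~\ref{thm:simulate-liberal-fairness-with-exclusive-fairness}. Given an \Type{*}{*}{exclusive}{strong}-automaton $A$ whose machine is $\Tuple{Q,\delta_0,\delta,Y,N}$, with counting bound $\beta$, I would construct a \Type{*}{*}{liberal}{strong}-automaton $B$ so that, on every graph $G$, each strongly fair liberal run of $B$ \emph{simulates} a strongly fair \emph{exclusive} run of $A$ on $G$ with the same fate (accepting, resp.\ rejecting). Since $A$ is consistent, all its strongly fair exclusive runs on $G$ have the same fate, hence so do all strongly fair liberal runs of $B$; this simultaneously yields the consistency of $B$ and $L(B)=L(A)$.

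The machine of $B$ stores in each node its current $A$-state together with one \emph{flag} bit, initialised to $0$; thus $Q_B=Q\times\{0,1\}$ and $\tilde\delta_0(a)=(\delta_0(a),0)$. When the liberal scheduler activates a node $v$ holding $(q,f)$, it behaves as follows. If $f=0$, then $v$ moves to $(q,1)$. If $f=1$ and at least one neighbour of $v$ has flag $1$, then $v$ moves back to $(q,0)$. If $f=1$ and no neighbour of $v$ has flag $1$, then $v$ performs an \emph{$A$-move}: it moves to $(\delta(q,\mu),0)$, where $\mu$ is the $\beta$-bounded multiset of $A$-states of $v$'s neighbours. Both the predicate ``some neighbour has flag $1$'' and the multiset $\mu$ are computable from $v$'s $\beta$-bounded view of its neighbours' $B$-states (when all neighbours have flag $0$, the $A$-state count and the $B$-state count agree), so $B$ can be taken non-counting whenever $A$ is, keeping the same detection class. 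A $B$-node is accepting (rejecting) iff its $A$-component is accepting (rejecting) in $A$; if $A$ is halting I would additionally force the flag to $0$ and then make such nodes self-looping, so that $B$ is halting as well — this case needs a little extra bookkeeping, which I omit from the sketch.

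The first key step is that the projection of a liberal run of $B$ onto the $A$-states is a valid exclusive run of $A$. In a single step, the set $C$ of nodes performing an $A$-move is \emph{independent}: a moving node has flag $1$, whereas each of its neighbours must have flag $0$, so two adjacent nodes cannot both move; for the same reason, no neighbour of a moving node changes its $A$-state in that step, so each $v\in C$ updates exactly as it would in the configuration present at the start of the step. Hence, at every step, the $A$-projection changes by applying an independent set of $A$-transitions, which has the same effect as applying them one at a time in any order. Refining each step of a run $\tilde R$ of $B$ into the corresponding (possibly empty) block of single $A$-moves, and recording which node moves, I extract a sequence $R_A$ of $A$-configurations together with an exclusive schedule $\sigma_A$ generating it; since it starts from $A$'s initial configuration, $R_A$ is a bona fide exclusive run of $A$ on $G$ as soon as $\sigma_A$ is infinite.

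The main obstacle is to show that $\sigma_A$ is infinite and, in fact, \emph{strongly} fair. The naive idea — apply a fixed ``reset'' selection sequence that zeroes all flags and then drive an arbitrary prefix of an exclusive run — fails, because every activation toggles a node's flag, so there is no \emph{uniform} reset sequence. Instead I would use that $B$'s configuration space over $G$ is finite, so some configuration $C^*$ occurs infinitely often in the strongly fair liberal run $\tilde R$. From $C^*$, activating its flag-$1$ nodes one at a time (a sequence that may depend on $C^*$, which is harmless) reaches an all-flags-$0$ configuration, and from an all-flags-$0$ configuration the sequence $\{v_1\},\{v_1\},\{v_2\},\{v_2\},\dots,\{v_n\},\{v_n\}$ produces exactly the $A$-moves of $v_1,\dots,v_n$, in that order, with no other $A$-move in between, returning to all flags $0$ after each pair. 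Thus, for every finite exclusive selection sequence $\pi=(\{v_1\},\dots,\{v_n\})$, there is a finite sequence of $B$-configurations starting at $C^*$, consecutive ones linked by permitted (singleton) liberal selections, whose last block realises $\pi$ contiguously inside $R_A$. By Lemma~\ref{lem:strongfairness}, since $C^*$ recurs infinitely often in $\tilde R$, so does this whole configuration sequence; hence $\pi$ occurs infinitely often in $\sigma_A$. Taking $\pi=(\{v\})$ shows that infinitely many $A$-moves occur, so $\sigma_A$ is infinite, and the general statement says $\sigma_A$ is strongly fair. Finally, a node of $\tilde R$ is eventually always accepting iff its $A$-component is, so $\tilde R$ is accepting iff $R_A$ is (and likewise for rejection), which closes the simulation and completes the proof.
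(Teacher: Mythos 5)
Your proposal is correct and follows essentially the same route as the paper: the key observation that an independent set of simultaneous transitions commutes with sequential execution, a two-step request/commit protocol in which a node aborts if it detects a conflicting neighbor, and Lemma~\ref{lem:strongfairness} to guarantee progress (the paper stores the precomputed target in an intermediate state of $Q \times Q$ where you use a flag bit and compute the transition at commit time, but this is cosmetic). If anything, your argument that the \emph{extracted} exclusive schedule is itself strongly fair (via a recurrent configuration and a flush-then-replay block) is spelled out more carefully than in the paper, which only argues that each node can move infinitely often.
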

\begin{proof}[Proof sketch]
  First,
  we note that the only way exclusivity could possibly be useful
  is to break symmetry between adjacent nodes.
  This is because for an independent set
  (i.e., a set of pairwise non-adjacent nodes),
  the order of activation is irrelevant:
  whether the scheduler activates them
  all at once or one by one in some arbitrary order,
  the outcome will always be the same.
  Consequently, to simulate a run with exclusivity,
  it suffices to simulate a run
  where no two adjacent nodes are active at the same time.
  We provide a simple protocol that makes use of the strong fairness constraint
  (in an environment with liberal selection)
  to ensure that if a node wants to execute a transition,
  then it will eventually be able to do so
  while all its neighbors remain passive.
\end{proof}

\subsection{Exclusivity under weak fairness}
\label{ssec:exclusivity-under-weak-fairness}

We now show that even in the absence of strong fairness,
the restriction to interleaving schedulers
does not increase expressive power.
At first sight,
this may be quite surprising
because exclusivity inherently breaks symmetry,
whereas an automaton with liberal selection and weak fairness
can always be assumed to run synchronously
and thus be incapable of breaking symmetry.
In fact, it is easy to come up with examples of automata
that exploit exclusivity to ensure termination.

\begin{restatable}{proposition}{RemNonhaltingWithoutExclusivity}
  \label{rem:nonhalting-without-exclusivity}
  For every \Type{*}{*}{liberal}{weak}-automaton,
  there exists a \Type{*}{*}{exclusive}{weak}-automaton
  that recognizes the same graph language
  but makes use of exclusive selection to ensure termination.
  If run synchronously,
  it never terminates
  (and hence it is not a valid \Type{*}{*}{liberal}{weak}-automaton).
\end{restatable}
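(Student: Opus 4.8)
The language inclusion itself is immediate: the standard proof of Statement~4 of Lemma~\ref{lem:simple} gives, for every \Type{*}{*}{liberal}{weak}-automaton, an equivalent \Type{*}{*}{exclusive}{weak}-automaton obtained by keeping the \emph{same} machine and only swapping the scheduler, since an exclusive weakly-fair schedule is in particular a weakly-fair schedule for the liberal selection constraint, so the runs are unchanged and we merely look at fewer of them. Hence the real task is to rebuild this automaton so that it visibly relies on exclusivity. The plan is to take the given \Type{*}{*}{liberal}{weak}-automaton $A=(M,\Scheduler)$ with $M=\Tuple{Q,\delta_0,\delta,Y,N}$ and attach to every node a tiny gadget that oscillates forever when the graph is run synchronously but is forced to settle down under any exclusive weakly-fair schedule; acceptance in the new automaton $B$ is then made conditional on the gadget having settled at every node.

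Concretely, I would give $B$'s machine the state set $Q\times\{0,1\}\times\{\mathrm{no},\mathrm{yes}\}$ and the same counting bound $\beta$ as $M$; write a state as $(q,b,f)$. Initialisation is $(\delta_0(a),0,\mathrm{no})$. When a node in state $(q,b,f)$ is activated, its $M$-component is updated exactly as $\delta$ prescribes --- the $\beta$-bounded multiset of the neighbours' $M$-states is recoverable from the $\beta$-bounded multiset of their full states --- so the $M$-projection of any run of $B$ is literally a run of $M$ under the same schedule. The gadget is updated as follows: if $f=\mathrm{yes}$, leave $b,f$ unchanged; if $f=\mathrm{no}$ and some neighbour carries a bit $\neq b$, set $f:=\mathrm{yes}$ (keeping $b$); if $f=\mathrm{no}$ and all neighbours carry the bit $b$, flip $b$ (keeping $f=\mathrm{no}$). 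The accepting and rejecting sets of $B$ are $\{(q,b,\mathrm{yes}) : q\in Y\}$ and $\{(q,b,\mathrm{yes}) : q\in N\}$; observe that if $M$ is halting then so is $B$, because these are exactly the states with $q\in Y\cup N$ and $f=\mathrm{yes}$, and the transition function is the identity on all of them.

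The two behavioural claims should then be short. When $B$ is run synchronously, all nodes carry the same bit at every step, so each node always sees only agreeing neighbours and flips: the bit oscillates $0,1,0,1,\dots$ globally, no flag is ever set, and therefore no node ever enters an accepting or rejecting state. Thus the synchronous run of $B$ neither accepts nor rejects (and, if $M$ is halting, never halts); since the synchronous schedule is weakly fair for the liberal selection constraint, this shows $B$ is not a valid \Type{*}{*}{liberal}{weak}-automaton. Under an exclusive weakly-fair schedule, by contrast, I claim every fair run eventually reaches $f=\mathrm{yes}$ at every node, whereupon acceptance or rejection is decided purely by whether the embedded run of $A$ has converged --- which simultaneously gives $L(B)=L(A)$, the consistency of $B$, and the statement that $B$ uses exclusivity to ensure termination. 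The key monovariant is the number of monochromatic edges (endpoints with equal bit): it starts at $\Card{E}$, never increases, and drops by $\deg(v)\ge 1$ whenever a node $v$ flips, since a flip occurs only when $v$ agrees with all its neighbours, so all edges at $v$ were monochromatic and all become bichromatic. Hence only finitely many flips ever occur; past the last flip, a node with $f=\mathrm{no}$ can never again agree with all its neighbours (else it would be forced to flip at its next activation), so by weak fairness it eventually observes a disagreeing neighbour and sets $f:=\mathrm{yes}$.

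The point I expect to need the most care is obtaining a single construction that covers \emph{every} acceptance mode at once. For a stabilising $A$ a bare globally-oscillating bit would already sabotage the synchronous run and nothing more would be required; but to also handle halting $A$ --- where $B$ must itself be halting yet must fail to halt synchronously --- one needs the monotone flag $f$ together with the observation that ``$v$ disagrees with some neighbour'' is a permanent condition (neither $v$ nor that neighbour can ever flip again), which is exactly what keeps $f$ from being set prematurely and hence preserves the halting property. The remaining checks --- the bounded-multiset bookkeeping and the faithfulness of the embedded $A$-simulation, which rests only on the inclusion ``exclusive weakly-fair $\subseteq$ liberal weakly-fair'' --- are routine.
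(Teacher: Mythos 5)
Your construction is essentially the paper's: an oscillating parity gadget that cycles forever under synchronous execution but is forced to desynchronize (and hence settle) under any exclusive weakly fair schedule, run in product with the original machine and with acceptance gated on the gadget having settled. The paper packages this as a standalone three-state gadget automaton (states $p$, $q$, $h$) followed by a generic product construction and merely asserts the gadget's convergence under exclusivity, whereas you integrate the two and supply the missing convergence argument via the monochromatic-edge monovariant together with an explicit check that the halting property is preserved --- a more detailed write-up of the same route rather than a different one.
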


However,
although the automata described in
Proposition~\ref{rem:nonhalting-without-exclusivity}
make use of exclusivity,
they do not really benefit from it;
they only recognize languages
that can also be recognized by liberal automata.
As we will see in
Theorem~\ref{thm:simulate-exclusive-weak-with-liberal-weak},
this observation can be generalized
to arbitrary \Type{*}{*}{exclusive}{weak}-automata.
Intuitively,
since exclusivity does not add any expressive power,
it can in a certain sense be simulated
without needing to break symmetry.

The proof of
Theorem~\ref{thm:simulate-exclusive-weak-with-liberal-weak}
is based on the notion of Kronecker cover.
The \intro{Kronecker cover}
(also known as \intro{bipartite double cover})
of a graph
$G = \Tuple{V, E, \lambda}$
is the bipartite graph
$G' = \Tuple{V', E', \lambda'}$
where
$V' = V \times \Set{0, 1}$, $ E' = \bigcup_{\Set{u,v} \in E}
  \Set{ \,
    \Set{\Tuple{u,0}, \Tuple{v,1}}, \,
    \Set{\Tuple{u,1}, \Tuple{v,0}} \,
  }$,
and
$\lambda'(\Tuple{v, i}) = \lambda(v)$
for all $\Tuple{v, i} \in V'$.
An example is provided in Figure~\ref{fig:butterfly-unfolding}.

\begin{figure}[h]
  \centering
  \begin{subfigure}[b]{0.3\textwidth}
    \begin{tikzpicture}[
    semithick,
    nodes={draw,circle,
           inner sep=0ex,outer sep=-0.1ex,
           minimum size=3ex,
           anchor=center}
  ]
  \node[fill=black,text=white] (u) {$u$};
  \node[below=5ex of u,fill=black!30] (w) {$w$};
  \node[below=5ex of w] (x) {$x$};
  \node (v) at ([xshift=-8ex]$(u)!0.5!(w)$) {$v$};
  \draw (u) -- (v)
        (v) -- (w)
        (w) -- (u)
        (w) -- (x);
\end{tikzpicture}

  \end{subfigure}
  \begin{subfigure}[b]{0.3\textwidth}
    \begin{tikzpicture}[
    semithick,
    nodes={draw,rounded corners=1.3ex,
           inner sep=0ex,outer sep=-0.1ex,
           minimum width=4.8ex,minimum height=3ex,
           anchor=center}
  ]
  \node[fill=black,text=white] (u) {$u,0$};
  \node[below=5ex of u,fill=black!30] (w) {$w,0$};
  \node[below=5ex of w] (x) {$x,0$};
  \node (v) at ([xshift=-8ex]$(u)!0.5!(w)$) {$v,0$};
  \node[right=12ex of u,fill=black,text=white] (u') {$u,1$};
  \node[below=5ex of u',fill=black!30] (w') {$w,1$};
  \node[below=5ex of w'] (x') {$x,1$};
  \node (v') at ([xshift=8ex]$(u')!0.5!(w')$) {$v,1$};
  \draw (u)  -- (v')
        (v)  -- (w')
        (w)  -- (u')
        (w)  -- (x')
        (u') -- (v)
        (v') -- (w)
        (w') -- (u)
        (w') -- (x);
\end{tikzpicture}

  \end{subfigure}
  \caption{
    A graph (on the left) and its Kronecker cover (on the right).
  }
  \label{fig:butterfly-unfolding}
\end{figure}
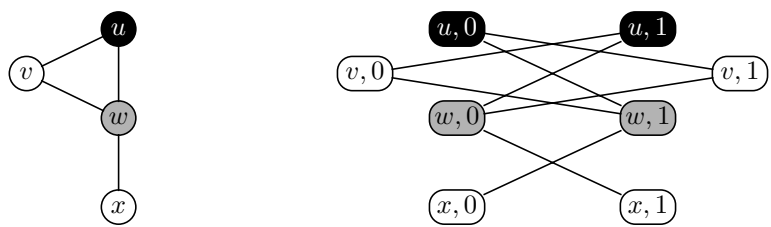

The Kronecker cover in Figure~\ref{fig:butterfly-unfolding}
is connected because the nodes in
$\Set{u, v, w} \times \Set{0, 1}$
form a cycle.
The following lemma generalizes this observation.

\begin{restatable}{lemma}{LemButterflyUnfoldingConnected}
  \label{lem:butterfly-unfolding-connected}
  The Kronecker cover of a connected graph~$G$ is connected
  if and only if
  $G$ contains a cycle of odd length,
  (i.e., if and only if $G$ is non-bipartite).
\end{restatable}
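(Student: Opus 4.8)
The plan is to prove both directions of the equivalence by analyzing how paths in~$G$ lift to paths in the Kronecker cover~$G'$, using a parity argument.

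First I would establish the key bookkeeping fact about walks. Given a walk $v_0, v_1, \dots, v_k$ in~$G$, observe that for each starting bit $i \in \Set{0,1}$ there is a unique lifted walk in~$G'$ starting at $\Tuple{v_0, i}$, namely $\Tuple{v_0, i}, \Tuple{v_1, i \oplus 1}, \Tuple{v_2, i}, \dots, \Tuple{v_k, i \oplus (k \bmod 2)}$, where $\oplus$ is addition modulo~$2$. This follows directly from the definition of~$E'$: each edge $\Set{u,v} \in E$ contributes precisely the two edges $\Set{\Tuple{u,0},\Tuple{v,1}}$ and $\Set{\Tuple{u,1},\Tuple{v,0}}$, so traversing an edge always flips the second coordinate. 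Conversely, every walk in~$G'$ projects (via the first coordinate) to a walk in~$G$. Hence walks in~$G'$ from $\Tuple{u,i}$ to $\Tuple{v,j}$ correspond exactly to walks in~$G$ from~$u$ to~$v$ whose length has parity $i \oplus j$.

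For the \emph{only if} direction (contrapositive): assume $G$ is bipartite, with parts $V_0, V_1$. I claim $G'$ is disconnected. Fix a node $v \in V_0$. Since $G$ is connected and bipartite, every walk in~$G$ from~$v$ to a node in $V_0$ has even length, and every walk to a node in $V_1$ has odd length. By the correspondence above, starting from $\Tuple{v,0}$ one can only reach nodes $\Tuple{w,i}$ where $i = 0$ if $w \in V_0$ and $i = 1$ if $w \in V_1$; this set is a proper subset of~$V'$ (it misses, e.g., $\Tuple{v,1}$), so $G'$ is not connected.

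For the \emph{if} direction: assume $G$ contains an odd cycle, and let $G$ be connected. Take any two nodes $\Tuple{u,i}, \Tuple{v,j} \in V'$. Since $G$ is connected there is a walk from~$u$ to~$v$ in~$G$, say of length~$\ell$. If $\ell \equiv i \oplus j \pmod 2$ we are done by lifting it. Otherwise, I splice in the odd cycle: pick a walk from~$u$ to a vertex~$c$ on the odd cycle, go around the cycle once (adding odd length), return to~$u$, and then follow the original walk; this changes the total parity while keeping the endpoints, yielding a walk from~$u$ to~$v$ of the correct parity, which lifts to a walk in~$G'$ connecting $\Tuple{u,i}$ to $\Tuple{v,j}$. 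Hence $G'$ is connected.

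\textbf{Main obstacle.} The argument is essentially routine once the walk-lifting correspondence is set up cleanly; the only mild subtlety is being careful to phrase everything in terms of \emph{walks} rather than paths (so that splicing in a cycle and doubling back is legitimate) and to handle the parity arithmetic modulo~$2$ without off-by-one errors. I would also remark that this is a classical fact about bipartite double covers, so the write-up can be kept brief.
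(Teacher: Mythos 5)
Your proof is correct and follows essentially the same route as the paper's: both directions rest on the observation that traversing an edge of the cover flips the second coordinate, so reachability from $(u,i)$ to $(v,j)$ is governed by the parity of walk lengths in~$G$, and the odd cycle is exactly what allows switching parity (while bipartiteness forbids it, splitting the cover into two disjoint copies). The paper packages the forward direction slightly more concretely --- the odd cycle lifts to a single cycle of length $2n$ to which both copies of every node are attached by lifted paths --- whereas you phrase it uniformly via a walk-lifting correspondence plus parity splicing, but the mathematical content is identical.
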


If a Kronecker cover is connected,
then it constitutes a legal input for a distributed automaton.
The next key lemma shows that, in this case,
a weakly fair automaton cannot even distinguish between
a graph and its Kronecker cover.

\begin{restatable}{lemma}{LemButterflyUnfoldingInvariance}
 \label{lem:butterfly-unfolding-invariance}
  For every \Type{*}{*}{*}{weak}-auto\-ma\-ton $A$
  with input alphabet $\Alphabet$
  and every non-bipartite $\Alphabet$-labeled graph~$G$,
  $A$ accepts~$G$
  if and only if
  it accepts the Kronecker cover of~$G$.
\end{restatable}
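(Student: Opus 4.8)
The goal is to relate fair runs of $A$ on $G$ with fair runs of $A$ on its Kronecker cover $G'$. The key structural fact is that $G'$ comes equipped with a canonical projection $\pi \colon V' \to V$, $\pi(\Tuple{v,i}) = v$, which is a graph homomorphism and moreover a \emph{covering map}: it restricts to a bijection between the neighborhood of $\Tuple{v,i}$ in $G'$ and the neighborhood of $v$ in $G$. Concretely, the neighbors of $\Tuple{v,i}$ are exactly the nodes $\Tuple{u, 1-i}$ with $\Set{u,v} \in E$, so $\pi$ maps them bijectively onto the neighbors of $v$. The plan is to exploit this to transport configurations and runs in both directions.

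\textbf{From $G$ to $G'$.} Given any configuration $\Configuration$ of $A$ on $G$, define the \emph{lifted configuration} $\Configuration' = \Configuration \circ \pi$ on $G'$, i.e., $\Configuration'(\Tuple{v,i}) = \Configuration(v)$. Because $\pi$ is a local bijection on neighborhoods, for every node $\Tuple{v,i}$ the $\beta$-bounded multiset of neighbor states $N_{\Tuple{v,i}}^{\Configuration'}$ equals $N_v^{\Configuration}$; this holds regardless of the counting bound $\beta$, which is why the lemma's statement has a wildcard in the detection slot. Hence if $\Selection \subseteq V$ is a selection of $G$, lifting it to $\Selection' = \pi^{-1}(\Selection) = \Selection \times \Set{0,1}$ gives $\mathit{succ}_\delta(\Configuration',\Selection') = \mathit{succ}_\delta(\Configuration,\Selection) \circ \pi$. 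The initial configuration also lifts, since $\delta_0 \circ \lambda'$ factors through $\pi$. Therefore: starting from \emph{any} weakly fair run $\Run$ of $A$ on $G$ with schedule $(\Selection_t)_t$, the lifted schedule $(\pi^{-1}(\Selection_t))_t$ is weakly fair on $G'$ and produces the lifted run $\Run'$, whose configurations are accepting (resp.\ rejecting) exactly when those of $\Run$ are. So if $A$ accepts $G$, the consistency condition says all fair runs of $A$ on $G$ accept; the lifted runs then accept, and since $A$ is an automaton satisfying consistency on $G'$, \emph{all} fair runs of $A$ on $G'$ accept, i.e.\ $A$ accepts $G'$. The same argument handles rejection, giving one direction.

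\textbf{From $G'$ to $G$ — the main obstacle.} Conversely, suppose $A$ accepts $G'$; we want $A$ to accept $G$. A weakly fair run of $A$ on $G$ need not be a lift of anything, so we cannot directly reuse the bijection. Instead I would run the simulation in the other direction by using the freedom in the consistency condition: pick \emph{one} weakly fair run of $A$ on $G$ — for concreteness the synchronous one, which is weakly fair under liberal or synchronous selection — and show that its lift to $G'$ is a legal fair run of $A$ on $G'$. This is exactly the construction above with $\Selection_t = V$, giving $\Selection'_t = V'$; the lifted run is weakly fair on $G'$, and its acceptance status matches that of the synchronous run on $G$. Since $A$ accepts $G'$, that lifted run accepts, hence the synchronous run on $G$ accepts, hence by consistency \emph{every} fair run of $A$ on $G$ accepts, so $A$ accepts $G$. (For the \SelectionType{exclusive} case one picks instead a round-robin schedule $\Set{v_1},\Set{v_2},\dots,\Set{v_n},\Set{v_1},\dots$ and lifts it node-by-node; the lift $\pi^{-1}(\Set{v_t})$ is a two-element set, which is not exclusive on $G'$, so one further expands each step into two single-node steps $\Set{\Tuple{v_t,0}}$ then $\Set{\Tuple{v_t,1}}$ — this is sound because $\Tuple{v_t,0}$ and $\Tuple{v_t,1}$ are non-adjacent in $G'$, so activating them in sequence has the same effect as activating them simultaneously.) The one genuinely delicate point is that the lemma must cover weak fairness only, and indeed the argument above never needs strong fairness: it suffices to exhibit \emph{a} fair run on each side whose acceptance status is forced, and then invoke consistency to propagate to all fair runs. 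So the crux is simply verifying the neighborhood-bijection property of the covering map and checking that weak fairness is preserved in both directions of lifting.
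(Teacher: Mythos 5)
Your proof is correct and follows essentially the same route as the paper's: lift a fair schedule of $G$ to its Kronecker cover via the local neighborhood bijection of the covering map, serialize each two-element lift $\pi^{-1}(\{v\})$ into two consecutive singleton steps in the exclusive case, and invoke the consistency condition on both graphs to transfer acceptance. The only detail the paper spells out that you elide is that the serialized exclusive run passes through intermediate ``mixed'' configurations at odd times, which are accepting iff two consecutive configurations of the original run both are --- harmless under acceptance by stable consensus, but worth a line.
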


We can now prove the main technical result of this section:

\begin{restatable}{theorem}{ThmSimulateExclusiveWeakWithLiberalWeak}
  \label{thm:simulate-exclusive-weak-with-liberal-weak}
  For every \Type{*}{*}{exclusive}{weak}-automaton
  there is an equivalent \Type{*}{*}{liberal}{weak}-automaton.
\end{restatable}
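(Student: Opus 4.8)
The plan is to simulate a \Type{*}{*}{exclusive}{weak}-automaton $A$ by a \Type{*}{*}{liberal}{weak}-automaton $B$ by having $B$ run on a virtual copy of the input that is bipartite, so that a liberal scheduler is forced to behave like an exclusive one on that copy. The natural candidate for the virtual copy is the Kronecker cover $G'$ of the input graph $G$, which is bipartite by construction. On a bipartite graph with parts $V_0$ and $V_1$, the key observation is that any selection $\Selection$ can be split as $\Selection \cap V_0$ followed by $\Selection \cap V_1$, and since each of $V_0, V_1$ is an independent set, activating an independent set in one step is indistinguishable from activating its nodes one at a time. Thus on a bipartite graph, liberal selection can simulate exclusive selection step by step without any symmetry breaking. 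So the heart of the argument is: (i) make $B$ effectively run on $G'$ even though its actual input is $G$, and (ii) transfer the accept/reject behaviour of $A$ on $G$ to $A$ on $G'$ and then to $B$ on $G$.

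The main structural obstacle is that the Kronecker cover of a bipartite graph is disconnected (two copies of $G$), so it is not a legal input, and Lemma~\ref{lem:butterfly-unfolding-invariance} only applies to non-bipartite $G$. I would handle this by a case split. If $G$ is \emph{non-bipartite}, then $G'$ is connected (Lemma~\ref{lem:butterfly-unfolding-connected}), and $A$ accepts $G$ iff $A$ accepts $G'$ (Lemma~\ref{lem:butterfly-unfolding-invariance}); here I build $B$ so that each node $v$ of $G$ internally maintains the states of both of its copies $\Tuple{v,0}$ and $\Tuple{v,1}$ in $G'$, i.e.\ $B$'s state space is (roughly) $Q \times Q$ plus a bit of control. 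A node $v$ in $G$ sees, for each state $q$, how many neighbors $u$ have copy-$0$ in state $q$ and how many have copy-$1$ in state $q$; from this $v$ can reconstruct exactly the neighborhood multisets of $\Tuple{v,0}$ and $\Tuple{v,1}$ in $G'$ (the neighbors of $\Tuple{v,0}$ are the copies $\Tuple{u,1}$ of $v$'s $G$-neighbors, and symmetrically). So $B$ can faithfully simulate a run of $A$ on $G'$. To make the simulated run \emph{exclusive} and \emph{weakly fair}, $B$ uses a round-robin/token mechanism inside each node that, whenever that node is liberally selected, advances the simulated exclusive schedule by exactly one $G'$-node-step, alternating between the two copies (this uses that $V \times \Set 0$ and $V \times \Set 1$ are the two sides of $G'$, so activating copies within one side is order-irrelevant). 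Weak fairness of $B$'s schedule guarantees every $G$-node is selected infinitely often, which is enough to make the simulated exclusive schedule on $G'$ weakly fair for $A$; since $A$ is consistent, that simulated run accepts iff $A$ accepts $G'$ iff $A$ accepts $G$. Finally I set $B$'s accepting/rejecting states so that $B$ accepts $G$ exactly when the simulated run accepts.

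For the \emph{bipartite} case, $G$ itself already has two independent parts $V_0, V_1$, so there is no need for the Kronecker cover at all: $B$ can directly simulate an exclusive run of $A$ on $G$ by the same within-a-part round-robin trick, since activating an independent set all at once equals activating it node by node. Concretely, $B$ replays the steps of $A$, and whenever a node is liberally selected it advances the simulated exclusive schedule, cycling through its part; weak fairness ensures progress and the consistency of $A$ ensures the answer is independent of the exact interleaving. Merging the two cases requires $B$ to first determine (in finite time, in a distributed weakly-fair way) whether $G$ is bipartite — this can be done with a standard $2$-colouring attempt that detects an odd cycle — and then behave according to the appropriate branch; alternatively, one can run both simulations in parallel components of the state and, once bipartiteness is resolved, commit to the correct one. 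I expect the bookkeeping for this bipartiteness detection and the precise fairness argument — showing that a weakly fair liberal schedule of $B$ induces a weakly fair exclusive schedule of $A$ on the relevant graph — to be the most delicate part, though conceptually routine given Lemmas~\ref{lem:butterfly-unfolding-connected} and~\ref{lem:butterfly-unfolding-invariance}. Note also that $B$ never needs to break symmetry: on the bipartite graph $G'$ (or on $G$ when $G$ is bipartite) the simulation of exclusivity is purely a matter of sequencing activations within an independent set, which liberal selection under weak fairness can do on its own, and this is exactly why the theorem holds despite the apparent obstruction mentioned before Proposition~\ref{rem:nonhalting-without-exclusivity}.
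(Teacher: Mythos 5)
Your overall architecture --- simulate $A$ on the bipartite Kronecker cover via a two-component state space $Q \times Q$, and transfer acceptance back and forth with Lemmas~\ref{lem:butterfly-unfolding-connected} and~\ref{lem:butterfly-unfolding-invariance} --- is the right one and matches the paper. But your plan for merging the two cases contains a fatal step: you propose that $B$ ``first determine (in finite time, in a distributed weakly-fair way) whether $G$ is bipartite'' via a $2$-colouring attempt that detects an odd cycle. No \Type{*}{*}{*}{weak}-automaton can do this. Lemma~\ref{lem:butterfly-unfolding-invariance} itself says that a weakly fair automaton cannot distinguish a non-bipartite $G$ from its Kronecker cover $G'$, and $G'$ is always bipartite; so bipartiteness is not decidable in this model (concretely, on an unlabeled odd cycle run synchronously, all nodes stay in identical states forever and nobody ever ``detects'' the odd cycle). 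The ``run both branches in parallel and commit once bipartiteness is resolved'' variant fails for the same reason. The paper's way out is that the case split lives in the \emph{correctness analysis}, not in the construction: there is a single uniform automaton $B$ whose state carries one $Q$-component per candidate $2$-colouring and which alternately updates the $0$-component against the neighbours' $1$-components and vice versa. If $G$ is bipartite this literally simulates two exclusive runs of $A$ on $G$ in parallel; if $G$ is not bipartite the very same computation is a (pseudo-)simulation of $A$ on the connected cover $G'$, and the invariance lemma closes the loop. $B$ never needs to know which case it is in.

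A second, smaller gap: your ``round-robin/token mechanism'' that advances the simulated exclusive schedule whenever a node happens to be liberally selected leaves neighbours free to drift out of step (one node may have performed its copy-$0$ update five times while a neighbour has done so only three times), so the simulated neighbourhood multisets of $A$ on $G'$ are not well defined. The paper sidesteps this by first building $B$ as a \emph{synchronous} automaton (rounds alternate between the two colour classes, exploiting exactly your observation that activating an independent set at once equals activating it node by node) and then invoking Theorem~\ref{lem:synchronizer} to convert it to a liberal one; the synchronizer's phase counters are what keep neighbours in lockstep. You need either that two-stage route or an explicit phase-synchronization gadget; weak fairness alone does not provide it.
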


\begin{proof}[Proof sketch]
  Given a \Type{*}{*}{exclusive}{weak}-automaton~$A$,
  we construct an equivalent \Type{*}{*}{synchronous}{weak}-automaton~$B$
  (i.e., a synchronous automaton).
  This is sufficient to prove the claim,
  because we know from Theorem~\ref{lem:synchronizer} that
  $B$ can always be simulated
  by a \Type{*}{*}{liberal}{weak}-automaton using a synchronizer.

  Let $G$ be an input graph for $A$.
  If we were guaranteed that
  the labels of $G$ define a proper vertex coloring
  (i.e., edges connect nodes of different colors),
  then the task would be straightforward.
  Indeed,
  since each color of a proper coloring represents an independent set,
  $B$~could simply operate in cyclically repeating phases,
  each one activating precisely the nodes of one of the colors.
  As explained in the proof of
  Theorem~\ref{thm:simulate-exclusivity-with-fairness},
  such a run is equivalent to a run of an exclusive scheduler
  that activates the nodes of each independent set one by one
  (in some arbitrary order).

  This approach can be adapted to bipartite graphs
  because a bipartite graph has exactly two possible 2-colorings.
  However,
  computing one of the two 2-colorings would require to break symmetry,
  which a \Type{*}{*}{synchronous}{weak}-automaton cannot do.
  So instead,
  the states of automaton~$B$ have two components,
  one corresponding to each coloring,
  and nodes update both components when they are activated.

  Using these ideas,
  we construct~$B$ in such a way that
  it recognizes the same bipartite graphs as~$A$.
  Then we use Lemmas~\ref{lem:butterfly-unfolding-connected}
  and \ref{lem:butterfly-unfolding-invariance} to prove that $L(A)=L(B)$.
  Indeed, if~$G$ is not bipartite,
  then by Lemma~\ref{lem:butterfly-unfolding-connected},
  its Kronecker cover~$G'$ is connected
  and therefore constitutes a legal input for a distributed automaton.
  By Lemma~\ref{lem:butterfly-unfolding-invariance},
  $B$~accepts~$G$ if and only if it accepts~$G'$.
  Since Kronecker covers are bipartite by definition,
  we know from the above discussion that
  $B$~accepts~$G'$ if and only if $A$~accepts~$G'$.
  Finally,
  again by Lemma~\ref{lem:butterfly-unfolding-invariance},
  $A$~accepts~$G'$ if and only if it accepts~$G$.
  From this chain of equivalences,
  we can conclude that
  $G$ is accepted by~$B$ if and only if it is accepted by~$A$.
\end{proof}


\section{Separations}
\label{sec:separations}
In Sections \ref{sec:trivial}, \ref{sec:synchro} and \ref{sec:exclusive} we have shown that the classes of graph languages in Figure \ref{fig:lattice} collapse to \emph{at most} the seven classes shown on the left of Figure \ref{fig:lattice-CAF}. In this section we show that the seven classes are all different. For this we 
examine four graph languages, and determine which classes are expressive enough to recognize them:
\newcommand{\Black}{\mathcal{B}}
\newcommand{\Stars}{\mathcal{S}}
\newcommand{\Triangle}{\mathcal{C}_3}
\renewcommand{\triangle}{\textsc{C}_3}
\newcommand{\hexagon}{\textsc{C}_6}
\newcommand{\EvStars}{\mathcal{S}_{\mathit{even}}}
\begin{itemize}
\item $\Black$: The language of graphs with set of labels $\{ \textit{black}, \textit{white} \}$ having at least one black node.
\item $\Stars$: The language of star graphs, i.e., the set of all connected, unlabeled graphs in which one node (the \emph{center}) 
has degree at least 2, and all others (the \emph{leaves}) have degree~1. 
\item $ \Triangle$: The language containing one single graph, namely the cycle $\triangle$ with three nodes labeled by $0$, $1$, and $2$, respectively.
\item $\EvStars$: The language of even stars, i.e., the graphs of $\Stars$ with an even number of leaves. 
\end{itemize}
The results are summarized on the right of Figure \ref{fig:lattice-CAF}.

\begin{figure}[htb]
  \centering
  \begin{subfigure}[c]{0.35\textwidth}
    \begin{tikzpicture}[semithick,>=stealth',shorten >=0.5pt,auto,on grid]
  \tikzstyle{class}=[draw,rounded corners=1.2ex,
                     inner sep=0ex,minimum height=3ex,minimum width=6ex]
  \tikzstyle{D}  =[draw=drawD, fill=fillD]
  \tikzstyle{A}  =[draw=drawA, fill=fillA]
  \tikzstyle{F}  =[draw=drawF, fill=fillF]
  \tikzstyle{DA} =[draw=drawDA,fill=fillDA]
  \tikzstyle{DF} =[draw=drawDF,fill=fillDF]
  \tikzstyle{AF} =[draw=drawAF,fill=fillAF]
  \tikzstyle{DAF}=[draw=drawDAF,shade,top color=topDAF,bottom color=botDAF]
  \def\dv{8ex}
  \def\dh{10ex}

  \node [class,F]   (0000)                                   {\Type{set}{halting}{liberal}{weak}};
  \node [class,D]   (1000) [above left=\dv and \dh of 0000]  {\Type{multiset}{halting}{liberal}{weak}};
  \node [class,A]   (0010) [above right=\dv and \dh of 0000] {\Type{set}{stabilizing}{liberal}{weak}};
  \node [class,DF]  (1100) [above=\dv of 1000]               {\Type{multiset}{halting}{liberal}{strong}};
  \node [class,DA]  (1010) [above=2*\dv of 0000]             {\Type{multiset}{stabilizing}{liberal}{weak}};
  \node [class,AF]  (0110) [above=\dv of 0010]               {\Type{set}{stabilizing}{liberal}{strong}};
  \node [class,DAF] (1110) [above=\dv of 1010]               {\Type{multiset}{stabilizing}{liberal}{strong}};

  \draw[->,detection]
    (0000) edge (1000)
    (0010) edge (1010)
    (0110) edge (1110)
    ;

  \draw[->,acceptance]
    (0000) edge (0010)
    (1000) edge (1010)
    (1100) edge (1110)
    ;

  \draw[->,fairness]
    (1000) edge (1100)
    (0010) edge (0110)
    (1010) edge (1110)
    ;
\end{tikzpicture}

  \end{subfigure}
  \qquad
  \begin{subfigure}[c]{0.35\textwidth}
    \renewcommand{\arraystretch}{1.05}
    \begin{tabular}{rcccc}
      Class & $\;\Black\;$ & $\;\Stars\;$ & $\Triangle$ & $\EvStars$ \\
      \midrule
      \Type{multiset}{stabilizing}{liberal}{strong} & \cmark & \cmark & \cmark & \cmark \\
      \Type{multiset}{halting}{liberal}{strong}     & \xmark & \cmark & \xmark & \cmark \\
      \Type{multiset}{stabilizing}{liberal}{weak}   & \cmark & \cmark & \xmark & \xmark \\
      \Type{set}{stabilizing}{liberal}{strong}      & \cmark & \cmark & \cmark & \xmark \\
      \Type{multiset}{halting}{liberal}{weak}       & \xmark & \cmark & \xmark & \xmark \\
      \Type{set}{stabilizing}{liberal}{weak}        & \cmark & \xmark & \xmark & \xmark \\
      \Type{set}{halting}{liberal}{weak}            & \xmark & \xmark & \xmark & \xmark \\
    \end{tabular}
  \end{subfigure}
  \caption{
    On the left,
    quotient of the classification of Figure~\ref{fig:lattice}.
    On the right,
    four graph languages,
    and the automata models capable of recognizing them.
  }
  \label{fig:lattice-CAF}
\end{figure}

\paragraph*{Recognizing properties of labeled graphs: the language $\mathcal{B}$}
The main difference between the two types of acceptance is that
halting automata cannot recognize properties that require nodes to wait an unlimited amount of time for some information that may never arrive, while even the simplest class of automata accepting by stable consensus can recognize some of those properties,
such as~$\Black$.

\begin{proposition}
\label{prop:ASf-non-trivial}
 $\mathcal{B}$ is recognizable by a \Type{set}{stabilizing}{liberal}{weak}-automaton, but not by any \Type{*}{halting}{*}{*}-automaton.
\end{proposition}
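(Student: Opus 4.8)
The positive direction is the easy half. The plan is to exhibit a small \Type{set}{stabilizing}{liberal}{weak}-automaton for $\Black$. Use a single bit of state: a node is initialized to $1$ if its label is \emph{black} and to $0$ if it is \emph{white}; the transition function sets a node's state to $1$ as soon as it has a neighbor in state $1$ (and otherwise leaves it unchanged), which only needs detection with counting bound $\beta = 1$. Declare state $1$ accepting and state $0$ rejecting. If $G$ has a black node, then since $G$ is connected, under any weakly fair schedule the $1$'s propagate along a spanning tree and every node is eventually and permanently in state $1$, so every fair run is accepting; if $G$ has no black node, every node stays in state $0$ forever and every fair run is rejecting. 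This also verifies the consistency condition, so the automaton is well-defined, and it recognizes $\Black$.

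The negative direction is the main obstacle. I want to show no \Type{*}{halting}{*}{*}-automaton recognizes $\Black$; by Statement~3 of Lemma~\ref{lem:simple} it suffices to rule out \Type{multiset}{halting}{*}{strong}-automata, i.e.\ I may assume the strongest machine (counting) and, within fairness, I should actually handle \emph{both} selection/fairness combinations — but the cleanest route is to argue directly about an arbitrary halting automaton $A$. Suppose $A$ recognizes $\Black$. Consider the all-white graph $H$ on two adjacent nodes: $A$ rejects it, so (consistency) every fair run halts with all nodes in rejecting states; fix such a run and let $t$ be a time by which both nodes of $H$ have halted in rejecting states. Now take a path (or any connected graph) that is all white except for one black node placed "far" from a designated white node $w$ — concretely, a path of length $> t$ with the black node at one end and $w$ at the other; call this graph $G$. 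Since $G$ has a black node, $A$ accepts $G$, so every fair run halts with all nodes accepting. The key point: for the first $t$ steps, the node $w$ and its $t$-ball in $G$ see exactly the same (multiset-)neighborhood evolution as the corresponding all-white fragment, because no information from the black node can travel more than one edge per step — this is a standard locality/bounded-speed argument about distributed machines. Hence if I schedule the relevant nodes so that $w$ is activated the same number of times in $G$ as in the fixed all-white run, $w$ enters a \emph{rejecting} state by time $t$ in that run of $A$ on $G$; since $A$ is halting, $w$ stays rejecting forever, so this run of $A$ on $G$ is not accepting — contradicting that $A$ accepts $G$.

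The step I expect to need the most care is making "$w$ behaves on $G$ exactly as on the all-white graph for $t$ steps" precise and schedule-independent: I must pick a schedule on $G$ that is (weakly, or strongly, as required) fair yet drives the $t$-ball around $w$ through the same local history as the all-white run — one clean way is to first run only the nodes outside the $t$-ball of $w$ for a long time (they cannot influence $w$'s ball within $t$ steps anyway) is not quite right, so instead I fix the all-white run's schedule restricted to $H$, replay it inside the corresponding fragment of $G$ for $t$ steps while leaving the black half untouched, and only afterwards resume an arbitrary fair continuation; since $w$ has already halted in a rejecting state by then, the continuation is irrelevant. Formally one proves by induction on $s \le t$ that $\Configuration_s^G$ restricted to the radius-$(t-s)$ ball of $w$ agrees with $\Configuration_s^H$ on the image of that ball, using that $\delta$ depends only on a node's own state and its $\beta$-bounded neighbor-multiset. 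I would present this locality lemma inline and then close the argument as above; the construction of $G$ should also ensure $G$ is connected and has at least two nodes, which the path clearly does.
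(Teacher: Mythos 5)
Your positive direction matches the paper's construction exactly and is fine. The negative direction, however, has a genuine gap: the reference graph you reject against --- the all-white graph $H$ on two adjacent nodes --- does not locally resemble the long path $G$ around the designated white endpoint $w$, so the locality induction you sketch cannot be carried out. Concretely, the neighbor $u$ of $w$ in $G$ has degree~$2$, while its would-be counterpart in $H$ has degree~$1$; a counting machine (and the proposition must exclude \Type{multiset}{halting}{*}{*}-automata as well) distinguishes these the first time $u$ is activated, and the divergence reaches $w$ one step later. Even for non-counting machines your inductive statement is ill-typed: the radius-$(t-s)$ ball of $w$ in $G$ is a path on $t-s+1$ nodes, and there is no structure-preserving ``image'' of it in a two-node graph on which the configurations could agree. (Folding that path onto the single edge of $H$ is a graph homomorphism, but it sends both neighbors of an interior node to the same node of $H$, turning a neighbor multiset of size $2$ into one of size $1$ --- again fatal for counting machines.)

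The repair is to choose a reference graph whose local structure is genuinely replicated in the interior of the long path, which is exactly what the paper does: it compares an entirely white \emph{cycle} with a sufficiently long all-white path whose endpoints are black. Every node of the cycle and every interior node of the path has degree~$2$ and is white, and the covering map sending path node $i$ to cycle node $i \bmod n$ is a local isomorphism away from the endpoints, so the replay induction (as in the proof of Theorem~\ref{thm:trivial}) goes through even with counting. This choice also resolves a circularity your variant would face if you tried to fix it by taking $H$ to be a long all-white path: there the halting time $t$ depends on the length of $H$, which must itself be chosen larger than $t$, and nothing bounds $t$ by the length. With a cycle (or the chained-copies construction of Theorem~\ref{thm:trivial}), $t$ is determined by a graph of fixed size before the length of the path is chosen. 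Your scheduling remarks (fixing only a finite prefix of the schedule and then continuing fairly) are fine and carry over unchanged to the corrected construction.
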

\begin{proof}[Proof sketch]
The \Type{set}{stabilizing}{liberal}{weak}-automaton has two states, called \textit{black} and \textit{white}. The initial state of a node is given by its label. Black nodes remain always black, and white nodes with a black neighbor become black. Since graphs are connected by assumption, if a graph contains some black node then eventually all nodes are black, otherwise all nodes stay white. 

For the second part,
one can show that \Type{multiset}{halting}{liberal}{strong}-automata
cannot distinguish between
an entirely white cycle and a sufficiently long path graph
whose nodes are all white except for two black nodes at the endpoints.
(The argument is similar to the proof of Theorem~\ref{thm:trivial}.)
\end{proof}

\paragraph*{Recognizing properties of unlabeled graphs: the language $\Stars$}
We show in Proposition \ref{prop:ASf-trivial-unlabeled} that \Type{set}{stabilizing}{liberal}{weak}-automata cannot recognize any non-trivial property of \emph{unlabeled} graphs (which we identify with the labeled graphs whose nodes all carry the same label). That is, while \Type{set}{stabilizing}{liberal}{weak}-automata can recognize properties of the labeling of a graph, they cannot recognize any non-trivial property of its \emph{structure}. Then we show in Proposition~\ref{prop:AsF-recog-stars} that the strong fairness of \Type{set}{stabilizing}{liberal}{strong}-automata allows them to recognize $\Stars$.

\begin{proposition}
\label{prop:ASf-trivial-unlabeled}
\Type{set}{stabilizing}{liberal}{weak}-automata can only recognize trivial properties  of \emph{unlabeled} graphs.
In particular, \(\Stars\) is not recognizable by a \Type{set}{stabilizing}{liberal}{weak}-automaton.
\end{proposition}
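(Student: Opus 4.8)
The claim has two parts: first, that a \Type{set}{stabilizing}{liberal}{weak}-automaton can only recognize trivial properties of unlabeled graphs; second, the special case that $\Stars$ is not recognizable by such an automaton. The second part follows immediately from the first, since $\Stars$ is a non-trivial property of unlabeled graphs (e.g.\ the path on three nodes is a star, but the cycle on three nodes is not). So I would focus entirely on the first statement.

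\textbf{Approach.} The strategy is the same kind of "gluing" argument used in the proof of Theorem~\ref{thm:trivial}, but now it must exploit the synchronous shortcut available under weak fairness rather than a halting condition. The key observations are: (i) by Statement~5 of Lemma~\ref{lem:simple} (together with the fact that a weakly fair liberal automaton can be run synchronously), it suffices to reason about the \emph{unique synchronous run} of the machine; and (ii) since the automaton is non-counting ($\beta = 1$), a node's transition depends only on the \emph{set} of states appearing among its neighbors, not on their multiplicity or even on how many neighbors it has (beyond having at least one). I would first reduce the problem: suppose toward a contradiction that a \Type{set}{stabilizing}{liberal}{weak}-automaton $A$ accepts some unlabeled graph $G$ but rejects some unlabeled graph $H$. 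Since $A$ is consistent, its unique synchronous run on $G$ is accepting and its unique synchronous run on $H$ is rejecting. Let $\Configuration_t^G$ and $\Configuration_t^H$ be the configurations of these synchronous runs at time $t$; choose $t$ large enough that $\Configuration_{t}^G$ is accepting and $\Configuration_{t}^H$ is rejecting, and (since $Q$ is finite and the synchronous run is deterministic) also large enough that the synchronous runs have each entered a cycle of configurations.

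\textbf{The construction.} I would then build a single connected unlabeled graph $K$ that "locally looks like $G$" near some region and "locally looks like $H$" near another region, with the two regions joined by a long path or a long chain of copies, exactly in the spirit of Figure~\ref{fig:chain-construction}. Concretely: take many disjoint copies $G_1, \dots, G_m$ of $G$ and $H_1, \dots, H_m$ of $H$, pick a fixed node $a$ in $G$ and a fixed node $b$ in $H$, and connect copy $i$ to copy $i+1$ through an edge incident to the $a$-copies (respectively $b$-copies), finally bridging $G_m$ to $H_m$. Because $A$ is non-counting, in the synchronous run on $K$ every node of $G_1$ sees, for the first $\approx m$ steps, exactly the set of neighbor-states it would see in the synchronous run on $G$ — the extra edges introduced by the gluing are too far away to matter — and likewise every node of $H_1$ behaves as in the synchronous run on $H$. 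With $m$ chosen large relative to $t$ (and relative to the lengths of the two cycles, using that the synchronous run reaches an \emph{accepting} configuration and \emph{stays} there, and similarly for rejecting), the nodes of $G_1$ are in accepting states while the nodes of $H_1$ are in rejecting states at time $t$. Moreover, because $G$'s synchronous run, once accepting, stays accepting forever (the run is accepting by stable consensus and eventually visits only accepting configurations — though here I must be a bit careful: a stabilizing run need not be \emph{constant} from time $t$ on, only eventually-accepting; using that both runs are eventually periodic, I take $t$ past the point where periodicity and acceptance/rejection have both kicked in, so that $\Configuration_{t'}^G$ is accepting for all $t' \ge t$ and $\Configuration_{t'}^H$ rejecting for all $t' \ge t$). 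Then the synchronous run on $K$ has, at every time $t' \ge t$ (up to the horizon determined by $m$), some node in an accepting state and some node in a rejecting state — so no configuration from time $t$ onward is accepting, nor is any rejecting. To make this hold at \emph{all} times (not just up to a finite horizon), one must either argue that $K$ can be chosen so the interference never reaches $G_1$ and $H_1$ (not possible with finite $m$) or, better, replace the finite chain with a suitable infinite-in-effect construction, or simply observe that the interference, once it arrives, cannot simultaneously make \emph{all} of $G_1 \cup H_1$ accepting and \emph{all} of it rejecting; I would pick the cleanest of these, likely: take $m$ large, conclude that for a long but finite stretch of times $K$'s synchronous run is neither accepting nor rejecting, and then — since the synchronous run on $K$ is itself eventually periodic — if it were eventually all-accepting or all-rejecting, that periodic behavior would have had to set in already within our long stretch, contradiction. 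This yields that $K$'s unique synchronous run is neither accepting nor rejecting, contradicting the consistency condition for $A$.

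\textbf{Main obstacle.} The delicate point — and the part I'd spend the most care on — is the bookkeeping around stabilization versus the finite "clean horizon" of the gluing: a stabilizing run is only \emph{eventually} accepting, not accepting from a cleanly identifiable moment, and the gluing only guarantees correct local behavior for finitely many steps. Reconciling these (by exploiting eventual periodicity of deterministic synchronous runs on a finite state set, and choosing the chain length $m$ large enough to straddle the relevant periods) is the crux; everything else is a routine adaptation of the Theorem~\ref{thm:trivial} argument. A secondary subtlety is ensuring $K$ is connected and has at least two nodes (automatic) and that the chosen "linking" nodes $a \in G$, $b \in H$ have a neighbor, which holds since $G$ and $H$ are connected with at least two nodes.
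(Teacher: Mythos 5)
You have the two right ingredients in hand --- restricting attention to the unique synchronous run, and the fact that a non-counting node sees only the \emph{set} of its neighbors' states --- but you stop one step short of the observation that makes the problem collapse, and the detour you take instead does not close. On an \emph{unlabeled} graph every node starts in the same state $q_0$, and in the synchronous run, if all nodes are in the same state $q_i$ at time $i$, then every node (having at least one neighbor, all of them in state $q_i$) sees the same singleton set of neighbor states and moves to the same state $q_{i+1}$. So the synchronous run visits a sequence of uniform configurations determined by $q_0, q_1, q_2, \dots$ that is \emph{completely independent of the graph}; in particular the synchronous runs on your $G$ and $H$ are literally the same state sequence, and the automaton accepts $G$ if and only if it accepts $H$. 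That is the paper's entire proof. Your hypothesis that the synchronous run on $G$ accepts while the one on $H$ rejects already contradicts this directly; no gluing is needed.

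The gluing argument you build instead has a genuine gap, precisely at the point you flag as the crux. The chain construction only controls the behavior of the nodes of $G_1$ and $H_1$ for a finite horizon of roughly $m$ steps, so you obtain configurations of $K$ that are neither accepting nor rejecting only on a finite stretch of times; but acceptance by stable consensus is a property of the infinite tail, so a finite stretch proves nothing. Your proposed repair --- that the synchronous run on $K$ is eventually periodic, so an eventually-all-accepting tail would already have to show within the stretch --- is unsound: the preperiod and period of the synchronous run on $K$ are bounded only by the number of configurations of $K$, i.e.\ by $\smash{|Q|^{|V_K|}}$, which grows with $m$, so you cannot choose $m$ large relative to them. (Ironically, the uniformity observation above shows that the synchronous run on the unlabeled graph $K$ keeps all nodes in the same state at every time, so its preperiod and period are in fact at most $|Q|$ --- but it also shows that the nodes of $G_1$ and $H_1$ can never be in different states, so the mixed configuration you are trying to exhibit does not exist.) A secondary slip: having some node in an accepting state and some node in a rejecting state does not by itself make a configuration neither accepting nor rejecting, since the paper never requires $Y$ and $N$ to be disjoint; the clean statement is that the two runs visit the same states and hence have the same acceptance status.
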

\begin{proof}
Let $A$ be a \Type{set}{stabilizing}{liberal}{weak}-automaton, and let
$\Run=(\Configuration_0, \Configuration_1, \ldots)$ be the synchronous run of $A$ on an unlabeled graph $G=(V, E)$, i.e., the run scheduled by $V^\omega$. We show that  $A$ either accepts all unlabeled graphs, or rejects all unlabeled graphs. Since $V^\omega$ is a weakly fair schedule, $\Run$ is a fair run, and so by the consistency condition $A$ accepts $G$ if{}f $\Run$ is accepting. Since $G$ is unlabeled, in $\Configuration_0$ every node of $G$ is in the same state $q_0$, which is independent of $G$. Moreover, since~$\Run$ is synchronous and $A$ is non-counting, in each configuration~$\Configuration_i$ every node of $G$ is in the same state $q_i$, which is also independent of $G$. So the states visited by~$\Run$ are independent of $G$, and so $A$ either accepts all unlabeled graphs, or rejects all unlabeled~graphs. 
\end{proof}

\begin{restatable}{proposition}{PropAsFRecogStars}
\label{prop:AsF-recog-stars}
$\Stars$ is recognizable by a \Type{set}{stabilizing}{liberal}{strong}-automaton and by a \Type{multiset}{halting}{liberal}{weak}-automaton.
\end{restatable}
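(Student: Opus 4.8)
The plan is to design two automata, one for each claimed model, both exploiting the fact that in a star graph the center has degree $\geq 2$ and every leaf has degree exactly $1$. The structural fact that drives both constructions is that an unlabeled graph is a star if and only if it has a node adjacent to all others and every other node has degree $1$; equivalently, it is a star iff exactly one node has degree $\geq 2$ and all remaining nodes have degree $1$ (with the degenerate case of a single edge, two nodes of degree $1$, also counting as a star with one leaf -- but our convention forbids graphs of fewer than two nodes, and a star needs degree $\geq 2$ at the center, so the smallest star has three nodes).

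\textbf{The \Type{set}{stabilizing}{liberal}{strong}-automaton.} Here I would use stabilizing acceptance plus strong fairness. A node cannot count, but it \emph{can} detect whether it has a neighbor in a given state. The idea is that a leaf (degree $1$) behaves differently from the center under a strongly fair schedule: if a node repeatedly "probes" by changing state, a degree-$1$ neighbor will, under strong fairness, eventually be caught in a state that reveals it is the \emph{only} neighbor. Concretely, I would have each node cycle through a small set of "colors" when activated; a node declares itself a leaf if it ever observes that its entire neighborhood is monochromatic \emph{and}, over repeated probes, this persists in a way incompatible with having two or more independently-scheduled neighbors -- strong fairness guarantees that two distinct neighbors will eventually be seen in different colors, whereas a single neighbor can only ever show one color at a time. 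Once a node has reliably identified itself as a leaf, it broadcasts this; the center then checks (again using set-detection: "do I have a neighbor in the not-yet-leaf state?") whether all its neighbors have become leaves. If the graph is a star, eventually every leaf is confirmed, the center sees only confirmed-leaf neighbors, and every node stabilizes to accepting; if the graph is not a star, either some node has two neighbors it can never collapse to a single one, or there is no node adjacent to all others, and the rejection signal (analogous to the $\Black$ construction: a rejecting state that propagates) floods the graph. The consistency condition must be checked against \emph{all} strongly fair runs, which is where care is needed.

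\textbf{The \Type{multiset}{halting}{liberal}{weak}-automaton.} Here counting up to $\beta = 2$ replaces the strong-fairness trick, and this makes halting possible. Every node can immediately and permanently classify itself: if it has exactly one neighbor ($N_v$ has a single element up to threshold $2$) it is a (potential) leaf; if it has $\geq 2$ neighbors it is a (potential) center. This classification is stable from time $1$ and needs no waiting. A leaf then enters a state meaning "I am a leaf and I expect my unique neighbor to be the center"; the center, seeing $\geq 2$ neighbors, waits (in a non-accepting, non-rejecting state) until \emph{all} its neighbors report "leaf" -- which it can detect because "not all neighbors are leaves" means "at least one neighbor is in a non-leaf state", a set-detection (hence multiset) query. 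In a star, the center then halts-accepts and simultaneously broadcasts "accept" to the leaves, who halt-accept upon receiving it. If the graph is not a star, there are two failure modes to handle: (i) two adjacent nodes both have degree $\geq 2$ -- then each is a center that sees a non-leaf neighbor forever, so I route this to a halting-reject that propagates; (ii) a leaf's unique neighbor is itself a leaf (the graph is a single edge) -- excluded by our convention -- or more generally the graph has diameter $> 2$, in which case some center has a neighbor that is a center, again case (i), or some node of degree $1$ is attached to a node whose other neighbors prevent it being the unique center; in all such cases at least one node detects a local violation and initiates a rejecting flood. Since the graph is connected, the rejecting flood reaches every node, and by halting it is permanent, so all weakly fair runs reject.

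\textbf{Main obstacle.} For the halting automaton the real work is verifying that \emph{every} non-star triggers a detectable local violation that some node can flag in bounded time -- one must enumerate the ways a connected graph can fail to be a star (a second high-degree node; a path of length $\geq 3$; etc.) and confirm each produces a witness, then check the consistency condition holds for all weakly fair schedules (straightforward for halting floods, since a halted node stays halted). For the \Type{set}{stabilizing}{liberal}{strong} automaton the hard part is genuinely the symmetry-breaking-via-strong-fairness argument: making precise the claim that a degree-$1$ neighbor and a degree-$\geq 2$ neighborhood become distinguishable under strong fairness by a non-counting observer, and proving that the resulting automaton satisfies the consistency condition over \emph{all} strongly fair runs (not just one convenient one) -- this is where I expect to spend most of the effort, and it may be cleaner to have nodes explicitly count "number of distinct colors seen across a full strongly-fair cycle", which strong fairness makes well-defined in the limit.
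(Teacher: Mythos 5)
Your overall strategy for both automata matches the paper's: the stabilizing/strong-fairness automaton uses color-flipping so that a node with two or more neighbors eventually witnesses two differently-colored neighbors (by strong fairness), while a degree-one node never can; the halting automaton uses counting bound $\beta = 2$ to classify leaf versus non-leaf locally and then floods a rejection on any local violation. Two points need repair, though. First, and most concretely: the single-edge graph on two nodes is \emph{not} excluded by the paper's convention (graphs must have at least two nodes and be connected, so the single edge is a legal input and must be rejected, since no node has degree $\geq 2$). In your halting automaton every leaf ``expects its unique neighbor to be the center''; on the two-node graph both nodes classify themselves as leaves and each waits forever for a center that never appears, so the run is neither accepting nor rejecting and the consistency condition fails. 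The fix is exactly the paper's rule for this case: a degree-one node whose unique neighbor is in the leaf (or reject) state must itself move to reject.

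Second, in the stabilizing automaton you have the detection polarity backwards. A non-counting finite-state node can never \emph{positively} confirm ``my neighborhood is a single node'' from persistent monochromaticity: no finite observation rules out two perfectly synchronized neighbors. What it \emph{can} positively confirm is ``I have degree at least two,'' by witnessing two differently-colored neighbors at a single instant, which Lemma~\ref{lem:strongfairness} guarantees will eventually happen under strong fairness. The paper therefore makes \emph{center} (degree $\geq 2$) the positively detected estimate, and makes \emph{leaf} a tentative, revisable guess: a node guesses leaf when it sees only center-estimates of a single color, and retracts to a rejecting ``neither'' estimate if it ever later sees two colors. Stable consensus is precisely what licenses the tentative guess. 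Your suggestion to ``count distinct colors seen across a full strongly-fair cycle'' is not realizable in finite state; the one-instant dichromatic witness is the workable version of your idea. With these two adjustments your constructions coincide with the paper's.
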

\begin{proof}[Proof sketch] We give a \Type{set}{stabilizing}{liberal}{strong}-automaton that recognizes $\Stars$. 
The states of the automaton are pairs $(d, c)$, where $d \in \{\textit{leaf}, \textit{center}, \textit{unknown}, \textit{neither}\}$ is the \emph{estimate} of $v$, and $c \in \{ 0,1 \}$ is its \emph{color}. Every time a node is selected it flips its color. When a node with estimate \emph{unknown} sees two neighbors with different colors, it switches to \textit{center}, and if from then on it sees a neighbor with estimate \textit{center}, it moves to \textit{neither}. Strong fairness is crucial for correctness: by Lemma~\ref{lem:strongfairness}, it ensures that a node that is not a leaf will eventually be selected in a configuration in which at least two of its neighbors have different colors. 

Now we give a \Type{multiset}{halting}{liberal}{weak}-automaton with $\beta=2$ that recognizes $\Stars$. Since $\beta=2$, a node can determine for each state~$q$ if it has $0$, $1$, or at least~$2$ neighbors in $q$. The automaton's states are $\{ \textit{init}, \textit{leaf}, \textit{non-leaf}, \textit{accept}, \textit{reject} \}$. Initially all nodes are in state $\textit{init}$. The nodes update their estimates depending on the number of neighbors (0, 1, or at least 2) in each state. 
\end{proof}

\paragraph*{Symmetry breaking: the language $\Triangle$}

We show that the language $\Triangle$ requires both acceptance by stable consensus and strong fairness to be recognizable. Intuitively, both of them are required to distinguish $C_3$ from arbitrarily long cycles that repeat the labeling of~$C_3$ cyclically.  

\begin{restatable}{proposition}{PropCthreeSeparation}
\label{prop:Cthree-separation}
$\Triangle$ is recognizable by a \Type{set}{stabilizing}{liberal}{strong}-automaton, but neither by \Type{multiset}{stabilizing}{*}{weak}-automata nor by \Type{multiset}{halting}{*}{strong}-automata.
\end{restatable}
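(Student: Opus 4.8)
I would prove the three assertions in order of increasing difficulty: first the impossibility for weakly fair models, then the impossibility for halting models, and finally the construction of the \Type{set}{stabilizing}{liberal}{strong}-automaton, which is the main obstacle.

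\textbf{No \Type{multiset}{stabilizing}{*}{weak}-automaton recognizes $\Triangle$.} The Kronecker cover of the labelled triangle $C_3^{012}$ is the six-cycle $C_6^{012012}$: one checks directly that the six vertices $\Tuple{v_i, b}$ form a single cycle, labelled $0,1,2,0,1,2$ around it. Since $C_3^{012}$ is non-bipartite, Lemma~\ref{lem:butterfly-unfolding-connected} confirms that this cover is connected, hence a legal input. By Lemma~\ref{lem:butterfly-unfolding-invariance}, every \Type{*}{*}{*}{weak}-automaton accepts $C_3^{012}$ if and only if it accepts $C_6^{012012}$; as the latter graph is not in $\Triangle$, no \Type{*}{*}{*}{weak}-automaton recognizes $\Triangle$, and \Type{multiset}{stabilizing}{*}{weak}-automata are a special case.

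\textbf{No \Type{multiset}{halting}{*}{strong}-automaton recognizes $\Triangle$.} By the collapse results (Theorems~\ref{thm:simulate-liberal-fairness-with-exclusive-fairness}, \ref{thm:simulate-exclusivity-with-fairness} and~\ref{lem:synchronizer}, together with Remark~\ref{rem:fairness}), all three selection variants are subsumed by \Type{multiset}{halting}{liberal}{strong}, so it suffices to rule this class out. Suppose $A$ is such an automaton with $L(A) = \Triangle$. Take any strongly fair run of $A$ on $C_3^{012}$; by the consistency condition it accepts, and since $A$ is halting, from some finite time $T$ on all three nodes sit in accepting states and never move again. Let $\pi \colon C_6^{012012} \to C_3^{012}$ be the label-preserving two-fold covering that folds the six-cycle onto the triangle. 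Lift the first $T$ selections along $\pi$ (activating both preimages of each activated node) and append an arbitrary strongly fair suffix; the result is a strongly fair schedule of $C_6^{012012}$, and a routine induction — using that a covering map is a local isomorphism and hence preserves $\beta$-bounded neighbourhood multisets — shows the resulting run equals the $C_3$-run composed with $\pi$ up to time $T$. Hence at time $T$ every node of $C_6^{012012}$ is in a frozen accepting state, so this fair run accepts; by consistency $A$ accepts $C_6^{012012} \notin \Triangle$, a contradiction.

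\textbf{A \Type{set}{stabilizing}{liberal}{strong}-automaton for $\Triangle$ (the main obstacle).} Using the equivalence of liberal and exclusive selection under strong fairness (Theorems~\ref{thm:simulate-liberal-fairness-with-exclusive-fairness} and~\ref{thm:simulate-exclusivity-with-fairness}), I would design the automaton assuming an exclusive scheduler, so that at most one node moves per step. It has two layers. The \emph{local layer} equips every node with a colour bit that flips on each activation (as in Proposition~\ref{prop:AsF-recog-stars}); a node irrevocably broadcasts \textsc{reject} as soon as it ever sees a neighbour with its own label, a set of neighbour-labels other than $\{\ell-1, \ell+1\} \bmod 3$ (where $\ell$ is its label), or two neighbours of equal label in different colours. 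By Lemma~\ref{lem:strongfairness}, strong fairness guarantees that any two equally-labelled neighbours are eventually seen in different colours, so a run escapes the local layer precisely when the input is a cycle $C_{3k}$ with cyclic labelling $(012)^k$. The \emph{token layer} must then separate $k = 1$ from $k \ge 2$: each label-$0$ node, while idle, emits a single token, which is handed from a node of label $i$ to its (unique) neighbour of label $i{+}1 \bmod 3$ around the cycle, and a node broadcasts \textsc{reject} whenever it holds one token and a second token reaches it. In $C_3^{012}$ there is exactly one label-$0$ node, hence one circulating token, no duplicate is ever observed, and the run stably accepts; in $C_{3k}^{(012)^k}$ with $k \ge 2$ there are at least two tokens, and Lemma~\ref{lem:strongfairness} forces a reachable configuration in which one token catches up with another, triggering \textsc{reject}. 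Verifying the consistency condition then reduces to these two facts — no spurious \textsc{reject} on $C_3^{012}$, an unavoidable one on every larger admissible cycle — which is exactly the point where both stabilizing acceptance and strong fairness are indispensable, as the two impossibility results make precise. The delicate parts, on which I expect to spend the most effort, are the bookkeeping of the token handshake so that no false duplicate is ever signalled on $C_3^{012}$, and the application of Lemma~\ref{lem:strongfairness} ensuring that a duplicate is always signalled on the larger cycles.
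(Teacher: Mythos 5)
Your proof is correct, and two of its three parts take genuinely different routes from the paper's. For the non-recognizability by \Type{multiset}{stabilizing}{*}{weak}-automata, you observe that the Kronecker cover of the labeled triangle is exactly the labeled hexagon $C_6$ with labels $0{-}1{-}2{-}0{-}1{-}2$ and then invoke Lemma~\ref{lem:butterfly-unfolding-invariance} as a black box; the paper instead argues from scratch by comparing the synchronous runs on the triangle and the hexagon node by node. Your route is shorter and reuses existing machinery, while the paper's is self-contained (and, incidentally, your observation makes explicit that the paper's hexagon argument is the Kronecker-cover lemma in disguise). For the \Type{multiset}{halting}{*}{strong} case your argument coincides with the paper's --- lift the schedule of an accepting run on the triangle to the hexagon along the two-fold cover up to the halting time, append an arbitrary strongly fair suffix, and use that halted nodes are frozen --- but your preliminary reduction of the selection wildcard to the liberal case via the collapse theorems is cleaner than the paper's direct ``duplication'' of selections, which read literally produces two-element selections that an exclusive scheduler does not permit. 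For the positive construction, your first phase is the paper's (color flipping plus Lemma~\ref{lem:strongfairness} to certify that the input is a cycle with cyclic labeling $0{-}1{-}2$, with the minor variant of a $2$-valued color and a ``two equally labeled neighbors'' test in place of the paper's $3$-valued color and degree test), but your second phase differs: you count the label-$0$ nodes by letting each emit a token that circulates in a fixed direction and rejecting on a token collision, whereas the paper counts the label-$2$ nodes by having each send signals around the cycle and checking the direction from which the next signal returns. Both second phases are sketched at the same level of detail, both hinge on Lemma~\ref{lem:strongfairness} to force the distinguishing event on long cycles, and your extra step of designing under exclusivity and transferring via Theorem~\ref{thm:simulate-exclusivity-with-fairness} mirrors what the paper does in Proposition~\ref{prop:CasF-recog-stars}. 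In short, the decomposition and all key ingredients are sound; what each approach buys is economy (yours, by leaning on the cover lemma and the collapse theorems) versus self-containedness (the paper's).
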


\begin{proof}[Proof sketch]
  Our \Type{set}{stabilizing}{liberal}{strong}-automaton for $\Triangle$
  checks two conditions:
  first,
  that the input graph is a cycle with cyclic labeling ${0{-}1{-}2}$,
  and second,
  that it contains exactly one node labeled by~$2$
  (which implies that the cycle has length~$3$).
  For both conditions,
  we use a similar trick as in Proposition~\ref{prop:AsF-recog-stars},
  relying on acceptance by stable consensus and strong fairness
  to eventually break symmetry between otherwise indistinguishable nodes.
  To verify the second condition,
  each node labeled by~$2$ successively sends signals
  in both directions through the cycle,
  and checks that those signals always come back from the expected direction.

  For the second part of the claim,
  we show that
  \Type{multiset}{stabilizing}{*}{weak}- and
  \Type{multiset}{halting}{*}{strong}-automata
  cannot distinguish~$\triangle$ from~$\hexagon$,
  the hexagon whose nodes are labeled by ${0{-}1{-}2{-}0{-}1{-}2}$
  (and back to~$0$).
  To do so,
  given a fair run~$\Run_3$ of such an automaton on~$\triangle$,
  we construct a fair run~$\Run_6$ on~$\hexagon$
  that “duplicates” the behavior of~$\Run_3$.
  In the case of \Type{multiset}{halting}{*}{strong}-automata,
  this duplication is performed only until
  $\Run_3$ has reached a halting configuration
  (because otherwise $\Run_6$ would violate the strong fairness constraint).
\end{proof}

\paragraph*{Counting neighbors modulo a number: the language $\EvStars$}

Since counting automata can only count up to a threshold~$\beta$, no node can directly observe that
it has an even number of neighbors. This makes the language $\EvStars$ rather difficult to recognize. 
We now show that the combination of counting and strong fairness can do the job. 
The proof also provides a good example where exclusivity helps to design an algorithm.

\begin{restatable}{proposition}{PropCasFRecogStars}
\label{prop:CasF-recog-stars}
$\EvStars$ is recognizable by a \Type{multiset}{halting}{liberal}{strong}-automaton.
\end{restatable}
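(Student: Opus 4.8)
The plan is to design a \Type{multiset}{halting}{exclusive}{strong}-automaton that recognises $\EvStars$, and then invoke Theorem~\ref{thm:simulate-exclusivity-with-fairness} to obtain an equivalent \Type{multiset}{halting}{liberal}{strong}-automaton. Exclusive selection is what makes the design tractable, because it lets the centre of a star process its leaves one at a time. Throughout, I take counting bound $\beta = 2$, so that each node can tell whether it has $0$, $1$, or at least $2$ neighbours in any given state; in particular, a node can detect whether its own degree equals $1$ or is at least $2$.

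The automaton runs two phases. \textbf{Structure phase.} All nodes start in a state $\mathit{init}$; when activated, an $\mathit{init}$-node of degree $1$ becomes a \emph{leaf} and one of degree at least $2$ becomes a \emph{centre} (and begins the counting phase below). A leaf that ever sees its unique neighbour in a leaf state rejects; since a leaf can have a leaf as neighbour only in the single-edge graph, this rejects exactly that graph. A centre that ever sees a neighbour in a centre state also rejects. One checks (using that internal nodes of any shortest path have degree at least $2$, and that in a connected graph with a single node of degree $\geq 2$ that node is adjacent to every other node) that these two checks together reject exactly the connected graphs that are not stars, and do so in every fair run by weak fairness. A designated rejecting state is broadcast so that, once any node rejects, every node eventually does and the run halts.

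\textbf{Counting phase.} The centre maintains a parity bit $p \in \Set{0,1}$, initially $0$, while each leaf carries a flag in $\Set{\mathit{uncounted}, \mathit{marked}, \mathit{counted}}$, initially $\mathit{uncounted}$. The centre alternates between \emph{inviting} and \emph{collecting} modes (plus an auxiliary \emph{abort} mode). While the centre is inviting, an $\mathit{uncounted}$ leaf that sees it moves to $\mathit{marked}$; when the centre sees at least one $\mathit{marked}$ leaf, it moves to collecting mode. While the centre is collecting, a $\mathit{marked}$ leaf moves to $\mathit{counted}$; when the centre sees \emph{exactly one} $\mathit{marked}$ leaf it flips $p$ and returns to inviting mode, whereas if it sees \emph{two or more} $\mathit{marked}$ leaves it enters abort mode, in which every $\mathit{marked}$ leaf returns to $\mathit{uncounted}$ before the centre resumes inviting. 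Leaves are pairwise non-adjacent, so only the centre observes them, but this handshake lets it serialise them, and the number of $\mathit{counted}$ leaves never decreases. When the centre, while inviting, finds all of its neighbours $\mathit{counted}$, it has processed every leaf exactly once; it accepts if $p = 0$ (the number of leaves is even) and rejects otherwise, announces its verdict so that every leaf follows suit, and the run halts.

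The main obstacle is proving that the counting phase terminates, and this is exactly where strong fairness is used. Suppose a fair run on a star that passed the structure phase stayed in the counting phase forever; then, by finiteness of the configuration space, some configuration $D$ recurs infinitely often. From $D$ there is a finite sequence of permitted selections that either reaches the centre's verdict or strictly increases the number of $\mathit{counted}$ leaves (wake one uncounted leaf, mark it, have the centre collect it). By Lemma~\ref{lem:strongfairness} this sequence occurs after some recurrence of $D$; but since the number of $\mathit{counted}$ leaves never decreases, $D$ can then never recur again, a contradiction. Hence on an even star every fair run halts in the all-accepting configuration, while on an odd star or a non-star every fair run halts in the all-rejecting configuration, so the exclusive automaton recognises $\EvStars$; Theorem~\ref{thm:simulate-exclusivity-with-fairness} then delivers the desired \Type{multiset}{halting}{liberal}{strong}-automaton.
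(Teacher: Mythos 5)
Your overall strategy is exactly the paper's: design a \Type{multiset}{halting}{exclusive}{strong}-automaton with $\beta = 2$ that first checks starhood and then lets the centre count the leaves modulo~$2$ via a handshake, use Lemma~\ref{lem:strongfairness} to argue the handshake makes progress, and finish with Theorem~\ref{thm:simulate-exclusivity-with-fairness}. The structure phase is fine (it is essentially the \Type{multiset}{halting}{liberal}{weak}-automaton of Proposition~\ref{prop:AsF-recog-stars}). The gap is in the counting handshake, which as specified is not race-free even under exclusive selection --- and the automaton must behave correctly under \emph{every} strongly fair exclusive schedule. Concretely: you let the centre flip $p$ and return to inviting mode upon \emph{seeing} exactly one marked leaf, while the marked leaf independently moves to $\mathit{counted}$ upon seeing the centre in collecting mode. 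If the scheduler activates the centre before the leaf, the centre returns to inviting while the leaf is still $\mathit{marked}$; since your rule lets a marked leaf advance only while the centre is collecting, the leaf stays marked, the centre re-enters collecting and flips $p$ again, so the same leaf is counted arbitrarily often and the run never halts. If instead the leaf is activated first, it becomes $\mathit{counted}$ before $p$ is flipped, and the centre then finds itself collecting with zero marked leaves, a case your rules do not cover; that leaf's contribution to the parity is silently lost. The abort check is also evadable: with two marked leaves, one can slip to $\mathit{counted}$ while the centre is still waiting to be selected in collecting mode, so the centre sees exactly one marked leaf, flips $p$ once, and two leaves end up processed. Your termination argument via a recurring configuration and the monotone count of $\mathit{counted}$ leaves is sound in form, but it presupposes the invariant that the number of parity flips equals the number of counted leaves, which these races destroy.

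The fix is the one the paper implements. The parity flip must be fused with the inviting-to-collecting transition: the centre flips $p$ at the very step in which it observes exactly one marked (there: \textit{visible}) leaf, and exclusivity guarantees that no leaf changes its status in that same step, so exactly one leaf is registered. The registered leaf must remain observable until it has itself observed the centre in the collecting phase, at which point it moves to a distinct \textit{dead} state; and the centre may return to inviting only once it sees that no leaf is still marked, i.e.\ all leaves are either unmarked or dead. With these gating conditions the invariant holds, the rest of your argument (strong fairness for progress, halting and broadcasting the verdict once all leaves are dead) goes through, and Theorem~\ref{thm:simulate-exclusivity-with-fairness} yields the desired \Type{multiset}{halting}{liberal}{strong}-automaton.
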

\begin{proof}[Proof sketch]
In Proposition \ref{prop:AsF-recog-stars} we have exhibited a \Type{multiset}{halting}{liberal}{weak}-automaton $A$ recognizing $\Stars$. 
We now give a  \Type{multiset}{halting}{exclusive}{strong}-automaton $B$ that uses counting, exclusivity, and strong fairness to further decide if the number of leaves is even. Loosely speaking, $B$ first executes $A$; if $A$ rejects, then $B$ rejects, because the graph is not even  a star. If $A$ accepts, then $B$ enters a new phase during which it counts the number of leaves modulo 2. 
By Theorem~\ref{thm:simulate-exclusivity-with-fairness},
$B$ is equivalent to a \Type{multiset}{halting}{liberal}{strong}-automaton.

We can assume that when  $A$ accepts, all nodes are labeled with either \textit{leaf} or
\textit{center} (the unique non-leaf). We give an informal description of $B$. Leaves can be in states \textit{visible}, \textit{invisible}, \textit{dead}, \textit{even}, or \textit{odd}. While leaves have not been counted by the center, they alternate between the states \textit{visible} and \textit{invisible}. The center only increments its modulo-$2$ counter if exactly one leaf is \textit{visible}. After a leaf is counted, it moves to \textit{dead}. When all leaves become dead, i.e., when they have all been counted, the center decides whether to accept or reject; the leaves read the decision from the counter, and move to  \textit{even} or \textit{odd} accordingly. 
\end{proof}

The next two results show that recognizing $\EvStars$ needs both counting and strong fairness.

\begin{proposition}
\label{prop:CA*f-cannot-even-stars}
$\EvStars$ is not recognizable by \Type{multiset}{stabilizing}{*}{weak}-automata.
\end{proposition}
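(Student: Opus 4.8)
The plan is an indistinguishability (pumping) argument that exploits the bounded counting power of the machine: a star with more than $\beta$ leaves looks, from the point of view of its center, exactly like a star with $\beta$ leaves, and in a \emph{synchronous} run all leaves behave identically. Since every \Type{multiset}{stabilizing}{exclusive}{weak}- and every \Type{multiset}{stabilizing}{synchronous}{weak}-automaton can be simulated by a \Type{multiset}{stabilizing}{liberal}{weak}-automaton (Theorems~\ref{lem:synchronizer} and~\ref{thm:simulate-exclusive-weak-with-liberal-weak}), it suffices to prove the claim for liberal selection. So suppose, for contradiction, that $A = (M, \Scheduler)$ is a \Type{multiset}{stabilizing}{liberal}{weak}-automaton with $L(A) = \EvStars$, where $M$ has counting bound $\beta$. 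For $k \geq 2$ let $S_k$ denote the (unlabeled) star with one center of degree $k$ and $k$ leaves of degree~$1$; then $S_k \in \Stars$, and $S_k \in \EvStars$ if and only if $k$ is even.

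First I would analyze the synchronous run on $S_k$. The schedule $V^\omega$ is weakly fair, so by the consistency condition $A$ accepts $S_k$ if and only if the synchronous run $\Run_k = (\Configuration_0, \Configuration_1, \dots)$ of $M$ on $S_k$ (the run scheduled by $V^\omega$) is accepting. I claim, by induction on $t$, that in $\Run_k$ all $k$ leaves share a common state $q_t$ and the center is in a state $p_t$, where $q_0 = p_0$ is the common initial state $\delta_0(a)$, where $q_{t+1} = \delta(q_t, \mu_t)$ with $\mu_t$ the $\beta$-bounded multiset that maps $p_t$ to $1$ and every other state to $0$ (a leaf's only neighbor is the center), and where $p_{t+1} = \delta(p_t, \nu^k_t)$ with $\nu^k_t$ the $\beta$-bounded multiset that maps $q_t$ to $\min\{\beta, k\}$ and every other state to~$0$. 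The induction step is immediate from the definition of $\mathit{succ}_\delta$ under the selection $V$.

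Finally I would compare $S_k$ with $S_{k+1}$. If $k \geq \beta$, then $\min\{\beta, k\} = \beta = \min\{\beta, k+1\}$, so $\nu^k_t = \nu^{k+1}_t$ for all $t$; hence the sequences $(q_t)_t$ and $(p_t)_t$ coincide for $S_k$ and $S_{k+1}$. A configuration of $\Run_k$ is accepting exactly when $q_t \in Y$ and $p_t \in Y$, a condition that does not depend on $k$, so $\Run_k$ is accepting if and only if $\Run_{k+1}$ is; that is, $A$ accepts $S_k$ if and only if it accepts $S_{k+1}$. Now fix any even $k \geq \max\{\beta, 2\}$: then $k+1 \geq \beta$ as well, $A$ accepts $S_k$ if and only if it accepts $S_{k+1}$, yet $S_k \in \EvStars$ and $S_{k+1} \notin \EvStars$, contradicting $L(A) = \EvStars$.

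I do not expect a genuine obstacle here; the proof is a short indistinguishability argument. The two points that require care are the reduction to liberal selection — the only reason the earlier collapse theorems are invoked, since $V^\omega$ is not an exclusive schedule — and stating the inductive invariant correctly: it relies on each leaf having degree exactly~$1$, so that a leaf's transition depends only on the center's state and not on how many other leaves there are, and it crucially needs $k \geq \beta$, so that the center's $\beta$-bounded view of its neighborhood cannot tell $S_k$ from $S_{k+1}$.
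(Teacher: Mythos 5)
Your proof is correct and follows essentially the same route as the paper: both compare the synchronous runs on two consecutive stars whose number of leaves is at least the counting bound~$\beta$, and use the symmetry of the leaves plus the center's $\beta$-bounded view to show the two runs visit identical state sequences, contradicting the consistency condition. Your explicit reduction of the exclusive and synchronous cases to liberal selection via Theorems~\ref{lem:synchronizer} and~\ref{thm:simulate-exclusive-weak-with-liberal-weak} is a small point of added care that the paper's proof leaves implicit.
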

\begin{proof}
We show that for every \Type{multiset}{stabilizing}{*}{weak}-automaton $A$ there exist
stars $G$ and $G'$ such that exactly one of $G$ and $G'$ belongs to \(S_{even}\), but $A$
either accepts both of them or rejects both of them. Let $\beta \geq 1$ be $A$'s counting bound, and let $G$ and $G'$ be the stars with $\beta+1$ and $\beta+2$ leaves,
respectively.
Now consider the synchronous runs~$\Run$ and~$\Run'$ of~$A$ on~$G$ and~$G'$.
By symmetry,
and since the number of leaves exceeds~$\beta$ in both~$G$ and~$G'$,
at every time $t \in \Naturals$,
the center is in the same state in~$\Run$ and~$\Run'$,
and likewise all leaves are in the same state.
So the sequences of states visited by the center and the leaves
are the same in both~$\Run$ and~$\Run'$,
and therefore $\Run$ is accepting if{}f $\Run'$ is accepting.
\end{proof}

\begin{restatable}{proposition}{PropCASFCannotEvenStars}
\label{prop:CA*F-cannot-even-stars}
$\EvStars$ is not recognizable by \Type{set}{stabilizing}{*}{strong}-automata.
\end{restatable}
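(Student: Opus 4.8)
The plan is to show that a \Type{set}{stabilizing}{*}{strong}-automaton cannot distinguish between the even star $G$ with $2k$ leaves and the odd star $G'$ with $2k+1$ leaves, for $k$ large enough relative to the number of states of the automaton. The point is that a non-counting automaton applied to a star behaves very symmetrically: all leaves that are in the same state and have the same neighborhood are interchangeable, and the center only sees the \emph{set} of states present among its leaves, not how many leaves occupy each state. So I would first argue that in any configuration of a run on a star, the center's transition depends only on which leaf-states are currently occupied (a subset of $Q$), and each leaf's transition depends only on its own state and the center's state. This means a configuration is essentially described by the center's state together with the partition of the leaves into state-classes, and the dynamics on this abstracted description is independent of the exact multiplicities, provided no class is "used up".

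\medskip
The key step is a pumping/padding argument. Fix a \Type{set}{stabilizing}{*}{strong}-automaton $A$ with state set $Q$, $|Q| = m$. I would like to find a single finite sequence of selections (a "schedule prefix") that can be replayed on both $G$ and $G'$, leading to configurations that are "the same" from $A$'s non-counting viewpoint, and then extend it to a full strongly fair schedule on each graph in a synchronized way so that the two runs stay in lockstep forever. Here is where strong fairness is the obstacle rather than a help: unlike the weak-fairness separation in Proposition~\ref{prop:CA*f-cannot-even-stars}, here I cannot just run synchronously, because the synchronous schedule need not be strongly fair. Instead I would build the two schedules simultaneously, step by step: maintain the invariant that the abstracted descriptions of the two current configurations agree (same center state, same set of leaf-states, and for each leaf-state the corresponding leaf-classes in $G$ and $G'$ differ in size by exactly one — with the "extra" leaf of $G'$ always kept in whichever class is currently largest). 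At each step I decide which selection to apply in $G$ based on what strong fairness of $G$ demands next (enumerate all finite selection sequences over $2^{V(G)}$ and schedule the next one that is "due"), and I mirror it in $G'$ by selecting the corresponding leaves plus, if needed, the extra leaf, so that the abstracted descriptions still agree afterward. The fact that the extra leaf can always be "hidden" inside a same-state class is what makes the mirroring possible; this requires $2k$ (the smaller number of leaves) to be at least as large as $m$ so that some state-class always has $\ge 2$ members, but in fact we just need $\ge 1$ per occupied class, which is automatic, plus one class with a spare — and since there are only $m$ states, some occupied class has at least $2k/m \ge 2$ leaves once $k \ge m$.

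\medskip
With such a pair of synchronized strongly fair runs in hand, they visit, at corresponding times, configurations with the same abstracted description; in particular a configuration of the run on $G$ is accepting (all leaves in $Y$, center in $Y$) if and only if the corresponding configuration on $G'$ is accepting, and similarly for rejecting. Hence the run on $G$ is accepting iff the run on $G'$ is, so by the consistency condition $A$ accepts $G$ iff it accepts $G'$. Since exactly one of $G$, $G'$ lies in $\EvStars$, the automaton $A$ does not recognize $\EvStars$. Finally, the restriction to \SelectionType{liberal} selection in the statement is without loss of generality on the side we need: by Statements~4 and~5 of Lemma~\ref{lem:simple} (or directly, since our construction is phrased in terms of arbitrary permitted selections), the same argument goes through for exclusive and synchronous selection, the only change being the set of selection sequences one must enumerate for strong fairness; the hiding trick for the spare leaf is unaffected.

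\medskip
\textbf{Main obstacle.} The delicate part is the bookkeeping that keeps the two runs in lockstep while simultaneously satisfying strong fairness on \emph{both} graphs: I must interleave "obligations from $G$'s strong fairness" and "obligations from $G'$'s strong fairness" into the common construction, and check that mirroring a selection never forces a leaf-class in one graph to empty out while its counterpart in the other graph does not (which would break the abstraction). I expect this to work because whenever a class would be emptied, it is emptied in both graphs by the mirrored selection — the only asymmetry, the single spare leaf, can be rerouted to a different class beforehand by a preparatory selection — but making this routing precise, and verifying that inserting these preparatory selections does not jeopardize strong fairness, is the technical heart of the proof.
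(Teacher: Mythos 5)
There is a genuine gap, and it sits exactly where you locate it: the claim that the two runs can be kept in lockstep \emph{forever} while both schedules are strongly fair. The problematic direction is when the larger star~$G'$ drives and the smaller star~$G$ must mirror. Strong fairness on~$G'$ forces you to apply, infinitely often, every finite sequence of selections of~$G'$ — in particular selections that remove exactly one leaf from a state-class. Suppose the spare leaf sits in a class $p$ that has one member in~$G$ and two in~$G'$, the forced selection moves one of the two $G'$-leaves from $p$ to $p'=\delta(p,\{q\})$, and $p'$ is not currently occupied in~$G$. Then every mirror choice on~$G$ breaks the abstraction: selecting the unique $p$-leaf of~$G$ empties class $p$ in~$G$ but not in~$G'$, while selecting nothing leaves $p'$ unoccupied in~$G$ but occupied in~$G'$; either way the two centers see different \emph{sets} of states. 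Your proposed repair — "reroute the spare to a different class beforehand by a preparatory selection" — is not available, because the machine is deterministic: a leaf in state $p$ with the center in state $q$ can only move to $\delta(p,\{q\})$ when selected. The scheduler controls \emph{when} a leaf moves, never \emph{where}, so you cannot steer the spare into a large class. Since an adversarial automaton can be designed to drive classes down to singletons (and strong fairness obliges you to execute the class-splitting selections it needs), the invariant cannot be maintained in general.

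The paper avoids this trap by never attempting an infinite lockstep. It copies the run of the $n$-leaf star onto the $(n{+}1)$-leaf star only for a \emph{finite} prefix, up to an accepting configuration $C_s$ that is visited infinitely often; Lemma~\ref{lem:strongfairness} then gives the crucial one-sided fact that no rejecting configuration is reachable from $C_s$. The transfer back from the larger star to the smaller one is not done by mirroring runs at all, but by a reachability argument: a monotonicity lemma (duplicating leaves preserves reachability, which is exactly the direction "small drives, large mirrors" — the direction that works), Dickson's lemma to show that the set of $\preceq$-minimal configurations from which a rejecting configuration is reachable is finite and hence of bounded size $m$, and a pigeonhole argument (taking $n\geq m\lvert Q\rvert$) showing that $C_s$ dominates such a minimal configuration whenever its copy on the larger star does. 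That contradicts the unreachability of rejecting configurations from $C_s$. So the two proofs share only the opening move (pad the even star with one extra leaf imitating a leaf in the most populous state); the infinite-horizon part of your argument needs to be replaced by something like the paper's well-quasi-order machinery.
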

\begin{proof}[Proof sketch]
Given a \Type{set}{stabilizing}{*}{strong}-automaton~$A$, the proof identifies an even number~$n$, depending on~$A$, such that
if $A$ accepts the star with $n$ leaves, then it cannot reject the star with $n+1$ leaves. The proof is involved, and can be found in the Appendix.
\end{proof}


\section{Expressive power}
\label{sec:pprotocols}
As a first application of our results,
we investigate the expressivity of our models
for graph languages that depend only on the labeling function of a graph,
and not on its topology.

Given a $\Alphabet$-labeled graph $G=(V,E,\lambda)$,
where $\Alphabet = \{ \ell_1, \ldots, \ell_k\}$,
let $\#_G \colon \Alphabet \to \Naturals$ be the mapping
that assigns to each label $\ell$
the number $\#_G(\ell)$ of nodes of $V$ such that $\lambda(v) = \ell$.
A language is \emph{Presburger-definable}
if there is a formula $\varphi(x_1, \ldots, x_k)$ of Presburger arithmetic
such that a $\Alphabet$-labeled graph $G$ belongs to the language
if and only if $\varphi(\#_G(\ell_1), \ldots, \#_G(\ell_k))$ holds.
An example of such a language is~$\Black$,
the set of graphs that contain a black node.

We show that \Type{multiset}{stabilizing}{liberal}{strong}-automata
recognize all Presburger languages,
but none of the other six classes do.
The negative part of the result follows easily from the table in
Figure~\ref{fig:lattice-CAF}.

\begin{restatable}{proposition}{PropNotCoverPresburger}
  \label{prop:not-cover-presburger}
  There exist Presburger-definable languages
  that are not recognizable by
  \Type{set}{*}{*}{*}-,
  \Type{*}{halting}{*}{*}-, or
  \Type{*}{*}{*}{weak}-automata.
\end{restatable}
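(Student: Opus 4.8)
The plan is to exhibit specific Presburger-definable languages that are ruled out for each of the three families of classes, using separations already established in Figure~\ref{fig:lattice-CAF}. Observe first that every language appearing in Figure~\ref{fig:lattice-CAF} except $\Stars$ is in fact Presburger-definable in the sense defined here, because it depends only on the label-count vector: $\Black$ is defined by $x_{\mathit{black}} \geq 1$; the single-graph language $\Triangle$ over the alphabet $\{0,1,2\}$ is defined by $x_0 = x_1 = x_2 = 1$ (any connected graph with exactly one node of each label is forced to be $\triangle$); and $\EvStars$, viewed as a language over a suitable labelling that marks leaves and the center — or, if one insists on unlabelled stars, this part is handled separately — is captured by a parity constraint on the leaf count. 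So it suffices to read off from the table which classes fail to recognize which of these Presburger languages.

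Concretely, I would argue as follows. For the \Type{set}{*}{*}{*} family: by Proposition~\ref{prop:CA*F-cannot-even-stars}, $\EvStars$ is not recognizable by \Type{set}{stabilizing}{*}{strong}-automata, hence (by Lemma~\ref{lem:simple}, Statements~2 and~3, which push everything down to the strongest non-counting class) not by any \Type{set}{*}{*}{*}-automaton; and $\EvStars$ is Presburger-definable. For the \Type{*}{halting}{*}{*} family: by Proposition~\ref{prop:ASf-non-trivial}, $\Black$ is not recognizable by any \Type{*}{halting}{*}{*}-automaton, and $\Black$ is Presburger-definable (indeed it is the running example given just before the proposition). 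For the \Type{*}{*}{*}{weak} family: by Proposition~\ref{prop:Cthree-separation}, $\Triangle$ is not recognizable by \Type{multiset}{stabilizing}{*}{weak}-automata, and by Lemma~\ref{lem:simple} this is the top of the weak-fairness cone once counting and stable consensus are allowed, so no \Type{*}{*}{*}{weak}-automaton recognizes $\Triangle$; and $\Triangle$ is Presburger-definable. Each of the three bullet cases is thus discharged by a single already-proven separation plus a one-line Presburger formula.

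The one subtlety — and the only place the argument needs care rather than bookkeeping — is making sure the chosen witness language really is Presburger-definable in the exact sense of the definition, i.e.\ that membership is equivalent to a statement about the label-count vector alone, with no hidden topological side condition. This is immediate for $\Black$ and $\EvStars$ but needs the observation for $\Triangle$ that, among \emph{connected} graphs (our standing convention), the vector $(x_0,x_1,x_2) = (1,1,1)$ forces the graph to be exactly $\triangle$: a connected graph on three nodes is either the path or the triangle, but the path $0{-}1{-}2$ is not cyclically labelled $0{-}1{-}2$, and in any case the language $\Triangle$ as defined contains only $\triangle$, so the equivalence ``$G \in \Triangle \iff x_0 = x_1 = x_2 = 1$'' holds over connected graphs, which is all the definition quantifies over. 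I would state this explicitly. The rest is purely a matter of citing the three separation results and the appropriate inclusions from Lemma~\ref{lem:simple} to propagate each non-recognizability claim to the whole named family; I expect no real obstacle beyond writing those propagations out cleanly.
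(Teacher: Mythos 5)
Your choice of witness for the \Type{*}{halting}{*}{*} case ($\Black$) is exactly the paper's, and is fine. The other two cases have a genuine gap: the witnesses you pick are not Presburger-definable in the sense of the definition, and this is precisely the difficulty the paper's proof is built to circumvent. For $\Triangle$, your own sanity check goes wrong: the labeled path $0{-}1{-}2$ is a connected graph with at least two nodes (hence a legal input under the paper's convention), it satisfies $x_0 = x_1 = x_2 = 1$, yet it does not belong to $\Triangle$, which contains only the cycle $\triangle$. So the equivalence ``$G \in \Triangle \iff x_0 = x_1 = x_2 = 1$'' fails, and $\Triangle$ cannot serve as a Presburger-definable witness at all. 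For $\EvStars$ the situation is the same: it is a language of \emph{unlabeled} graphs, so the label-count vector is just the number of nodes, and any non-star with an odd number of nodes (e.g.\ a triangle or a path) has the same count vector as a star in $\EvStars$ but lies outside the language. Relabeling the nodes as ``leaf''/``center'' does not help, since a connected graph with one center-labeled node and an even number of leaf-labeled nodes need not be a star; moreover the separation results you want to invoke are proved for the unlabeled language, so they would not transfer. Your ``handled separately'' is exactly the part that needs an argument and does not get one.

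The missing idea is the paper's closure-under-intersection step. The paper takes as its Presburger-definable witness the language of graphs with an \emph{odd number of nodes} (definable by $\exists y.\ x_\ell = 2y+1$). By Proposition~\ref{prop:AsF-recog-stars}, both \Type{set}{stabilizing}{*}{strong}- and \Type{multiset}{stabilizing}{*}{weak}-automata recognize $\Stars$; since recognizable languages are closed under intersection (product construction) and $\Stars \cap \{\text{odd number of nodes}\} = \EvStars$ (a star with an even number of leaves has an odd number of nodes), recognizability of the odd-size language would imply recognizability of $\EvStars$, contradicting Propositions~\ref{prop:CA*f-cannot-even-stars} and~\ref{prop:CA*F-cannot-even-stars}. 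This single witness then covers both the \Type{set}{*}{*}{*} and the \Type{*}{*}{*}{weak} families after the propagation via Lemma~\ref{lem:simple} that you correctly describe. Without the intersection trick (or some substitute for it), neither of your two remaining cases goes through.
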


\begin{proof}
  By Proposition~\ref{prop:ASf-non-trivial},
  \Type{*}{halting}{*}{*}-automata cannot recognize the language~$\Black$,
  which is Presburger-definable.
  Furthermore,
  by Propositions~\ref{prop:AsF-recog-stars},
  \ref{prop:CA*f-cannot-even-stars}
  and~\ref{prop:CA*F-cannot-even-stars},
  \Type{set}{stabilizing}{*}{strong}- and
  \Type{multiset}{stabilizing}{*}{weak}-automata
  can recognize the language~$\Stars$ of star graphs
  but not the language~$\EvStars$ of stars with an even number of leaves.
  This implies that
  \Type{set}{stabilizing}{*}{strong}- and
  \Type{multiset}{stabilizing}{*}{weak}-automata
  cannot recognize
  the Presburger-definable language of graphs with an odd number of nodes,
  because
  the intersection of this language with~$\Stars$ is equal to~$\EvStars$,
  and
  languages recognizable by distributed automata are closed under intersection
  (by a standard product construction).
\end{proof}

For the positive part, we proceed in three steps:
First, following \cite{angluin2005stably} and Section 5 of~\cite{AADFP06}, we introduce \emph{graph population protocols}, a graph variant of the well-known population protocol model introduced in \cite{AngluinADFP04,AADFP06}. Then we recall a result of  \cite{AADFP06} showing that graph population protocols recognize all Presburger-definable languages.
Finally, we show that every graph population protocol can be simulated by a 
\Type{multiset}{stabilizing}{liberal}{strong}-automaton.

Our definition of graph population protocols is equivalent
to that of \cite{angluin2005stably,AADFP06},
but reuses the notation of Section~\ref{sec:prelim} as far as possible.
A \emph{graph population protocol}
$\varPi = \Tuple{Q, \delta_0, \delta, Y, N}$
is defined like a \Type{multiset}{stabilizing}{exclusive}{strong}-automaton
with machine $M = \varPi$,
except for the following differences:
\begin{itemize}
\item The transition function is of the form
  $\delta \colon Q^2 \to Q^2$.
\item A selection of a graph $G = \Tuple{V, E, \lambda}$
  is an ordered pair
  $\Selection = \Tuple{u, v} \in V^2$
  of \emph{adjacent} nodes
  (instead of a singleton $\Set{u} \subseteq V$),
  and the selection constraint on~$G$ is
  $\SetBuilder{\Tuple{u, v}}{\Set{u, v} \in E}$.
\item  $\Configuration_t(v)$ is defined inductively as follows,
  for $t \in \Naturals$ and $v \in V$:
  \begin{equation*}
    \Configuration_0(v) = \delta_0(\lambda(v))
    \quad \text{and} \quad
    \Configuration_{t+1}(v) =
    \begin{cases*}
      \delta\bigTuple{\Configuration_t(v), \Configuration_t(u)}_{\operatorname{fst}}
      & if $\Selection_t = \Tuple{v, u}$ for some~$u$, \\
      \delta\bigTuple{\Configuration_t(u), \Configuration_t(v)}_{\operatorname{snd}}
      & if $\Selection_t = \Tuple{u, v}$ for some~$u$, \\
      \Configuration_t(v)
      & otherwise,
    \end{cases*}
  \end{equation*}
  where $P_{\operatorname{fst}}$ and $P_{\operatorname{snd}}$
  denote the first and second component of a pair~$P$.
\end{itemize}
So, intuitively,
the scheduler selects two adjacent nodes,
which update their states according to~$\delta$.
The definitions of all other relevant notions remain the same.
This holds in particular for
acceptance by stable consensus and strong fairness
(which are baked into the model),
and the consistency condition.
Standard population protocols correspond to
graph population protocols on complete graphs,
where every pair of distinct nodes is connected by an edge.

It is shown in \cite{AADFP06} that standard population protocols recognize all Presburger-definable languages. Further, Theorem 7 of \cite{AADFP06} shows that every language recognized by population protocols is also recognized by graph population protocols. Loosely speaking, given a population protocol, one constructs the protocol on graphs in which, when an edge of the graph is selected, either the two nodes connected by it interact as in the population protocol, or they swap their states. By strong fairness, the states of the nodes can ``move around the graph'', and any pair of states eventually interacts infinitely often. The choice between interacting or swapping is nondeterministic, but it can be simulated by deterministic transitions (see \cite{AADFP06}). Therefore, in order to show that  \Type{multiset}{stabilizing}{*}{strong}-automata recognize all Presburger-definable languages, it suffices to simulate graph population protocols with distributed automata. As in the proof of Proposition~\ref{prop:CasF-recog-stars},
we make use of exclusivity to simplify the construction.

\begin{restatable}{proposition}{PropPopulationProtocols}
  \label{prop:population-protocols}
  For every graph population protocol
  there is an equivalent \Type{multiset}{stabilizing}{*}{strong}-automaton.
\end{restatable}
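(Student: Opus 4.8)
The plan is to reduce the claim to the exclusive case and then simulate the protocol by a handshake. By Theorem~\ref{thm:simulate-exclusivity-with-fairness} it suffices to construct, for a given graph population protocol $\varPi = \Tuple{Q, \delta_0, \delta, Y, N}$, an equivalent \Type{multiset}{stabilizing}{exclusive}{strong}-automaton~$B$; the \Type{multiset}{stabilizing}{liberal}{strong} variant then follows. The automaton~$B$ uses counting bound~$2$, so that every node can tell, for each state, whether it has no neighbour, exactly one neighbour, or at least two neighbours in that state.

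A state of~$B$ consists of a \emph{$\varPi$-component} $q \in Q$, recording the node's current state in~$\varPi$, together with a \emph{handshake component} carrying control information, the idea being to simulate the interactions of~$\varPi$ one at a time. When $B$'s exclusive scheduler activates an idle node~$u$, node~$u$ may \emph{initiate} a handshake by switching to a mode that advertises its $\varPi$-component~$q_u$. A later-activated idle neighbour~$v$ may \emph{respond}, but only after using the counting bound to check that it is entering a conflict-free handshake (exactly one neighbour advertising an initiation, and no neighbour in a conflicting intermediate mode); when it responds it records both~$q_u$ and its own $\varPi$-component~$q_v$. On~$u$'s next activation, $u$ reads~$q_v$ and updates its own $\varPi$-component to the first component of $\delta(q_u, q_v)$; on $v$'s following activation, $v$ observes that~$u$ has finished and updates its own $\varPi$-component to the second component, after which both are idle again. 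Whenever a node detects a conflict in its neighbourhood (two simultaneous initiators, or two responders), all participants abort and reset to idle without changing their $\varPi$-components. Since a $\varPi$-component is rewritten only when a handshake completes, the $\varPi$-components of~$B$ evolve along any run — modulo stuttering and transient mid-handshake configurations — exactly like the configurations of a run of~$\varPi$ on the same graph. Declaring a state of~$B$ accepting (rejecting) according to whether its $\varPi$-component lies in~$Y$ (in~$N$) makes $B$'s acceptance mirror that of~$\varPi$: once~$\varPi$ has reached a stable consensus, all $\varPi$-components of~$B$ are permanently in~$Y$, respectively in~$N$.

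Two things then remain. First, \emph{soundness}: every strongly fair exclusive run of~$B$ induces, via its $\varPi$-components, a legal run of~$\varPi$. Second, \emph{liveness}: the induced $\varPi$-run is \emph{strongly fair}, i.e.\ every finite sequence of $\varPi$-selections $\Tuple{u_1, v_1}, \dots, \Tuple{u_n, v_n}$ (ordered pairs of adjacent nodes) is performed by~$B$ infinitely often. For the latter I would argue as follows: the input graph is finite, so some configuration~$D$ of~$B$ recurs infinitely often along the run; from~$D$ one can first let all pending handshakes terminate, reaching a configuration in which every node is idle, and then run a schedule of bounded length that performs the handshakes for $\Tuple{u_1, v_1}, \dots, \Tuple{u_n, v_n}$ in this order. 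Applying Lemma~\ref{lem:strongfairness} to this bounded sequence of configurations starting at~$D$ yields that it occurs infinitely often, hence that $\Tuple{u_1, v_1}, \dots, \Tuple{u_n, v_n}$ is performed infinitely often. Combining the two, every fair run of~$B$ induces a strongly fair run of~$\varPi$; since all such runs of~$\varPi$ agree by its consistency condition, all fair runs of~$B$ agree, which gives both the consistency condition for~$B$ and $L(B) = L(\varPi)$.

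I expect the handshake itself to be the main obstacle. Because the devices are anonymous — a node sees only the \emph{multiset} of its neighbours' states and cannot single out a particular neighbour — a naive handshake is prone to aliasing, for instance two distinct initiators advertising the same $\varPi$-component, or a single node being chosen as partner by two of its neighbours at once. The delicate point is to arrange the counting-based conflict checks and the abort-and-reset mechanism so that no spurious $\varPi$-interaction is ever committed, while still keeping a conflict-free handshake for every ordered pair of adjacent nodes reachable infinitely often — this reachability being exactly the hypothesis needed to invoke Lemma~\ref{lem:strongfairness} in the liveness argument.
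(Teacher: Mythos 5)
Your proposal is correct and follows essentially the same route as the paper: reduce to exclusive selection via Theorem~\ref{thm:simulate-exclusivity-with-fairness}, then simulate each interaction of~$\varPi$ by a multi-step initiator/responder handshake with counting-based conflict detection and abort-on-error, and invoke Lemma~\ref{lem:strongfairness} on a recurring configuration to show the induced $\varPi$-run is strongly fair. The only differences are cosmetic bookkeeping (the paper has the initiator keep its old state in a pair $\Tuple{q,q'}$ so the responder can read it, rather than the responder recording $q_u$ at acceptance time), and your liveness argument is if anything spelled out in slightly more detail than the paper's.
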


\begin{proof}[Proof sketch]
  We present a simulation
  that runs a population protocol on a distributed automaton.
  To this end,
  the automaton has to simulate a scheduler that selects
  ordered pairs of adjacent nodes instead of arbitrary sets of nodes.
  For any pair $\Tuple{u, v}$ that is selected to perform a transition,
  let us call~$u$ the \emph{initiator} and~$v$ the \emph{responder}
  of the transition.
  By Theorem~\ref{thm:simulate-exclusivity-with-fairness},
  we may assume that
  the automaton's scheduler selects a single node in each step.

  The main idea is as follows:
  When a node $u$ is selected and sees that
  it can become the initiator of a transition,
  it declares its intention to do so by raising the flag~“?”.
  Then $u$ waits until
  some neighbor~$v$ is selected and raises the flag~“!”,
  which signals that $v$ wants to become the responder of a transition.
  If this happens,
  the next time $u$ is selected,
  it computes its new state according to
  the state of $v$ and the transition function of the population protocol,
  but also keeps its old state in memory
  so that $v$ can still see it.
  After that,
  $v$~also updates its state,
  and finally $u$ deletes its old state,
  which completes the transition.
  Throughout this protocol,
  the nodes verify that they have exactly one partner during each transition.
  If this condition is violated,
  they raise the error flag~“$\bot$” and abort their current transition.
\end{proof}

\begin{corollary}
\label{cor:cover-presburger}
\Type{multiset}{stabilizing}{*}{strong}-automata
recognize all Presburger-definable languages.
\end{corollary}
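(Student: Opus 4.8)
The plan is to assemble the statement from three ingredients: Proposition~\ref{prop:population-protocols}, just proved, together with two classical facts about population protocols from~\cite{AADFP06}. The corollary is then a short chaining argument, since all of the genuine work has already been done.

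First I would recall that standard population protocols compute exactly the semilinear predicates over multisets of input symbols, and that semilinearity coincides with Presburger-definability. Hence, for every Presburger formula $\varphi(x_1,\dots,x_k)$ there is a standard population protocol whose stable consensus, on a complete graph $G$ with label-count vector $(\#_G(\ell_1),\dots,\#_G(\ell_k))$, is ``accept'' precisely when $\varphi$ holds of that vector. Next I would invoke Theorem~7 of~\cite{AADFP06} to move from complete graphs to arbitrary connected graphs: as sketched in the paragraph preceding the corollary, an edge interaction is allowed either to simulate the original transition or to swap the two endpoints' states, and strong fairness then guarantees that each pair of states eventually meets and interacts as in the original protocol, so the graph population protocol stabilizes to the same answer as the original protocol run on the multiset of its initial states; the nondeterministic choice between swapping and interacting is removed by a standard determinization. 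This yields a graph population protocol $\varPi$ recognizing the language defined by $\varphi$, now over all connected graphs (with at least two nodes, by our convention).

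Finally I would apply Proposition~\ref{prop:population-protocols} to $\varPi$ to obtain an equivalent \Type{multiset}{stabilizing}{*}{strong}-automaton; by the wildcard convention this covers the liberal, exclusive, and synchronous selection variants simultaneously. Since $\varphi$ was arbitrary, every Presburger-definable language is recognized by \Type{multiset}{stabilizing}{*}{strong}-automata, which is exactly Corollary~\ref{cor:cover-presburger}.

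There is essentially no obstacle left in the corollary itself: the real work lies in Proposition~\ref{prop:population-protocols} (faithfully simulating a scheduler that picks ordered pairs of adjacent nodes, including the ``?''/``!''/update handshake and the $\bot$-based error recovery for mismatched partners) and in the two cited results of~\cite{AADFP06}. The only point requiring a line of care is matching definitions: ``recognizes all Presburger-definable languages'' is, in this paper, a statement about the label-count vector quantified over \emph{all} connected graphs, whereas~\cite{AADFP06} first establishes it for multisets (equivalently, complete graphs); the appeal to Theorem~7 of~\cite{AADFP06} in the second step is precisely what reconciles the two.
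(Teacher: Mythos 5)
Your proposal is correct and matches the paper's own argument essentially verbatim: the paper likewise derives Corollary~\ref{cor:cover-presburger} by chaining the fact that standard population protocols recognize all Presburger-definable (semilinear) languages, Theorem~7 of~\cite{AADFP06} to pass to graph population protocols on arbitrary connected graphs, and Proposition~\ref{prop:population-protocols} to simulate those by \Type{multiset}{stabilizing}{*}{strong}-automata. Your closing remark about reconciling the multiset-based statement of~\cite{AADFP06} with the paper's label-count formulation over all connected graphs is exactly the role the paper assigns to Theorem~7, so nothing is missing.
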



\section{Conclusions}
\label{sec:conclusions}
We have conducted an extensive comparative analysis of the expressive power of weak asynchronous models of distributed computing. Our analysis has reduced the initial ``jungle'' of twenty different models to only seven. 
This reduction in complexity is achieved by Theorems \ref{thm:trivial}, \ref{lem:synchronizer}, \ref{thm:simulate-liberal-fairness-with-exclusive-fairness}, \ref{thm:simulate-exclusivity-with-fairness},
and  \ref{thm:simulate-exclusive-weak-with-liberal-weak}, all of which have a clear and intuitive interpretation.

We have also shown that the seven classes are distinct, and have identified inclusions and non-inclusions between them. However, two inclusions remain open: Are \Type{multiset}{halting}{liberal}{weak} or \Type{multiset}{stabilizing}{liberal}{weak} included in  \Type{set}{stabilizing}{liberal}{strong}? Intuitively, this asks if strong fairness and acceptance by stable consensus can be used to simulate counting.
We can provide a positive answer for graphs of bounded degree (a limitation common in practice), because in this case even \Type{set}{stabilizing}{*}{strong} and \Type{multiset}{stabilizing}{liberal}{strong} coincide.
\begin{restatable}{proposition}{PropAFkSimulatesCAF}
\label{prop:AF-k-simulates-CAF}
For every \Type{multiset}{stabilizing}{*}{strong}-automaton $A$  and every $k \in \Naturals$ there is a  \Type{set}{stabilizing}{*}{strong}-automaton $B$ equivalent to $A$ on graphs of maximum degree $k$.
\end{restatable}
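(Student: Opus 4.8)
The plan is to show that a \Type{set}{stabilizing}{*}{strong}-automaton~$B$ can simulate the counting capability of~$A$ by having each node maintain, in its finite state, an \emph{accurate picture} of the multiset of states of its neighbors, refreshed continually using strong fairness to break symmetry between neighbors. The key observation is that on a graph of maximum degree~$k$, a node has at most~$k$ neighbors, so the $\beta$-bounded multiset $N_v^{\Configuration}$ that $A$'s transition function consumes can be reconstructed if $v$ knows the current $A$-state of each of its (at most $k$) neighbors individually. The difficulty, of course, is that with only set-detection~$v$ cannot tell two neighbors in the same state apart, nor count them. Strong fairness is exactly what lets us work around this, as in the proofs of Propositions~\ref{prop:AsF-recog-stars} and~\ref{prop:Cthree-separation}.

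First I would equip every node~$v$ of~$B$ with a component holding its current $A$-state $q_v$, together with a bounded ``neighbor census'' data structure intended to record, for each neighbor, that neighbor's $A$-state. To populate this structure despite indistinguishability, each node carries a \emph{local counter} modulo some bound $r > k$ that it increments every time it is selected; by Lemma~\ref{lem:strongfairness}, strong fairness guarantees that for any target pattern of counter values among $v$'s neighbors — in particular, any pattern in which all neighbors have pairwise distinct counter values — there are infinitely many configurations in which $v$ is selected and sees exactly that pattern. When $v$ is selected and observes that all of its neighbors currently carry distinct counter values, it can read off the \emph{set} of pairs (counter value, $A$-state) of its neighbors, and since the counter values are all distinct and there are at most~$k\le r$ of them, this set \emph{is} the full list of neighbor $A$-states, hence determines $N_v^{\Configuration}$ exactly. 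So $v$ records this census. Between such ``good'' observations, $v$ simply keeps its recorded census; it is allowed to become stale, but that is harmless as long as $v$ only \emph{uses} it at the right moment.

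The second ingredient is a phase/handshake mechanism, again modeled on the synchronizer of Section~\ref{sec:synchro} and the protocols in Section~\ref{sec:exclusive}, that lets $v$ perform a genuine $A$-step — replacing $q_v$ by $\delta(q_v, N_v^{\Configuration})$ — only when its recorded census is guaranteed to be up to date, i.e.\ when it has just completed a ``good'' observation and no neighbor has changed its $A$-state since. A clean way to enforce this is to have a node change its $A$-state only in a designated sub-phase, and to have $v$ refuse to commit its $A$-step unless all neighbors are simultaneously in a ``quiet'' sub-phase; strong fairness again ensures that such a synchronized quiet window occurs infinitely often, and also ensures (by the arguments of Theorems~\ref{thm:simulate-liberal-fairness-with-exclusive-fairness} and~\ref{thm:simulate-exclusive-weak-with-liberal-weak}) that every finite sequence of legal $A$-selections is realized, so that $B$ simulates an honest strongly fair run of~$A$. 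Acceptance/rejection of~$B$ is read off the simulated $A$-state in the usual way, and the consistency condition for~$B$ follows from that of~$A$ together with the fact that every fair run of~$B$ projects to a fair run of~$A$.

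\textbf{Main obstacle.} The delicate point is the interaction between \emph{staleness of the census} and \emph{concurrent updates}: while $v$ is waiting for a synchronized quiet window, a neighbor~$u$ may have already performed its own $A$-step, invalidating $v$'s recorded picture, and with set-detection $v$ has no direct way to notice this. Resolving this is where the bulk of the construction goes — one must design the sub-phase discipline so that no node changes its $A$-state without all its neighbors being able to detect (via the phase component, which \emph{is} visible) that a change is imminent, forcing them to re-run a ``good'' observation before committing. This is essentially a distributed mutual-exclusion-with-refresh protocol layered on the $\alpha$-synchronizer, and verifying that it both preserves liveness (every node keeps making $A$-steps, so weak-fairness-style progress is maintained) and faithfully reproduces strongly fair $A$-runs is the technical heart of the proof. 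Everything else — the bound $r>k$, the product with the $A$-state, the reading of accept/reject — is routine, and the reduction to \Type{set}{stabilizing}{\SelectionType{liberal}}{\FairnessType{strong}} on one side follows from Theorems~\ref{thm:simulate-liberal-fairness-with-exclusive-fairness} and~\ref{thm:simulate-exclusivity-with-fairness} exactly as in the other simulation proofs.
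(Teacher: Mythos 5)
There is a genuine gap in the census-taking step, and it is exactly the difficulty that the paper's construction is built to overcome. You propose that each node increments a counter modulo $r>k$ on every activation, and that $v$ takes a snapshot of its neighborhood ``when $v$ is selected and observes that all of its neighbors currently carry distinct counter values.'' But with set-detection $v$ cannot make that observation: if two neighbors happen to be in \emph{identical} full states (same counter value, same simulated $A$-state, same phase bits), they collapse into a single element of the set that $v$ perceives, and $v$ has no way to tell this apart from the situation where only one neighbor is in that state. Strong fairness (via Lemma~\ref{lem:strongfairness}) does guarantee that instants with pairwise-distinct neighbor counters occur infinitely often, but it does not let $v$ \emph{recognize} such an instant; at a bad instant $v$ will record an undercounted census, feed a wrong $\beta$-bounded multiset to $\delta$, and commit an irreversible wrong $A$-step. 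Since your counters change on every activation, there is also no stable ``identity'' that could later certify, retroactively, that a past snapshot was sound.

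The paper's proof resolves precisely this point with three interlocking ingredients that your proposal lacks: (i) the distinguishing component (the ``first color,'' ranging over $\Range{k^2}$ so that a distance-2 coloring exists) is eventually \emph{frozen} rather than perpetually incremented; (ii) a separate parity bit (the ``second color'') keeps flipping forever, so that any two neighbors that illegally share a first color are eventually caught exposing different parities, which is a violation detectable with set-detection alone; and (iii) detection of a violation triggers a \emph{reset} of the node's simulated $A$-state back to its initial state $\delta_0(\lambda(v))$, propagated through a three-phase protocol, so that all simulation work done under a bad coloring is discarded. Correctness then rests on showing that a well-colored configuration with all nodes in phase~2 is eventually reached and is closed under transitions, and that the suffix of the run after stabilization projects onto a strongly fair run of $A$ started from the initial configuration --- which is compatible with acceptance by stable consensus because only the eventual behavior matters. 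Your handshake/quiet-window mechanism addresses the (real, but secondary) staleness-under-concurrency issue, which you in any case defer rather than solve; without the freeze--detect--reset structure, however, the construction is unsound at the point where the census is first taken.
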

\noindent However, for arbitrary graphs we conjecture that
neither \Type{multiset}{halting}{liberal}{weak}
nor \Type{multiset}{stabilizing}{liberal}{weak}
are included in \Type{set}{stabilizing}{liberal}{strong}.

Finally, we have made a first step towards characterizing the graph languages
recognizable by the different classes, by transferring a characterization for population protocols.

As a last note, observe that our results hold for \emph{decision} problems on \emph{undirected} graphs that can be solved by consensus
in the framework of distributed automata. Several of our constructions
(e.g., those in Theorems~\ref{lem:synchronizer}
and~\ref{thm:simulate-exclusivity-with-fairness})
rely on bidirectional communication,
which is not guaranteed on directed graphs.
Furthermore, exclusive selection leads to higher computational power
for non-decision problems. 
For instance, it can be used to solve the vertex coloring problem
on graphs of bounded degree (by a standard greedy algorithm),
which, for symmetry reasons, is impossible in a model with synchronous selection.


\bibliography{tex/references}

\clearpage
\appendix
\section{Appendix}
\label{sec:appendix}
\subsection{Proofs of Section \ref{sec:prelim}}

\LemStrongFairness*

\begin{proof}
  Let
  $\ConfigurationSet[3] = \Set{\Configuration[3]_1, \dots, \Configuration[3]_k}$
  be the set of configurations
  that occur infinitely often in~$\Run$.
  Notice that these configurations
  can all reach each other
  because otherwise they could not occur infinitely often.
  The assumption is that
  $\Configuration[2]_0 \in \ConfigurationSet[3]$.
  We construct a finite sequence~$\Schedule$ of selections permitted by $A$
  such that
  for every $i \in \Range[1]{k}$,
  the sequence of configurations
  visited starting from~$\Configuration[3]_i$ and applying~$\Schedule$
  contains either the subsequence
  $\Tuple{\Configuration[2]_0, \dots, \Configuration[2]_n}$,
  or some configuration
  $\Configuration[3]'_i \notin \ConfigurationSet[3]$.
  This suffices to prove the claim because
  from a certain point on,
  $\Run$ visits only configurations in~$\ConfigurationSet[3]$,
  and by strong fairness
  the schedule fragment~$\Schedule$
  is guaranteed to be chosen infinitely often by the~scheduler.
  Since any configuration
  $\Configuration[3]'_i \notin \ConfigurationSet[3]$
  may only occur finitely often,
  the only possibility is that the subsequence
  $\Tuple{\Configuration[2]_0, \dots, \Configuration[2]_n}$
  occurs infinitely~often.

  It remains to construct a suitable sequence~$\Schedule$.
  We proceed by induction,
  constructing a series of sequences
  $\Schedule_0, \Schedule_1, \dots, \Schedule_k$
  such that for $j \in \Range[0]{k}$,
  the sequence~$\Schedule_j$ satisfies the desired property
  for every $i \in \Range[1]{j}$.
  It then suffices to choose $\Schedule = \Schedule_k$.
  As the base case,
  we set $\Schedule_0 = \varepsilon$ (the empty sequence).
  Now, given~$\Schedule_j$,
  we distinguish two cases in order to construct~$\Schedule_{j+1}$.
  If starting from~$\Configuration[3]_{j+1}$ and applying~$\Schedule_j$
  the automaton visits some configuration
  $\Configuration[3]'_{j+1} \notin \ConfigurationSet[3]$,
  then we simply set $\Schedule_{j+1} = \Schedule_j$.
  Otherwise,
  let~$\Configuration[3]'_{j+1}$ be the final configuration
  reached from~$\Configuration[3]_{j+1}$ by applying~$\Schedule_j$.
  Since $\Configuration[3]'_{j+1} \in \ConfigurationSet[3]$
  and $\Configuration[2]_0 \in \ConfigurationSet[3]$,
  there exists a sequence of selections~$\Schedule'$
  that leads the automaton
  from~$\Configuration[3]'_{j+1}$ to~$\Configuration[2]_0$.
  Therefore,
  if starting from~$\Configuration[3]_{j+1}$,
  the automaton applies the schedule fragment
  $\Schedule_{j+1} =
  \Schedule_j \cdot \Schedule' \cdot \Selection_0 \cdots \Selection_{n-1}$,
  then it traverses a sequence of configurations ending with
  $\Tuple{\Configuration[2]_0, \dots, \Configuration[2]_n}$.
  Moreover,
  since $\Schedule_j$ is a prefix of~$\Schedule_{j+1}$,
  the property already established for~$\Schedule_j$
  with respect to
  $\Configuration[3]_1, \dots, \Configuration[3]_j$
  also holds for~$\Schedule_{j+1}$.
\end{proof}

\LemSimple*

\begin{proof}

\smallskip \noindent\res{1} Non-counting automata are a subclass of counting automata. 

\smallskip \noindent\res{2} Halting automata are a subclass of automata accepting by stable consensus. 

\smallskip \noindent\res{3}
Let $A=(M,\SelectionConstr,\FairnessConstr)$ be a \Type{*}{*}{*}{weak}-automaton, and let $G=\Tuple{V, E, \lambda}$ be a graph. The set $\FairnessConstr(G)$ contains the weakly-fair runs of $\SelectionConstr(G)^\omega$. Now consider $A'=(M,\SelectionConstr,\FairnessConstr')$, where $\FairnessConstr'(G)$ contains the strongly-fair runs of $\SelectionConstr(G)^\omega$. Since the set of permitted selections is the same for~$A$ and~$A'$, we have  $\FairnessConstr'(G) \subseteq \FairnessConstr(G)$. Therefore, since $A$ satisfies the consistency condition, so does~$A'$, and thus $A'$ is a \Type{*}{*}{*}{strong}-automaton with $L(A')=L(A)$. 

\smallskip \noindent\res{4} 
Let $A=(M,\SelectionConstr,\FairnessConstr)$ be a \Type{*}{*}{liberal}{weak}-automaton, and let $G=\Tuple{V, E, \lambda}$ be a graph. We have $\SelectionConstr(G) = 2^V$, and $\FairnessConstr(G)$ contains the weakly-fair runs of $\SelectionConstr(G)^\omega$. Let $\SelectionConstr'(G)= \{ \{v\} \mid v \in V \}$, and let $\FairnessConstr'(G)$ be the weakly-fair runs of $\SelectionConstr'(G)^\omega$. We have $\FairnessConstr'(G) \subseteq \FairnessConstr(G)$. Proceed now as in \res{3} 

\smallskip \noindent\res{5}  
The argument is fully analogous to that of \res{4},
the only difference being that
$\SelectionConstr'(G)= \Set{V}$.

\smallskip \noindent\res{6}   Let $A=(M,\SelectionConstr,\FairnessConstr)$ be a \Type{*}{*}{synchronous}{strong}-automaton.
We have $\SelectionConstr(G) = \{V\}$. Further,  the run $\Run$ scheduled by $V^\omega$ is strongly fair (because $V$ is the only possible selection). So $\FairnessConstr(G) = \{\Run\}$.
Let $A'=(M,\SelectionConstr,\FairnessConstr')$ be the unique  \Type{*}{*}{synchronous}{weak}-automaton with machine $M$. Since the run $\Run$ scheduled by $V^\omega$ is also weakly fair, we have $\FairnessConstr'(G) = \{\Run\} = \FairnessConstr(G)$. It follows that $L(A) = L(A')$.
\end{proof}

\subsection{Proofs of Section \ref{sec:trivial}}

\ThmTrivial*

\begin{proof}
  By Statement~3 of Lemma~\ref{lem:simple},
  it suffices to prove the claim for
  \Type{set}{halting}{liberal}{strong}-automata.
  So let us consider a
  \Type{set}{halting}{liberal}{strong}-automaton~$A$,
  and assume for the sake of contradiction
  that there exist two graphs~$G$ and~$H$
  such that $A$ accepts~$G$ and rejects~$H$.
  Let
  $\Run^G = \Tuple{\Configuration^G_0, \Configuration^G_1, \dots}$
  and
  $\Run^H = \Tuple{\Configuration^H_0, \Configuration^H_1, \dots}$
  be strongly fair runs of~$A$ on~$G$ and~$H$,
  respectively.
  By the consistency condition,
  $\Run^G$ is accepting and $\Run^H$ is rejecting.
  Based on that,
  we will construct a new graph~$K$
  and a strongly fair run~$\Run$ of~$A$ on~$K$
  that is neither accepting nor rejecting.
  This means that~$A$ does not satisfy the consistency condition,
  and therefore does not qualify as a distributed automaton,
  a contradiction.

  We start by constructing~$K$.
  Let $t \in \Naturals$ be a time at which
  all nodes in~$\Run^G$ and~$\Run^H$ have halted
  (i.e., all nodes in~$\Configuration^G_t$ and~$\Configuration^H_t$
  have reached an accepting or rejecting state).
  Our new graph~$K$ consists of
  $t$ copies $\Set{G_i}_{i \in \Range[1]{t}}$ of~$G$
  and
  $t$ copies $\Set{H_i}_{i \in \Range[1]{t}}$ of~$H$,
  which are connected as follows.
  For each node~$w^X$ of the original graph $X \in \Set{G, H}$,
  we denote its copy in~$X_i$ by~$w^X_i$,
  where $i \in \Range[1]{t}$.
  Let~$u^G$ and~$v^G$ be two \emph{adjacent} nodes of~$G$,
  and~$u^H$ and~$v^H$ be two \emph{adjacent} nodes of~$H$.
  (Recall that all graphs are assumed to be connected
  and have at least two nodes.)
  In addition to the edges in each copy~$X_i$,
  graph~$K$ also contains the connecting edges
  $\Set{u^X_i, v^X_{i+1}}$
  for all $i \in \Range[1]{t}*$ and $X \in \Set{G, H}$,
  as well as the edge
  $\Set{u^G_t, u^H_t}$.
  An illustration of this construction is provided in
  Figure~\ref{fig:chain-construction-again}.

  \begin{figure}[htb]
    \centering
    \begin{tikzpicture}[semithick,auto,on grid]
  \tikzstyle{vertex}=[draw,circle,inner sep=0ex,minimum size=4ex]
  \tikzstyle{vertexG}=[vertex,fill=black!20]
  \tikzstyle{vertexH}=[vertex,fill=white]
  \tikzstyle{phantom}=[]
  \tikzstyle{graph}=[draw,semithick,inner ysep=-0.5ex,
                     cloud,cloud ignores aspect,cloud puff arc=80]
  \tikzstyle{graphG}=[graph,label={[xshift=-2ex]center:#1},
                      cloud puffs=10]
  \tikzstyle{graphH}=[graph,label={[xshift=+2ex]center:#1},fill=black!20,
                      cloud puffs=14]
  \def\dv{7ex}
  \def\dh{12.5ex}

  \node[vertexG] (uG1)                    {$u^G_1$};
  \node[vertexG] (vG1) [below=\dv of uG1] {$v^G_1$};
  \node[vertexG] (uG2) [right=\dh of uG1] {$u^G_2$};
  \node[vertexG] (vG2) [below=\dv of uG2] {$v^G_2$};
  \node[phantom] (pG)  [below right=0.5*\dv and 0.75*\dh of uG2] {$\dots$};
  \node[vertexG] (uGt) [right=1.5*\dh of uG2] {$u^G_t$};
  \node[vertexG] (vGt) [below=\dv of uGt] {$v^G_t$};
  \node[vertexH] (uHt) [right=\dh of uGt] {$u^H_t$};
  \node[vertexH] (vHt) [below=\dv of uHt] {$v^H_t$};
  \node[phantom] (pH)  [below right=0.5*\dv and 0.75*\dh of uHt] {$\dots$};
  \node[vertexH] (uH2) [right=1.5*\dh of uHt] {$u^H_2$};
  \node[vertexH] (vH2) [below=\dv of uH2] {$v^H_2$};
  \node[vertexH] (uH1) [right=\dh of uH2] {$u^H_1$};
  \node[vertexH] (vH1) [below=\dv of uH1] {$v^H_1$};
  \begin{scope}[on background layer]
    \node[graphG=$G_1$,fit=(uG1) (vG1)] (G1) {};
    \node[graphG=$G_2$,fit=(uG2) (vG2)] (G2) {};
    \node[graphG=$G_t$,fit=(uGt) (vGt)] (Gt) {};
    \node[graphH=$H_t$,fit=(uHt) (vHt)] (Ht) {};
    \node[graphH=$H_2$,fit=(uH2) (vH2)] (H2) {};
    \node[graphH=$H_1$,fit=(uH1) (vH1)] (H1) {};
  \end{scope}

  \draw
    (uG1) edge (vG1)
          edge (vG2)
    (uG2) edge (vG2)
          edge (pG)
    (pG)  edge (vGt)
    (uGt) edge (vGt)
          edge (uHt)
    (uHt) edge (vHt)
    (pH)  edge (vHt)
    (uH2) edge (vH2)
          edge (pH)
    (uH1) edge (vH1)
          edge (vH2)
    ;
\end{tikzpicture}

    \caption{
      Graph~$K$ used in the proof of Theorem~\ref{thm:trivial}.
    }
    \label{fig:chain-construction-again}
  \end{figure}

  The important feature of~$K$ is that
  every node~$w^X_i$ except for~$u^G_t$ and~$u^H_t$
  has a neighborhood equivalent to the neighborhood of
  the corresponding node~$w^X$ in the original graph~$X$.
  This is because $A$ is a non-counting automaton,
  where each node can only see the set of states of its neighbors,
  without being able to count them.
  So initially,
  the additional edges between different copies of the same graph~$X$
  do not change the “perception” of the nodes they connect.
  However,
  the two nodes~$u^G_t$ and~$u^H_t$ may have a different neighborhoods
  than~$u^G$ and~$u^H$,
  and this might affect their behavior starting at time~$1$.
  Their different behavior can be propagated to other nodes
  in subsequent rounds,
  but this propagation takes time
  before it can reach nodes in the extreme parts of the graph.

  We now construct a suitable run
  $\Run = \Tuple{\Configuration_0, \Configuration_1, \dots}$
  of~$A$ on~$K$.
  During the first~$t$ steps,
  $\Run$~tries to copy the behavior of~$\Run^G$ and~$\Run^H$.
  More precisely,
  let~$\Schedule^G$ and~$\Schedule^H$
  be schedules that schedule~$\Run^G$ and~$\Run^H$,
  respectively.
  We use them to define
  a schedule~$\Schedule$ of~$K$ that schedules~$\Run$:
  at every time $r \in \Range[0]{t}*$,
  each copied node~$w^X_i$ is selected by~$\Schedule$
  if and only if
  the original node~$w^X$ is selected by~$\Schedule^X$,
  where $i \in \Range[1]{t}$ and $X \in \Set{G, H}$.
  Note that this does not violate the strong fairness constraint
  because we have only fixed a finite prefix of~$\Schedule$.
  We can therefore extend~$\Schedule$ in such a way
  that it satisfies the strong fairness constraint.

  It remains to show that $\Run$ is neither accepting nor rejecting.
  For this,
  we prove by induction over~$r$ that for all
  $r \in \Range[0]{t}$,\, $i \in \Range[1]{t - r}$, and $X \in \Set{G, H}$,
  every copied node~$w^X_i$ in~$\Run$ at time~$r$
  is in the same state as
  the original node~$w^X$ in~$\Run^X$ at time~$r$, i.e.,
  $\Configuration_r(w^X_i) = \Configuration^X_r(w^X)$.
  This obviously holds for $r = 0$,
  since every copy~$w^X_i$ has the same label as~$w^X$.
  For $r \in \Range[1]{t}$,
  the induction hypothesis tells us that at time $r - 1$,
  each copy~$w^X_i$ with $i \in \Range[1]{t - r + 1}$
  is in the same state as~$w^X$,
  and if $i \leq t - r$,
  then $w^X_i$ also sees the same set of states as~$w^X$ in its neighborhood.
  Moreover,
  by the definition of~$\Schedule$,
  node~$w^X_i$ is selected if and only if $w^X$ is selected.
  Hence,
  provided $i \in \Range[1]{t - r}$,
  the two nodes are also in the same state at time~$r$.

  Since at time~$t$
  all nodes of~$G$ are in an accepting state in~$\Run^G$,
  and
  all nodes of~$H$ are in a rejecting state in~$\Run^H$,
  the same holds in~$\Run$ for the copies of those nodes
  in~$G_1$ (the “left-most” copy of~$G$)
  and~$H_1$ (the “right-most” copy of~$H$).
  And since $A$ is a halting automaton,
  these nodes will never change their state again.
  But this means that~$\Run$ never reaches a stable consensus,
  and therefore that it is neither accepting nor rejecting.
\end{proof}

\subsection{Proofs of Section \ref{sec:synchro}}

\ThmSynchronizer*

\begin{proof}
Let $A=(M,\SelectionConstr,\FairnessConstr)$ be a \Type{*}{*}{synchronous}{*}-automaton, and let $G=\Tuple{V, E, \lambda}$ be a graph. Let $\tilde{A}= (\tilde{M}, \tilde{\SelectionConstr},\tilde{\FairnessConstr})$, where $\tilde{M}$ is as described in Section~\ref{sec:synchro}, $\tilde{\SelectionConstr}$ is liberal, and $\tilde{\FairnessConstr}$ is weakly (strongly) fair if $\FairnessConstr$ is so. By the consistency condition, the unique run $\Run$ of $A$ on $G$ is either accepting or rejecting. By the definition of $\tilde{M}$, and since all runs of $\tilde{\FairnessConstr}(G)$ are at least weakly fair, if $\Run$ is accepting then every fair run of $\tilde{A}$ is accepting, and if $\Run$ is rejecting then every fair run of $\tilde{A}$ is rejecting. So $\tilde{A}$ also satisfies the consistency condition, and $L(A) = L(\tilde{A})$.
\end{proof}

\subsection{Proofs of Section \ref{subsec:ex-strong}}

\ThmSimulateLiberalFairnessWithExclusiveFairness*

\begin{proof}
  Given a \Type{*}{*}{liberal}{strong}-automaton~$A$,
  we construct a \Type{*}{*}{exclusive}{strong}-automaton~$B$
  such that
  for all input graphs~$G$,
  every strongly fair run of~$B$ on~$G$
  simulates a strongly fair run of~$A$ on~$G$.
  Since $A$ satisfies the consistency condition by hypothesis,
  this property implies that $B$ does too,
  and moreover that $B$ accepts a graph if and only if $A$ accepts it.
  The difficulty lies in the fact that
  $A$ and~$B$ do not share the same notion of strong fairness
  because they have different selection constraints.
  While $A$'s liberal scheduler guarantees that
  arbitrary sequences of selections will occur infinitely often,
  $B$'s exclusive scheduler can select only one node at~a~time.

  To simulate $A$'s behavior with~$B$,
  we slightly adapt the synchronizer construction
  from Section~\ref{sec:synchro}.
  Just like there,
  nodes keep track of
  their previous and current state in~$A$,
  as well as the current round number modulo~$3$.
  However, instead of updating their state in every round,
  they only do so if an additional \emph{activity flag} is set.
  Thus,
  we can simulate an arbitrary selection~$\Selection$
  by raising the flags
  of exactly those nodes that lie in~$\Selection$.
  The outcome of a round simulated in this way
  will be the same as if
  all the nodes in~$\Selection$ made a transition simultaneously.

  Now,
  the main issue is how to set the flags in each round
  in such a way that every finite sequence
  $(\Selection_1, \ldots, \Selection_n)$ of selections
  is guaranteed to occur infinitely often.
  To achieve this,
  we take advantage of the fact that $B$'s scheduler
  is strongly fair with respect to exclusive selection.
  We use the following (deterministic) rules:
  If node~$v$ is selected while it is in round~$i \bmod 3$
  and none of its neighbors are yet in round $(i + 1) \bmod 3$,
  then $v$ raises its flag;
  the next time $v$ is selected and allowed to move,
  it will simulate a transition of~$A$,
  lower its flag,
  and move to round $(i + 1) \bmod 3$.
  Otherwise,
  if $v$ is selected
  when its flag is down
  and some of its neighbors have already reached the next round,
  it simply moves to round $(i + 1) \bmod 3$
  without simulating a transition.

  Formally,
  if the machine of~$A$ is
  $M =\Tuple{Q, \delta_0, \delta, Y, N}$
  with input alphabet~$\Alphabet$ and counting bound~$\beta$,
  we define the machine of~$B$ as
  $M' =\Tuple{Q', \delta'_0, \delta', Y', N'}$,
  where
  \begin{equation*}
    Q' =
    \!{\underbrace{Q}_{\text{previous}}}\! \times
    \!{\underbrace{Q}_{\text{current}}}\! \times
    \,{\underbrace{\Set{0, 1, 2}}_{\text{round}}}\, \times
    \,{\underbrace{\Set{\bot, \top}}_{\text{flag}}},
  \end{equation*}
  $Y'$ and~$N'$ are defined analogously, and
  $\delta'_0(a) = \Tuple{\delta(a), \delta(a), 0, \bot}$
  for all $a \in \Alphabet$.
  The transition function~$\delta'$ is described as follows.
  Let $v$ be a node, and assume it is selected by the scheduler.
  \begin{itemize}
  \item In case $v$ is in state $\Tuple{q, q', i, \bot}$:
    \begin{itemize}
    \item if none of $v$'s neighbors are yet in round $(i + 1) \bmod 3$,
      then $v$ moves to $\Tuple{q, q', i, \top}$;
    \item else, if some neighbor of $v$ is still in round $(i - 1) \bmod 3$,
      then $v$ stays in $\Tuple{q, q', i, \bot}$;
    \item else, $v$ moves to state $\Tuple{q', q', (i + 1) \bmod 3, \bot}$.
    \end{itemize}
  \item In case $v$ is in state $\Tuple{q, q', i, \top}$:
    \begin{itemize}
    \item if some neighbor of $v$ is still in round $(i - 1) \bmod 3$,
      then $v$ stays in $\Tuple{q, q', i, \top}$;
    \item else, $v$ moves to  $\Tuple{q', q'', (i + 1) \bmod 3, \bot}$,
      where $q'' = \delta(q', P)$
      and $P$ is the $\beta$-bounded multiset consisting of
      the current states of the neighbors
      who are in round~$i$,
      and the previous states of the neighbors
      who are in round $(i + 1) \bmod 3$.
    \end{itemize}
  \end{itemize}

  Notice that the above construction allows the scheduler of $B$
  to choose an arbitrary selection~$\Selection$ in each round.
  For instance, the scheduler can first bring all nodes to round~$i \bmod 3$,
  next select all nodes in~$\Selection$ (one by one) to raise their flags,
  then select those same nodes again
  so that they can perform their transitions
  and move to round $(i + 1) \bmod 3$,
  and finally select all the remaining nodes
  to bring them to the next round as well.
  To prevent nodes outside of~$\Selection$ from being activated,
  the scheduler has to select them in some order
  that ensures that
  at least one of their neighbors is already in the next round
  (for example,
  a breadth-first or depth-first order
  starting from the nodes in~$\Selection$).
  Since the scheduler is strongly fair,
  by Lemma~\ref{lem:strongfairness},
  every finite sequence of selections appears infinitely often.
\end{proof}

\ThmSimulateExclusivityWithFairness*

\begin{proof}
  First,
  we note that the only way exclusivity could possibly be useful
  is to break symmetry between adjacent nodes.
  This is because for an independent set
  (i.e., a set of pairwise non-adjacent nodes),
  the order of activation is irrelevant:
  whether the scheduler activates them
  all at once or one by one in some arbitrary order,
  the outcome will always be the same.
  More precisely,
  if we consider a graph
  $G = \Tuple{V, E, \lambda}$,
  a configuration~$\Configuration$ on~$G$,
  and some \emph{independent} set of nodes $U \subseteq V$,
  then the scheduler can choose any sequence of selections
  $\Tuple{\Selection_1, \dots, \Selection_n}$
  such that
  $\bigcup_{i \in \Range[1]{n}} \Selection_i = U$
  and
  $\Card{\SetBuilder{i \in \Range[1]{n}}{v \in \Selection_i}} = 1$
  for all $v \in U$.
  Regardless of the scheduler's choice,
  the configuration~$\Configuration'$
  reached from~$\Configuration$ via the schedule fragment
  $\Tuple{\Selection_1, \dots, \Selection_n}$
  will always be the same.
  Consequently,
  to simulate a run with exclusivity,
  it suffices to simulate a run
  where no two adjacent nodes are active at the same time.

  We now describe a simple protocol
  that makes use of the strong fairness constraint
  (in an environment with liberal selection)
  to ensure that if a node wants to execute a transition,
  then it will eventually be able to do so
  while all of its neighbors remain passive.
  Suppose that an active node $v$ wants to transition
  from state $q$ to state $q'$.
  To this end,
  it first goes into an intermediate state $\Tuple{q, q'}$
  that declares this intention.
  Then,
  the next time $v$ is activated by the scheduler,
  it checks that none of its neighbors are in an intermediate state
  of the form $\Tuple{p, p'}$.
  If the check passes,
  $v$ switches to state $q'$.
  Otherwise,
  it goes back to state $q$
  and tries again the next time it is activated.
  By Lemma~\ref{lem:strongfairness},
  the strong fairness constraint guarantees
  that $v$ will infinitely often be able to execute a transition.

  More formally,
  given a \Type{*}{*}{exclusive}{strong}-automaton
  with machine
  $M =\Tuple{Q, \delta_0, \delta, Y, N}$
  and counting bound~$\beta$,
  we can simulate it by a \Type{*}{*}{liberal}{strong}-automaton
  with machine
  $M' =\Tuple{Q', \delta'_0, \delta', Y', N'}$,
  where
  \begin{equation*}
    Q' = Q \cup (Q \times Q), \qquad
    Y' = Y \cup (Y \times Y), \qquad
    N' = N \cup (N \times N),
  \end{equation*}
  $\delta'_0$ is the extension of~$\delta_0$ to the codomain~$Q'$,
  and
  $\delta'$ is defined as follows:
  For $q, q' \in Q$ and $P \in \Range{\beta}^{Q'}$
  such that $P$ contains no state $\Tuple{p, p'} \in Q'$,
  we have
  \begin{equation*}
    \delta'(q, P) = \Tuple{q, \delta(q, P)}
    \qquad \text{and} \qquad
    \delta'(\Tuple{q, q'}, P) = q',
  \end{equation*}
  and for $q, q' \in Q$ and $P \in \Range{\beta}^{Q'}$
  such that $P$ contains at least one state $\Tuple{p, p'} \in Q'$,
  we have
  \begin{equation*}
    \delta'(q, P) = q
    \qquad \text{and} \qquad
    \delta'(\Tuple{q, q'}, P) = q.
  \end{equation*}
  The first case corresponds to the situation
  where a node can make progress
  because none of its neighbors are in an intermediate state,
  whereas the second case corresponds to the situation
  where a node must wait for
  some neighbors to either complete or abort
  their current transition attempt.
\end{proof}

\subsection{Proofs of Section \ref{ssec:exclusivity-under-weak-fairness}}

\RemNonhaltingWithoutExclusivity*

\begin{proof}
  We first describe the machine of a very simple
  \Type{set}{halting}{exclusive}{weak}-automaton~$A$
  that recognizes the trivial language
  of all unlabeled graphs but relies on exclusive selection to terminate.
  It has the state set $Q = \Set{p, q, h}$,
  where $p$ is initial,
  and $h$ is halting and accepting.
  The transition function~$\delta$ is defined as follows:
  if $v$ and all its neighbors are in state~$p$,
  then $v$ moves to~$q$;
  if $v$ and all its neighbors are in state~$q$,
  then $v$ moves to~$p$;
  otherwise,
  $v$ moves to~$h$.
  For every unlabeled graph $G$,
  in the synchronous run of $A$ on $G$
  all nodes keep alternating forever between states~$p$ and~$q$
  (recall that graphs are connected and have at least two nodes),
  whereas in a run with exclusive selection,
  all nodes eventually end up in the accepting state~$h$.

  Now, using a standard product construction,
  we can easily transform any \Type{*}{*}{liberal}{weak}-automaton~$B$
  into an equivalent \Type{*}{*}{exclusive}{weak}-automaton~$C$
  whose machine never halts under synchronous execution:
  $C$ simply simulates~$A$ and~$B$ in parallel
  and accepts precisely when both accept.
\end{proof}

\LemButterflyUnfoldingConnected*

\begin{proof}
  If
  $G = \Tuple{V, E, \lambda}$
  does not contain any cycle of odd length,
  it is easy to see that
  its Kronecker cover consists of two disjoint copies of~$G$.
  Indeed,
  since containing no odd cycle is equivalent to being bipartite
  (see, e.g., \cite[Prp.~1.6.1]{Diestel17}),
  we know that $V$ can be partitioned into two sets~$V_0$ and~$V_1$
  such that every edge of~$G$ connects a node in~$V_0$ to one in~$V_1$.
  Hence,
  in the Kronecker cover~$G'$,
  we obtain one copy of~$G$ over the set of nodes
  $(V_0 \times \Set{0}) \cup (V_1 \times \Set{1})$
  and another (disjoint one) over the set
  $(V_0 \times \Set{1}) \cup (V_1 \times \Set{0})$.

  It remains to show that
  if $G$ contains an odd cycle,
  then $G'$ is connected.
  We proceed in two steps.
  First,
  consider some cycle
  $v_1 v_2 \dots v_n v_1$
  of odd length~$n$ in the original graph~$G$.
  Since $n$ is odd,
  this cycle is replicated in~$G'$ by the cycle
  \begin{equation*}
    \Tuple{v_1, 0} \, \Tuple{v_2, 1} \dots \Tuple{v_n, 0} \,
    \Tuple{v_1, 1} \, \Tuple{v_2, 0} \dots \Tuple{v_n, 1} \, \Tuple{v_1, 0}
  \end{equation*}
  of length~$2n$.
  (If $n$ were even,
  we would get two disjoint cycles of length~$n$ instead.)
  Second,
  since $G$ is connected,
  for any node $u \in V$
  there exists a path $w_1 w_2 \dots w_m$ in~$G$
  such that $w_1 = u$ and $w_m = v_1$.
  This path is replicated in~$G'$ by the two paths
  \begin{equation*}
    \Tuple{w_1, 0} \, \Tuple{w_2, 1} \dots \Tuple{w_m, i}
    \qquad\text{and}\qquad
    \Tuple{w_1, 1} \, \Tuple{w_2, 0} \dots \Tuple{w_m, j},
  \end{equation*}
  where $\Tuple{i, j} = \Tuple{1, 0}$ if $m$ is even,
  and $\Tuple{i, j} = \Tuple{0, 1}$ if $m$ is odd.
  This means that
  both $\Tuple{u, 0}$ and $\Tuple{u, 1}$ are connected
  to the aforementioned cycle of length~$2n$,
  and since $u$ was chosen arbitrarily,
  it follows that $G'$ is connected.
\end{proof}

\LemButterflyUnfoldingInvariance*

\begin{proof}
  It suffices to prove the claim for
  \Type{*}{*}{synchronous}{weak}- and \Type{*}{*}{exclusive}{weak}-automata,
  since \Type{*}{*}{liberal}{weak}-automata
  can be regarded as a special case of both.
  In the following,
  let
  $A = \Tuple{M, \Scheduler}$
  and
  $G = \Tuple{V, E, \lambda}$.
  Since~$G$ is non-bipartite
  (i.e., it contains a cycle of odd length),
  we know by Lemma~\ref{lem:butterfly-unfolding-connected}
  that its Kronecker cover
  $G' = \Tuple{V', E', \lambda'}$
  is connected
  and therefore qualifies as valid input for~$A$.

  Let us begin with the case where~$A$ is synchronous,
  i.e., a \Type{*}{*}{synchronous}{weak}-automaton,
  and consider the (unique) runs~$\Run$ and~$\Run'$ of~$A$ on~$G$ and~$G'$,
  respectively.
  Recall that
  $V' = V \times \Set{0, 1}$.
  For every node~$v$ of~$G$,
  its copies $\Tuple{v, 0}$ and $\Tuple{v, 1}$ in~$G'$
  have the same label as~$v$
  and an equivalent multiset of neighbors
  (i.e., all their neighbors are copies of $v$'s neighbors).
  It is thus easy to see by induction
  that in every round $i \in \Naturals$,
  $\Tuple{v, 0}$ and $\Tuple{v, 1}$ are in the same state in~$\Run'$
  as $v$ is in~$\Run$.
  Therefore,
  the $i$-th configuration of~$\Run'$ is accepting
  if and only if
  the $i$-th configuration of~$\Run$ is accepting,
  and hence $A$~accepts~$G'$ precisely if it accepts~$G$.

  We now turn to the case where $A$ is a \Type{*}{*}{exclusive}{weak}-automaton.
  Consider any schedule
  $\Schedule = (\Selection_0, \Selection_1, \ldots) \in (2^V)^\omega$
  that satisfies the constraints of the scheduler~$\Scheduler$.
  To prove the claim,
  it suffices to show that there exists a schedule~$\Schedule'$ of~$G'$
  that also satisfies the constraints of~$\Scheduler$
  such that
  the run~$\Run'$ of~$A$ on~$G'$ scheduled by~$\Schedule'$ is accepting
  if and only if
  the run~$\Run$ of~$A$ on~$G$ scheduled by~$\Schedule$ is accepting.
  Indeed,
  by the consistency condition,
  this implies that
  $A$ accepts~$G$ if and only if it accepts~$G'$.

  We choose
  $\Schedule' =
  (\Selection'_0, \Selection'_1, \ldots) \in (2^{V'})^\omega$
  such that
  \begin{equation*}
    \Selection'_{2t} = \Selection_i \times \Set{0}
    \quad \text{and} \quad
    \Selection'_{2t+1} = \Selection_i \times \Set{1}
  \end{equation*}
  for all $t \in \Naturals$.
  That is,
  for every node~$v$ of~$G$,
  if~$v$ is active at time~$t$,
  then its copy $\Tuple{v, 0}$ in~$G'$ is active at time~$2t$,
  and its copy $\Tuple{v, 1}$ is active at time~$2t+1$.
  Note that since~$\Schedule$ is weakly fair,
  so is~$\Schedule'$.
  Furthermore,
  the exclusivity of~$\Schedule$
  also carries over to~$\Schedule'$
  (this is why we do not schedule
  $\Tuple{v, 0}$ and $\Tuple{v, 1}$ simultaneously).
  However,
  $\Schedule'$ is not strongly fair in general,
  and therefore the assumption
  that $A$ is a \Type{*}{*}{*}{weak}-automaton
  is essential.

  Now,
  since $\Tuple{v, 0}$ and $\Tuple{v, 1}$ are not connected,
  and since both have the same label as~$v$
  and an equivalent multiset of neighbors,
  it is again easy to see by induction that the following holds:
  at every even time~$2i$,
  both copies are in the same state as~$v$ is at time~$i$,
  while at every odd time $2i+1$,
  copy $\Tuple{v, 0}$ is already in the same state as~$v$ at time~$i+1$,
  but copy $\Tuple{v, 1}$ is still in the state $v$ had at time~$i$.
  Here we rely on the fact that
  each selection~$\Selection_i$ is a singleton,
  which ensures that
  if $v$ is active in~$G$ at time~$i$,
  then no other node is active at the same time.
  This means that $\Tuple{v, 0}$ and $\Tuple{v, 1}$
  receive the same multiset of states from their neighbors in~$G'$
  at times $2i$ and $2i+1$, respectively.
  Consequently,
  the $(2i)$-th configuration of~$\Run'$ is accepting
  if and only if
  the $i$-th configuration of~$\Run$ is accepting,
  and the $(2i + 1)$-th configuration of~$\Run'$ is accepting
  if and only if
  both the $i$-th and the $(i + 1)$-th configurations of~$\Run$ are accepting.
  Given that legal runs must eventually reach a stable consensus
  (i.e., only accepting or only rejecting configurations after a certain time),
  this means that
  $\Run'$~is accepting if and only if $\Run$~is accepting.
\end{proof}


\ThmSimulateExclusiveWeakWithLiberalWeak*

\begin{proof}
  In the following,
  we show how,
  for a given \Type{*}{*}{exclusive}{weak}-automaton~$A$,
  we can construct an equivalent \Type{*}{*}{synchronous}{weak}-automaton~$B$
  (i.e., a synchronous automaton).
  This is sufficient to prove the claim
  because we know from Theorem~\ref{lem:synchronizer} that
  $B$ can always be simulated
  by a \Type{*}{*}{liberal}{weak}-automaton using a synchronizer.

  First of all,
  let us observe that the task would be straightforward
  if we were guaranteed that
  the labels of the input graph define a proper vertex coloring.
  Indeed,
  since each color of a proper coloring
  represents an independent set,
  $B$~could simply operate in cyclically repeating phases
  that correspond to the different colors.
  More precisely,
  if the given colors were $0, \dots, {k - 1}$,
  then in the $i$-th round
  (i.e., the $i$-th time all nodes change state synchronously),
  only the ${(i \mod k)}$-colored nodes
  would evaluate the transition function
  of the simulated automaton~$A$.
  As explained in the first paragraph of the proof of
  Theorem~\ref{thm:simulate-exclusivity-with-fairness},
  such a run is equivalent to a run of an exclusive scheduler
  that activates the nodes in each independent set one by one
  (in some arbitrary order).

  Obviously the above approach only works
  if we are given a proper coloring.
  Nevertheless,
  it can be adapted to a special case of uncolored graphs:
  if the input graph happens to be bipartite,
  then there exist exactly two possible 2-colorings.
  This is because
  as soon as we fix the color of a single node,
  there is only one possible choice of color
  for all the remaining nodes.
  However,
  choosing one of the two 2-colorings would require to break symmetry,
  which a \Type{*}{*}{synchronous}{weak}-automaton cannot do.
  So instead,
  we simply work with both colorings in parallel.

  We now go into more details on how to simulate
  a \Type{*}{*}{exclusive}{weak}-automaton~$A$
  by a \Type{*}{*}{synchronous}{weak}-automaton~$B$
  \emph{on bipartite graphs}.
  Let
  $M =\Tuple{Q, \delta_0, \delta, Y, N}$
  be the machine of~$A$
  with input alphabet~$\Alphabet$ and counting bound~$\beta$,
  and let $\Set{0, 1}$ be a set of colors
  that we will use to color the graph.
  At any point in time in an execution of~$B$,
  each node $v$ stores a pair of states
  $\Tuple{q_0, q_1} \in Q \times Q$,
  where $q_0$ represents $v$'s current state
  in case its color is~$0$,
  and similarly $q_1$ represents $v$'s current state
  in case its color is~$1$.
  This way,
  $B$ can run the aforementioned round-based simulation of~$A$
  for both possible 2-colorings in parallel.
  To simulate the case where $v$ is $0$-colored,
  $v$~looks at the state in its own $0$-component
  but at the states in its neighbors' $1$-component
  (since the neighbors must be $1$-colored if $v$ is $0$-colored).
  To simulate the case where $v$ is $1$-colored,
  the procedure is the other way around.

  More formally,
  the machine of~$B$ can be defined as
  $M' =\Tuple{Q', \delta'_0, \delta', Y', N'}$,
  where
  \begin{equation*}
    Q' = Q \times Q \times \Set{0, 1}, \qquad
    Y' = Y \times Y \times \Set{0, 1}, \qquad
    N' = N \times N \times \Set{0, 1},
  \end{equation*}
  $\delta'_0(a) = \bigTuple{\delta(a), \delta(a), 0}$
  for all $a \in \Alphabet$,
  and the transition function $\delta'$ is defined as follows,
  for $q_0, q_1 \in Q$ and $P \in \Range{\beta}^{Q'}$:
  \begin{align*}
    \delta' \bigTuple{\Tuple{q_0, q_1, 0}, P} &=
    \bigTuple{
      \delta(q_0, P_1),\,
      q_1,\,
      1
    }, \\
    \delta' \bigTuple{\Tuple{q_0, q_1, 1}, P} &=
    \bigTuple{
      q_0,\,
      \delta(q_1, P_0),\,
      0
    },
  \end{align*}
  where~$P_0$ and~$P_1$ are the $\beta$-bounded projections of~$P$
  to the two first state components, i.e.,
  \begin{align*}
    P_0 \colon p
    & \mapsto \textstyle
      \min \bigSet{\beta,\, \sum_{p_1 \in Q,\, i \in \Set{0,1}} P(p, p_1, i)}, \\
    P_1 \colon p
    & \mapsto \textstyle
      \min \bigSet{\beta,\, \sum_{p_0 \in Q,\, i \in \Set{0,1}} P(p_0, p, i)},
  \end{align*}
  for all $p \in Q$.
  The third state component
  counts the number of synchronous rounds modulo~$2$.
  If the round number is even,
  each node behaves as if it were $0$-colored
  and its neighbors were $1$-colored.
  Thus,
  each node updates its $0$-component
  according to its neighbors' $1$-components.
  Meanwhile,
  the $1$-component remains unchanged
  because $1$-colored nodes are
  supposed to remain passive in even rounds.
  If the round number is odd,
  everything is the other way around.

  The above construction of~$B$ is based on the assumption
  that the input graph is bipartite.
  However,
  we now argue that in fact this assumption is not necessary.
  To do so,
  we have to distinguish two cases:
  \begin{itemize}
  \item If the input graph~$G$ is bipartite,
    then by construction,
    the synchronous run of~$B$ on~$G$
    simulates in parallel two runs of~$A$ on~$G$ with exclusive selection.
    By the consistency condition,
    this implies that
    $G$ is accepted by~$B$ if and only if it is accepted by~$A$.
  \item If~$G$ is not bipartite,
    then by Lemma~\ref{lem:butterfly-unfolding-connected},
    its Kronecker cover~$G'$ is connected
    and therefore constitutes a legal input for a distributed automaton.
    Now,
    by Lemma~\ref{lem:butterfly-unfolding-invariance},
    $B$~accepts~$G$ if and only if it accepts~$G'$.
    Since $G'$ is bipartite
    (by the definition of a Kronecker cover),
    we know from the above discussion that
    $B$~accepts~$G'$ if and only if $A$~accepts~$G'$.
    Finally,
    again by Lemma~\ref{lem:butterfly-unfolding-invariance},
    $A$~accepts~$G'$ if and only if it accepts~$G$.
    From this chain of equivalences,
    we can conclude that
    $G$ is accepted by~$B$ if and only if it is accepted by~$A$.
  \end{itemize}
  Notice that in the case where the input graph is not bipartite,
  $B$~simulates~$A$ on the Kronecker cover~$G'$
  instead of the actual graph~$G$.
  So in some sense,
  our construction only performs a “pseudo simulation”,
  where the simulated run
  may not correspond to any possible run on~$G$.
  Nevertheless,
  this is sufficient because
  \Type{*}{*}{*}{weak}-automata cannot distinguish between~$G$ and~$G'$.
\end{proof}

\subsection{Proofs of Section \ref{sec:separations}}

\PropAsFRecogStars*
\begin{proof}
\newcommand{\NE}{\textit{NE}}
We first present a \Type{set}{stabilizing}{liberal}{strong}-automaton that recognizes $S$. 
The states of the automaton are pairs $(d, c)$, where $d \in \{\textit{leaf}, \textit{center}, \textit{unknown}, \textit{neither}\}$ is the \emph{estimate} of $v$, and $c \in \{ 0,1 \}$ is its \emph{color}. The accepting states are those with estimate \textit{leaf} or \textit{center}, and the rejecting states are those with estimate \textit{unknown} or \textit{neither}. Initially all nodes are in state $(\textit{unknown}, 0)$. Let $(d, c)$ be the current state of a node $v$, and let $\NE(v)$ denote the current set of estimates of the neighbors of $v$.
If $v$ is selected by the scheduler, then it moves to the state $(d', c')$, where $c'=1-c$,  and $d'$ is given by:

\begin{description}
\item[(a)] If $\textit{neither} \in \NE(v)$, then  $d'=\textit{neither}$.
\item[(b)] If $\textit{neither} \notin \NE(v)$, $d  = \textit{unknown}$, $\textit{center} \notin \NE(v)$,  and at least two neighbors of $v$ have different colors, then $d'=\textit{center}$.
\item[(c)] If $\textit{neither} \notin \NE(v)$, $d  = \textit{unknown}$, $\textit{center} \in \NE(v)$,  and at least two neighbors of $v$ have different colors, then $d'=\textit{neither}$.
\item[(d)] If $\textit{neither} \notin \NE(v)$, $d  = \textit{unknown}$, $\NE(v) = \Set{\textit{center}}$,  and all neighbors of $v$ have the same color, then $d' = \textit{leaf}$.
\item[(e)] If $\textit{neither} \notin \NE(v)$, $d  = \textit{center}$, and $\textit{center} \in \NE(v)$, then  $d'=\textit{neither}$.
\item[(f)] If $\textit{neither} \notin \NE(v)$, $d = \textit{leaf}$, and at least two neighbors of $v$ have different colors, then $d'=\textit{neither}$.
\item[(g)] Otherwise $d' = d$.
\end{description}

Assume that $G$ is not a star. If it consists of exactly two nodes connected by an edge, then it is easy to see that the estimate of both nodes remains forever \textit{unknown}, so $G$ is rejected. Otherwise, $G$ contains at least one edge $\{u, v\}$ such that both $u$ and $v$ have degree at least 2. We show that eventually at least one of~$u$ and~$v$ reaches estimate \textit{neither}. By (a), every node eventually reaches estimate \textit{neither}, and so $G$ is rejected.

First we claim that both $u$ and $v$ eventually reach states with estimate  \textit{center} or \textit{neither}. This is the point at which we make crucial use of strong fairness: by Lemma~\ref{lem:strongfairness}, it ensures that $v$ is eventually selected in a configuration \emph{in which at  least two neighbors of $v$ have different colors}. If in this configuration $v$ has estimate \textit{unknown}, then $v$ moves either to \textit{neither} (cases~(a) and ~(c)) or \textit{center} (case (b)), and  if it has estimate \textit{leaf}, then $v$ moves to \textit{neither} (cases~(a) and ~(f)). The same holds for $u$, and so the claim is proved. 

By the claim, at least one of $u$ and $v$ eventually reaches estimate \textit{neither}, in which case we are done, or both eventually reach \textit{center}; in this case, the next time one of the two is selected it moves to \textit{neither} (case (e)), and we are also done.

Assume now that $G$ is a star. We show that every node ends up with estimate \textit{leaf} or \textit{center}. Since leaves have only one neighbor, cases (b), (c), and (f) never apply, and so they can never reach estimate \textit{center}. This implies that case (e) also never applies for leaves. Further, as long as the center has estimate \textit{unknown}, all leaves remain in \textit{unknown}, because (a) and (d) do not apply. It follows that the center also remains in \textit{unknown} until it is selected in a configuration in which at least two neighbors have different colors, which eventually happens by strong fairness; at that moment it moves to \textit{center} (case (b)). Since (e) never applies, the center maintains the estimate \textit{center} forever. Once the center has reached estimate \textit{center}, whenever a leaf is selected it changes its estimate to \textit{leaf} (case (d)). After that, no other rule than (g) ever applies, and so the leaf maintains estimate \textit{leaf} forever. This concludes the proof of the first part of the proposition.

\medskip

For the second part we present a \Type{multiset}{halting}{liberal}{weak}-automaton with counting bound $\beta=2$ that recognizes $S$. We only sketch the automaton, since the ability to count makes the task of recognizing $S$ easy. 
Recall that $\beta=2$ means that for each state $q$ a node can detect if it has zero, exactly one, or at least two neighbors in $q$. 

The states of the automaton are $\{ \textit{init}, \textit{leaf}, \textit{non-leaf}, \textit{accept}, \textit{reject} \}$. The yes and no states are \textit{accept} and \textit{reject},  respectively

Initially all nodes are in state $\textit{init}$. Let $v$ be a node. Observe that, since the automaton can count, a selected node can directly observe if it is a leaf or not. When $v$ is selected:
\begin{description}
\item[(a)] If $v$ has only one neighbor, then
\begin{description}
\item[(a.1)] if the neighbor is in state \textit{init} or \textit{non-leaf}, $v$ moves to \textit{leaf};
\item[(a.2)] if the neighbor is in state \textit{leaf} or \textit{reject}, $v$ moves to \textit{reject}; and
\item[(a.3)] if the neighbor is in state \textit{accept}, $v$ moves to state \textit{accept}.
\end{description}
\item[(b)] If $v$ has more than one neighbor, then
\begin{description}
\item[(b.1)] if at least one neighbor is in state \textit{reject} or \textit{non-leaf}, $v$ moves to \textit{reject};
\item[(b.2)] else if at least one neighbor is in state \textit{init}, $v$ moves to \textit{non-leaf};
\item[(b.3)] else (all neighbors in states \textit{leaf} or \textit{accept}), $v$ moves to \textit{accept}.
\end{description}
\end{description}

Assume $G$ is a star. By (a.1) and (b.2), a node can only reach state \textit{leaf} (\textit{non-leaf}) if it really is a leaf (non-leaf) of $G$. This fact, together with an inspection of (a.2) and (b.1), shows that a node can only reach \textit{reject} if $G$ is not a star. Further inspection of (a.3) and (b.3) shows that it can only reach state \textit{accept} if $G$ is a star. So it only remains to prove that every node eventually reaches \textit{accept} or  \textit{reject}. By (a.2) and (a.3) it suffices to show that eventually some node reaches \textit{accept} or  \textit{reject}. If all nodes are leaves, then there are at most two nodes, and by (a.2) they eventually move to \textit{reject}. Assume now that there is at least one non-leaf. By (a), (b), and weak fairness, eventually all nodes leave state \textit{init}, and so  all non-leaves are in one of \textit{non-leaf}, \textit{accept}, or \textit{reject}. If at least one non-leaf is in \textit{accept} or \textit{reject}, we are done. Otherwise, if $G$ is a star, then by (b.3) the (unique) non-leaf eventually moves to \textit{accept}; if $G$ is not a star, then two neighbors are in state \textit{non-leaf}, and by (b.1) the next time any of them is selected it moves to \textit{reject}.
\end{proof}

\PropCthreeSeparation*

\begin{proof}
\noindent (a) $\Triangle$ is recognizable by a \Type{set}{stabilizing}{liberal}{strong}-automaton. \\
We sketch the behavior of a \Type{set}{stabilizing}{liberal}{strong}-automaton for $\Triangle$. Recall that the nodes of the cycle $\triangle$ are labeled by $0$, $1$, and $2$. First, if a node with label $i$ detects that it has more than two neighbors, or that the set of labels of its neighbors is different from $\Set{{(i-1) \bmod 3}, {(i+1) \bmod 3}}$, then the node moves to a rejecting state. Nodes with a neighbor in a rejecting state also move to a rejecting state. To detect that a node has more than two neighbors, the automaton uses the same trick as in Proposition \ref{prop:AsF-recog-stars}: the state of each node has a \emph{color} component with three possible values, which changes whenever the node is active. By strong fairness and Lemma~\ref{lem:strongfairness}, if the node has more than two neighbors, then it will eventually see that its neighbors have three different colors, and reject. 

As we consider only connected graphs,
the preceding tests ensure that graphs which are not cycles with cyclic labeling ${0{-}1{-}2}$ are eventually rejected.
It remains to ensure that a cycle of length other than 3 is eventually rejected too. For this, the automaton checks an equivalent condition:  the cycle contains exactly one node labeled by $2$.  Nodes labeled by $2$ alternate between two phases, 0 and 1. In phase $b \in \{0,1\}$, the node asks its neighbor labeled by $b$ to propagate a signal through the cycle, and then waits until a signal arrives. (For this, the node moves to a state indicating that it wants the signal to be propagated, and waits for the neighbor to reach a state indicating it has received the message.) If the next signal arrives through the $(1-b)$ neighbor, the node moves to phase $(1-b)$; if it arrives through the $b$ neighbor, the node moves to a rejecting state. If the cycle contains only one node labeled by~$2$, then every signal sent through one neighbor arrives through the other. However, if the cycle contains at least two nodes labeled by~$2$, then by strong fairness, eventually two consecutive $2$-nodes send a clockwise and a counterclockwise signal, and so eventually a $2$-node sends a signal through a node, receives the next signal through the same node, and moves to the rejecting state.

\medskip

\noindent (b) $\Triangle$ is not recognizable by \Type{multiset}{stabilizing}{*}{weak}-automata. \\
Let $\hexagon$ be the hexagon whose nodes are labeled by ${0{-}1{-}2{-}0{-}1{-}2}$ (and back to $0$). We show that every \Type{multiset}{stabilizing}{*}{weak}-automaton $A$ that accepts~$\triangle$ also accepts~$\hexagon$. For this, consider the synchronous schedules~$\Schedule_3$ and~$\Schedule_6$ of~$A$ on~$\triangle$ and~$\hexagon$. Observe that  $\Schedule_3$ and $\Schedule_6$ are weakly fair, and so the runs $\Run_3=(\Configuration_{3,0}, \Configuration_{3,1} \cdots)$ and $\Run_6=(\Configuration_{6,0}, \Configuration_{6,1} \cdots)$ scheduled by them are fair too.  By the consistency condition, $\Run_3$ is accepting. Let $v, v'$ be nodes of $\triangle$ and $\hexagon$, respectively, carrying the same label. 
It is easy to see that $\Configuration_{3,t}(v) = \Configuration_{6,t}(v')$ for every time $t \geq 0$.
So $\Run_6$ is also accepting, and thus, by the consistency condition, $A$ accepts~$\hexagon$.

\medskip

\noindent (c) $\Triangle$ is not recognizable by 
\Type{multiset}{halting}{*}{strong}-automata. \\
We proceed as in part (b): we show that every \Type{multiset}{halting}{*}{strong}-automaton $A$ that accepts~$\triangle$ also accepts~$\hexagon$. Let $\Schedule_3 = (\Selection_{3,0}, \Selection_{3,1}, \ldots)$ be a strongly fair schedule of~$A$ on~$\triangle$, and let $\Run_3=(\Configuration_{3,0}, \Configuration_{3,1} \cdots)$ be the run scheduled by it. Since $\triangle$ is accepted, $\Run_3$ is accepting, and so there is a configuration 
$\Configuration_{3,t_0}$ in which every agent is in an accepting state. 

For every $1 \leq t \leq t_0$, let $\Selection_{6,t}$ be the selection that for every label $\ell = 0,1,2$
contains the two nodes of $\hexagon$ labeled by $\ell$ if{}f $\Selection_{3,t}$ contains the node of $\triangle$ labeled by $\ell$ (loosely speaking, $\Selection_{6,t}$ ``duplicates'' $\Selection_{3,t}$). Let $\Schedule_6$ be 
the result of choosing an arbitrary strongly fair schedule $(\Selection_{6,0}', \Selection_{6,1}', \ldots)$
of~$A$ on~$\hexagon$, and replacing $\Selection_{6,0}', \ldots, \Selection_{6,t_0}'$ by 
$\Selection_{6,0}, \ldots, \Selection_{6,t_0}$. Since $\Schedule_6$ satisfies the definition of strong fairness, the run $\Run_6=(\Configuration_{6,0}, \Configuration_{6,1} \cdots)$ scheduled by it is also strongly~fair.

Let $v, v'$ be nodes of $\triangle$ and $\hexagon$, respectively, carrying the same label.  By the definition of the selection  $\Selection_{6,t}$ for $1 \leq t \leq t_0$, we have $\Configuration_{3,t_{0}}(v) = \Configuration_{6,t_0}(v')$.  So, in particular,  every node of $\Configuration_{6,t_0}(v')$ is in an accepting state. Since $A$ is a halting automaton, nodes that have accepted can no longer change their state, so $\Run_6$ is accepting, and therefore $A$ accepts $\hexagon$.
\end{proof}

\PropCasFRecogStars*

\begin{proof}
In Proposition \ref{prop:AsF-recog-stars} we have exhibited a \Type{multiset}{halting}{liberal}{weak}-automaton $A$ recognizing $\Stars$. We now give a  \Type{multiset}{halting}{exclusive}{strong}-automaton $B$ with $\beta=2$ that uses counting, exclusivity, and strong fairness to further decide if the number of leaves is even. Loosely speaking, $B$ first executes $A$; if $A$ rejects, then $B$ rejects, because the graph is not even  a star. If $A$ accepts, then $B$ enters a new phase during which it counts the number of leaves modulo 2. 
By Theorem~\ref{thm:simulate-exclusivity-with-fairness},
$B$ is equivalent to a \Type{multiset}{halting}{liberal}{strong}-automaton.

We can assume that when  $A$ accepts, all nodes are labeled with either \textit{leaf} or
\textit{center} (the unique non-leaf). We first give an informal description of $B$. Leaves can be in states \textit{visible}, \textit{invisible}, \textit{dead}, \textit{even}, or \textit{odd}. Intuitively, while leaves have not been counted by the center, they alternate between the states \textit{visible} and \textit{invisible}. The center only increments its modulo-$2$ counter if exactly one leaf is \textit{visible}. After a leaf is counted, it moves to \textit{dead}. When all leaves become dead, i.e., when they have all been counted, the center decides whether to accept or reject; the leaves read the decision from the counter, and move to  \textit{even} or \textit{odd} accordingly. 

Formally, the state of a leaf is one out of  $\{ \textit{visible}, \textit{invisible}, \textit{dead}, \textit{even}, 
 \textit{odd} \}$, where \textit{even} is accepting, and \textit{odd} is rejecting. Initially all leaves are invisible. The states of the center are of the form 
 $$(ph,p,d) \in \{0,1,2\} \times \{0,1\} \times \{\textit{none}, 0, 1\},$$ 
 \noindent where $ph$ is the phase, $p$ the parity, and $d$ the decision, respectively. The initial state is $(0, 0, \textit{none})$, and the accepting and rejecting states are those with decision~$0$ and~$1$, respectively. The transition function is as follows. Let $v$ be a node selected by the scheduler.
\begin{itemize}
\item If $v$ is a leaf, and its current state is~$s$, then:
\begin{itemize}
\item If $s = \textit{invisible}$ (\textit{visible}) and the center is in phase $0$, then $v$ moves to \textit{visible}  (\textit{invisible}). \\
Intuitively, while the center is in phase $0$, $v$ keeps making itself visible and invisible to the center.
By Lemma~\ref{lem:strongfairness}, strong fairness guarantees that eventually exactly one leaf will be visible to the center.
\item If $s = \textit{visible}$ and the center is in phase $1$, then $v$ moves to \textit{dead}. \\
Intuitively, $v$ knows that it has been counted by the center, and dies.
\item If $s = \textit{dead}$ and the center is in phase $2$, then $v$ moves to \textit{even} or \textit{odd}, depending on the decision made by the center. 
\item Otherwise $v$ remains in state $s$.
\end{itemize}
\item If $v$ is the center, and its current state is $\alpha = (ph,p,d)$, then $v$ changes its state as follows:
\begin{itemize}
\item If exactly one leaf is visible and $ph=0$, then the center moves to $\alpha[ph \rightarrow 1, p \rightarrow 1-p]$. \\
(Where $\alpha[ph \rightarrow 1, p \rightarrow 1-p]$ denotes the result of substituting $1$ for $ph$ and $1-p$ for $p$ in $\alpha$.)
Intuitively, the center counts the visible leaf.
Since the scheduler is exclusive,
no other leaf can change its visibility status
at the same time as the center performs this operation.
This guarantees that multiple leaves are not counted as one,
and that the unique counted leaf remains visible.
\item If all leaves are invisible or dead, at least one leaf is invisible, and $ph=1$, then the center 
moves to $\alpha[ph \rightarrow 0]$. \\
Intuitively, after counting a leaf the center sees that the leaf knows it has been counted and died.
\item If all leaves are dead and $ph=1$, then the center 
moves to $\alpha[ph \rightarrow 2, d \rightarrow p]$. \\
Intuitively, the counting is done, and the center takes the current parity as the decision.
\item Otherwise the center remains in state $\alpha$.
\end{itemize}
\end{itemize}

In every strongly fair run, eventually the center is selected in a configuration in which exactly one 
leaf, say $v$ is visible. This is detected by the center, which updates its counter and moves to phase $1$. 
The center stays in phase 1 until it sees that all leaves are invisible or dead, which guarantees that $v$ knows it has been counted and died. The center then moves to phase $0$ again, to count the next leaf. When all leaves have been counted (which the center can detect by observing that they are all dead), the center knows that its parity bit is the correct one, and moves to phase 2. By fairness, all leaves eventually read the result from the center, and move to \textit{even} or \textit{odd}.

Notice how the use of an exclusive scheduler simplifies our design.
Indeed,
the distributed machine described above
would not be correct under a liberal scheduler,
because the center could be deceived as follows.
Let~$u$ be the center,
and let~$v_1$ and~$v_2$ be two leaves.
Suppose that $u$ is in phase~$0$ and $v_1$ is the only visible leaf.
Next,
$u$ and~$v_2$ are selected \emph{simultaneously},
so $u$ moves to phase~$1$ and increments its counter by~$1$
(as it sees exactly one visible leaf),
while~$v_2$ becomes visible
(as it sees the center in phase~$0$).
Now both~$v_1$ and~$v_2$ will die
(as they are visible and~$u$ is in phase~$1$),
but only~$v_1$ has been counted.
In order to avoid such problems,
we could introduce an additional verification phase in which
the center checks that it has counted exactly one leaf,
but this would make the protocol more complicated.
So instead,
we first take exclusivity for granted,
and then implement it using the construction of
Theorem~\ref{thm:simulate-exclusivity-with-fairness}.
\end{proof}

\PropCASFCannotEvenStars*

\begin{proof}
  For the sake of obtaining a contradiction,
  let us assume that there exists
  a \Type{set}{stabilizing}{liberal}{strong}-automaton $A$
  with machine
  $M =\Tuple{Q, \delta_0, \delta, Y, N}$
  that recognizes $\EvStars$.
  We must first introduce several concepts related to~$M$
  before we can get to the actual contradiction argument.

  Without loss of generality,
  we assume that the language of star graphs is
  $\Stars = \{ \StarGraph_i \mid i \geq 2 \}$,
  where $\StarGraph_i$ is the unlabeled graph with nodes
  $\{ r, l_1, \ldots, l_i \}$
  and edges
  $\{ r, l_1\}, \ldots, \{r, l_i\}$.
  We call $r$ the \intro{root} and
  $l_1, \ldots, l_i$ the \intro{leaves} of the star.
  Throughout this proof,
  we consider only  configurations of~$M$
  whose underlying graph is $\StarGraph_i$ for some $i \geq 2$,
  and call them \intro{star configurations}.
  For notational simplicity,
  we sometimes identify a star configuration with a tuple
  $\Configuration = \Tuple{q, f}$,
  where
  $q \in Q$
  is the state of $r$
  and
  ${f \colon Q \to \Naturals}$
  is a function that assigns to each state~$p$
  the number of leaves of~$\StarGraph_i$ that are in state~$p$.
  We denote the total number of nodes of~$\Configuration$
  by~$\Card{\Configuration}$,
  i.e., $\Card{\Configuration} = 1 + \sum_{p \in Q} f(p)$.
  Clearly, a configuration
  $\Configuration$ of $\StarGraph_i$
  satisfies $\Card{\Configuration} = i+1$.

  A \intro{base configuration} is a star configuration
  in which every state $p \in Q$
  occurs at most once on a leaf node.
  We write $\mathit{Base}$ for the set of all base configurations,
  i.e., $\mathit{Base} = Q \times \Set{0, 1}^Q$.
  The base configuration associated with
  $\Configuration = \Tuple{q, f}$ is the configuration
  $\operatorname{base}(\Configuration) = \Tuple{q, f'}$
  such that
  $f'(p) = \min \Set{f(p), 1}$
  for all $p \in Q$.
  Intuitively,
  $\operatorname{base}(\Configuration)$
  is the smallest star configuration
  in which the root sees the same set of states as in~$\Configuration$.

  Given two configurations $\Configuration = \Tuple{q, f}$ and
  $\Configuration' = \Tuple{q', f'}$,
  we let $C \preceq C'$ denote that
  $q = q'$, $f(p) \leq f'(p)$ for all $p \in Q$, and ${f(p) = 0}$ if and only if
  ${f'(p) = 0}$.
  Observe that $\preceq$ is a partial order.
  The \intro{upward closure} of~$\Configuration$
  is the set $\UpwardClosure{\Configuration} \DefEq \{ C' \mid C' \succeq C\}$.
  In other words,
  $\UpwardClosure{\Configuration}$ is the set of configurations
  that one can obtain by duplicating some leaves of~$\Configuration$.
  Notice that the root of such a configuration also sees
  the same set of states as in~$\Configuration$.

  The successor relation on configurations of~$M$
  will be denoted by~$\rightarrow$.
  That is,
  for two configurations~$\Configuration[1]$ and~$\Configuration[2]$,
  we write
  $\Configuration[1] \rightarrow \Configuration[2]$
  if and only if
  $\Configuration[1]$ can reach~$\Configuration[2]$
  in a single execution step of~$M$.
  (This means that there exists
  a selection~$\Selection$ of $\Configuration[1]$'s underlying graph
  such that one obtains~$\Configuration[2]$
  by evaluating $M$'s transition function~$\delta$
  at the nodes of~$\Configuration[1]$ selected by~$\Selection$.)
  We lift this relation to sets of configurations~%
  $\ConfigurationSet[1]$ and~$\ConfigurationSet[2]$
  in a rather natural way,
  writing
  $\ConfigurationSet[1] \rightarrow \ConfigurationSet[2]$
  if and only if
  for every $\Configuration[1] \in \ConfigurationSet[1]$
  there exists some $\Configuration[2] \in \ConfigurationSet[2]$
  such that
  $\Configuration[1] \rightarrow \Configuration[2]$.
  Furthermore,
  we use the standard notation~$\rightarrow^*$
  for the reflexive-transitive closure of~$\rightarrow$,
  and~$\rightarrow^i$
  for the $i$-fold composition of~$\rightarrow$ with itself,
  where $i \in \Naturals$.

  \smallskip
  \noindent\textbf{Claim~1.}\,
  If
  $\Configuration[1] \rightarrow^* \Configuration[2]$,
  then
  $\UpwardClosure{\Configuration[1]} \rightarrow^* \UpwardClosure{\Configuration[2]}$. \\
  Proceeding by induction over $i \in \Naturals$,
  we show that
  $\Configuration[1] \rightarrow^i \Configuration[2]$
  implies
  ${\UpwardClosure{\Configuration[1]} \rightarrow^i \UpwardClosure{\Configuration[2]}}$.
  The case $i = 0$ is trivial,
  since
  $\Configuration[1] \rightarrow^0 \Configuration[2]$
  means that
  $\Configuration[1] = \Configuration[2]$.

  For $i = 1$,
  we observe that for every configuration
  $\Configuration[1]' \in \UpwardClosure{\Configuration[1]}$,
  the roots of~$\Configuration[1]$ and~$\Configuration[1]'$
  can behave identically
  (as they see the same set of states),
  and if~$\Configuration[1]'$ has more leaves than~$\Configuration[1]$,
  then the additional leaves can copy
  the behavior of their indistinguishable siblings.
  So
  $\Configuration[1] \rightarrow^1 \Configuration[2]$
  implies that there is some
  $\Configuration[2]' \in \UpwardClosure{\Configuration[2]}$
  such that
  $\Configuration[1]' \rightarrow^1 \Configuration[2]'$.
  More precisely,
  let $\StarGraph, \StarGraph' \in \Stars$ be the underlying graphs
  of~$\Configuration[1]$ and~$\Configuration[1]'$,
  respectively.
  Since $\Configuration[1]' \succeq \Configuration[1]$,
  we know that the set of leaves of~$\StarGraph'$
  is a superset of the set of leaves of~$\StarGraph$.
  Let $\Selection$ be the selection of $\StarGraph$ underlying the step
  $\Configuration[1] \rightarrow^1 \Configuration[2]$.
  We now define the selection $\Selection'$ of $\StarGraph'$ as follows:
  \begin{itemize}
  \item The root~$r$ belongs to~$\Selection'$
    if and only if
    it belongs to~$\Selection$.
  \item For every state $q$:
    if $\Selection$ does not select any leaves in state $q$,
    then neither does $\Selection'$;
    otherwise,
    $\Selection'$ selects all leaves in state $q$ selected by $\Selection$,
    plus all other leaves in state $q$ that do not belong to~$\StarGraph$.
  \end{itemize}
  It follows that $\Selection' \supseteq \Selection$,
  and moreover a leaf of $\StarGraph'$ is selected in $\Selection'$
  only if some leaf of $\StarGraph$ in the same state is selected in $\Selection$.
  So a node of $\Selection'$ can only move to a state, say $q$,
  if some node of $\Selection$ also moves to $q$.
  Letting $\Configuration[2]'$ be
  the configuration reached by selecting $\Selection'$,
  this implies $\Configuration[2]' \in \UpwardClosure{\Configuration[2]}$,
  and thus
  $\UpwardClosure{\Configuration[1]} \rightarrow^1 \UpwardClosure{\Configuration[2]}$.

  For $i \geq 2$,
  the premise
  $\Configuration[1] \rightarrow^i \Configuration[2]$
  tells us that there exists a configuration~$\Configuration[3]$
  such that
  $\Configuration[1] \rightarrow^1
  \Configuration[3] \rightarrow^{i-1}
  \Configuration[2]$.
  By the induction hypothesis,
  this implies
  $\UpwardClosure{\Configuration[1]} \rightarrow^1
  \UpwardClosure{\Configuration[3]} \rightarrow^{i-1}
  \UpwardClosure{\Configuration[2]}$,
  and therefore
  $\UpwardClosure{\Configuration[1]} \rightarrow^i \UpwardClosure{\Configuration[2]}$.
  \hfill $\square$
  \smallskip

  As a direct consequence of Claim~1 we obtain:

  \smallskip
  \noindent\textbf{Claim~2.}\,
  If
  $\Set{\Configuration[1]} \rightarrow^* \UpwardClosure{\Configuration[2]}$,
  then
  ${\UpwardClosure{\Configuration[1]} \rightarrow^* \UpwardClosure{\Configuration[2]}}$. \\
  Indeed,
  $\Set{\Configuration[1]} \rightarrow^* \UpwardClosure{\Configuration[2]}$
  means that there is some
  $\Configuration[2]' \succeq \Configuration[2]$
  such that
  $\Configuration[1] \rightarrow^* \Configuration[2]'$.
  By Claim~1,
  it follows that
  $\UpwardClosure{\Configuration[1]} \rightarrow^* \UpwardClosure{\Configuration[2]'}$.
  Moreover,
  $\Configuration[2]' \succeq \Configuration[2]$
  implies
  $\UpwardClosure{\Configuration[2]'} \subseteq \UpwardClosure{\Configuration[2]}$.
  Therefore we get
  $\UpwardClosure{\Configuration[1]} \rightarrow^* \UpwardClosure{\Configuration[2]}$.
  \hfill $\square$
  \smallskip

  Claim~2 provides the motivation for the last notion we need to introduce:
  if we want to represent the set
  $\mathit{Pre}^*(\UpwardClosure{\Configuration[2]})$
  of \intro{predecessors} of $\UpwardClosure{\Configuration[2]}$
  (i.e., the configurations from which one can reach
  a configuration of $\UpwardClosure{\Configuration[2]}$ in zero or more steps),
  and if
  $\Configuration[1], \Configuration[1]' \in
  \mathit{Pre}^*(\UpwardClosure{\Configuration[2]})$
  such that
  $\Configuration[1] \prec  \Configuration[1]'$,
  then the representation of
  $\mathit{Pre}^*(\UpwardClosure{\Configuration[2]})$
  does not need to mention~$\Configuration[1]'$ explicitly,
  since
  $\Configuration[1] \in \mathit{Pre}^*(\UpwardClosure{\Configuration[2]})$
  already implies
  $\Configuration[1]' \in \mathit{Pre}^*(\UpwardClosure{\Configuration[2]})$.
  This leads us to represent
  $\mathit{Pre}^*(\UpwardClosure{\Configuration[2]})$
  by its set of minimal elements with respect to~$\preceq$.
  Formally,
  we define
  $\mathit{MinPre}^*(\UpwardClosure{\Configuration[2]})$
  to be the set of all configurations~$\Configuration[1]$
  such that
  $\Set{\Configuration[1]} \rightarrow^* \UpwardClosure{\Configuration[2]}$
  and there exists no configuration
  $\Configuration[1]' \prec \Configuration[1]$
  such that
  $\Set{\Configuration[1]'} \rightarrow^* \UpwardClosure{\Configuration[2]}$.

  \smallskip
  \noindent\textbf{Claim~3.}\,
  For every star configuration~$\Configuration[2]$,
  the set
  $\mathit{MinPre^\star}(\UpwardClosure{\Configuration[2]})$
  is finite. \\
  Since there are only finitely many base configurations,
  and every star configuration lies in
  the upward closure of its base configuration,
  it suffices to show that
  $\mathit{MinPre^\star}(\UpwardClosure{\Configuration[2]}) \cap
  \UpwardClosure{\Configuration[1]}$
  is finite
  for all $\Configuration[1] \in \mathit{Base}$.
  This follows easily from Dickson's Lemma,
  which states that for every infinite sequence
  $\vec{v}_1, \vec{v}_2, \ldots$
  of vectors of $\Naturals^k$,
  there exist two indices $i < j$
  such that
  $\vec{v}_i \leq \vec{v}_j$
  with respect to the pointwise partial order on vectors.
  Indeed,
  assume
  $\mathit{MinPre^\star}(\UpwardClosure{\Configuration[2]}) \cap
  \UpwardClosure{\Configuration[1]}$
  is infinite,
  and let
  $\Configuration_1, \Configuration_2, \ldots$
  be an enumeration of its elements,
  where
  $\Configuration_i = \Tuple{q, f_i}$.
  By Dickson's Lemma,
  there are $i < j$ such that
  $f_i(p) \leq f_j(p)$ for all $p \in Q$.
  This implies
  $\Configuration_i \preceq \Configuration_j$,
  and thus contradicts the minimality of $\Configuration_j$.
  \hfill $\square$
  \smallskip

  With all these notions in place,
  we can finally come back to the contradiction argument
  that proves Proposition~\ref{prop:CA*F-cannot-even-stars}.
  Let~$m$ be the maximum cardinality of any configuration
  that lies in the set
  $\mathit{MinPre^\star}(\UpwardClosure{\Configuration[2]})$
  of some base configuration~$\Configuration[2]$, i.e.,
  \begin{equation*}
    m \DefEq
    \max\bigSetBuilder{
      \Card{\Configuration[1]}
    }{
      \text{%
        there exists
        $\Configuration[2] \in \mathit{Base}$
        such that
        $\Configuration[1] \in \mathit{MinPre^\star}(\UpwardClosure{\Configuration[2]})$%
      }
    }.
  \end{equation*}
  Observe that $m$ is well-defined because
  $\mathit{Base}$ is finite by definition,
  and
  $\mathit{MinPre^\star}(\UpwardClosure{\Configuration[2]})$
  is finite by Claim~3.

  Now consider a star $\StarGraph_n$
  whose number of leaves~$n$ is chosen such that
  $n$ is even and
  $n \geq  (m \cdot |Q|)$,
  where~$|Q|$ is the number of states of~$A$.
  Let
  $\Run = \Tuple{\Configuration_0, \Configuration_1, \dots}$
  be a fair run of~$A$ on~$\StarGraph_n$.
  Since $n$ is even,
  $\Run$ is accepting,
  which means that there is a time $r \in \Naturals$
  such that for every $r' \geq r$,
  the configuration~$\Configuration_{r'}$ is accepting.
  Moreover,
  since the total number of configurations of~$A$ on~$G$ is finite,
  there is $s \geq r$
  such that the (accepting) configuration~$\Configuration_s$
  is visited infinitely often in~$\Run$.
  Since $A$ is strongly fair,
  no rejecting configuration is reachable from~$\Configuration_s$,
  because otherwise, by Lemma~\ref{lem:strongfairness},
  $\Run$ must visit that configuration.
  Let $\Configuration_s = \Tuple{q, f}$,
  and let~$p_{\text{max}}$ be a state
  that occurs maximally often at a leaf node of~$\Configuration_s$,
  i.e., $f(p_{\text{max}}) \geq f(p)$ for all $p \in Q$.

  Based on~$\Run$,
  we construct a fair run
  $\Run' = \Tuple{\Configuration_0', \Configuration_1', \dots}$
  of~$A$ on the star $\StarGraph_{n + 1}$
  such that
  the first $s + 1$ configurations
  $\Tuple{\Configuration_0', \dots, \Configuration_s'}$
  copy the behavior of~$\Run$.
  More precisely,
  the leaves $l_1, \ldots, l_n$ behave exactly as in~$\Run$.
  For the leaf~$l_{n + 1}$,
  let~$l_i$ be any of the leaves of~$\StarGraph_{n}$
  such that
  $\Configuration_s(l_i) = p_{\text{max}}$.
  During the first $s$ steps,
  the schedule of $\Run'$ selects~$l_{n+1}$
  if and only if
  the schedule of~$\Run$ selects~$l_i$.
  It follows that $l_{n+1}$ visits the same sequence of states as~$l_i$,
  and so
  $\Configuration_s'(l_{n+1}) = p_{\text{max}}$.
  Note that
  this construction does not contradict the strong fairness constraint
  because we only fix a finite prefix of~$\Run'$.
  We now extend~$\Run'$ in such a way
  that it satisfies the strong fairness constraint.

  Since $n + 1$ is odd,
  the run~$\Run'$ must eventually visit only rejecting configurations.
  In particular,
  some rejecting configuration~$\Configuration_t'$
  is reachable from~$\Configuration_s'$,
  and so
  $\Configuration_s'  \succeq \Configuration[2]$
  for some
  $\Configuration[2] \in
  \textit{MinPre}^\star(\UpwardClosure{\operatorname{base}(\Configuration_t')})$.

  \smallskip
  \noindent\textbf{Claim~4.}\,
  $\Configuration_s  \succeq \Configuration[2]$. \\
  Recall that
  $\Configuration_s = \Tuple{q, f}$,
  and let
  $\Configuration_s' = \Tuple{q, f'}$
  and
  $\Configuration[2] = \Tuple{q, g}$.
  We have to show that $f(p) \geq g(p)$ for every state $p \in Q$.
  To do so,
  we distinguish two cases:
  \begin{itemize}
  \item If $p \neq p_{\text{max}}$,
    then by the definition of $\Configuration_s'$,
    we have $f(p) = f'(p)$,
    and since
    $\Configuration_s' \succeq \Configuration[2]$,
    it follows immediately that
    $f(p) \geq g(p)$.
  \item If $p = p_{\text{max}}$,
    then by the pigeonhole principle
    and the definitions of~$n$ and~$p_{\text{max}}$,
    we have $f(p) \geq n/|Q| \geq m$.
    Moreover,
    we have $g(p) \leq m$
    because the definition of~$m$
    ensures that $\Card{\Configuration[2]} \leq m$.
    Hence,
    $f(p) \geq g(p)$.
    \hfill $\square$
  \end{itemize}

  Since
  $\Configuration[2] \in
  \textit{MinPre}^\star(\UpwardClosure{\operatorname{base}(\Configuration_t')})$,
  Claim~4 tells us
  that $\Configuration_s$ can also reach some rejecting configuration
  in $\UpwardClosure{\operatorname{base}(\Configuration_t')}$.
  This contradicts what we have established above.
  We therefore conclude that
  \Type{set}{stabilizing}{liberal}{strong}-automata
  cannot recognize $\EvStars$,
  and by Theorem~\ref{thm:simulate-exclusivity-with-fairness},
  the same holds for \Type{set}{stabilizing}{exclusive}{strong}-automata.
\end{proof}

\subsection{Proofs of Section \ref{sec:pprotocols}}

\PropPopulationProtocols*

\begin{proof}
  We present a simulation
  that runs a graph population protocol on a distributed automaton.
  To this end,
  the automaton has to simulate a scheduler that selects
  ordered pairs of adjacent nodes instead of arbitrary sets of nodes.
  For any pair $\Tuple{u, v}$ that is selected to perform a transition,
  let us call~$u$ the \emph{initiator} and~$v$ the \emph{responder}
  of the transition.
  By Theorem~\ref{thm:simulate-exclusivity-with-fairness},
  we may assume that
  the automaton's scheduler selects a single node in each step.

  The main idea of the construction is as follows:
  When a node $u$ is selected and sees that
  it can become the initiator of a transition,
  it declares its intention to do so by raising the flag~“?”.
  Then $u$ waits until
  some neighbor~$v$ is selected and raises the flag~“!”,
  which signals that $v$ wants to become the responder of a transition.
  If this happens,
  the next time $u$ is selected,
  it computes its new state according to
  the state of $v$ and the transition function of the population protocol,
  but also keeps its old state in memory
  so that $v$ can still see it.
  After that,
  $v$~also updates its state,
  and finally $u$ deletes its old state,
  which completes the transition.
  Throughout this protocol,
  the nodes verify that they have exactly one partner during each transition.
  If this condition is violated,
  they raise the error flag~“$\bot$” and abort their current transition.

  Formally,
  let $\varPi = \Tuple{Q, \delta_0, \delta, Y, N}$
  be a population protocol on $\Alphabet$-labeled graphs.
  We construct the \Type{multiset}{stabilizing}{exclusive}{strong}-automaton~$A$
  with machine
  $M = \Tuple{Q', \delta'_0, \delta', Y', N'}$,
  where
  \begin{equation*}
    Q' =\, Q \,\cup\, (Q \times \Set{?, !, \bot}) \,\cup\, Q^2,
  \end{equation*}
  the sets~$Y'$ and $N'$ are defined analogously,
  $\delta'_0(a) = \delta_0(a)$ for all $a \in \Alphabet$,
  and $\delta'$~is defined as follows.
  Let~$v$ be the node currently selected by the scheduler.

  \begin{enumerate}
  \item In case~$v$ is in state $q \in Q$:
    \begin{enumerate}
    \item \label{itm:request}
      if all of $v$'s neighbors are in states of~$Q$,
      then $v$ moves to $\Tuple{q, ?}$;
    \item \label{itm:accept}
      if exactly one of $v$'s neighbors is
      in some state of $Q \times \Set{?}$
      and all others are in states of~$Q$,
      then $v$ moves to $\Tuple{q, !}$;
    \item \label{itm:deadlock}
      if several of $v$'s neighbors are in states of $Q \times \Set{?}$,
      then $v$ moves to $\Tuple{q, \bot}$;
    \item \label{itm:wait-silence}
      otherwise, $v$ remains in state~$q$.
    \end{enumerate}
    Intuitively,
    in rule~\ref{itm:request},
    $v$ makes a request for a transition partner,
    in rule~\ref{itm:accept},
    $v$ accepts the request of some other node,
    and in rule~\ref{itm:deadlock},
    $v$ signals an error
    because it has received multiple requests.
    Signaling the error is necessary to guarantee that
    two requesting nodes with a common neighbor
    do not end up in a deadlock.
    In rule~\ref{itm:wait-silence},
    $v$ simply waits for ongoing transitions in its neighborhood to be completed.
  \item In case~$v$ is in state $\Tuple{q, ?}$:
    \begin{enumerate}
    \item \label{itm:wait-accept}
      if all of $v$'s neighbors are in states of~$Q$,
      then $v$ remains in $\Tuple{q, ?}$;
    \item \label{itm:transition-init}
      if exactly one of $v$'s neighbors is in
      a state of the form $\Tuple{p, !}$
      and all others are in states of~$Q$,
      then $v$ moves to $\Tuple{q, \delta(q, p)}$;
    \item \label{itm:abort-request}
      otherwise, $v$ moves to $\Tuple{q, \bot}$.
    \end{enumerate}
    Intuitively,
    in rule~\ref{itm:wait-accept},
    $v$ waits for some node to accept its request,
    in rule~\ref{itm:transition-init},
    $v$ initiates a transition of~$\varPi$
    with the unique responder that has accepted its request,
    and in rule~\ref{itm:abort-request},
    $v$~aborts its attempt to make a transition.
    The latter happens either
    if some neighbor of~$v$ has received multiple requests,
    or if several nodes have accepted $v$'s request
    (in which case $v$'s new state informs those nodes of the error).
  \item In case~$v$ is in state $\Tuple{q, !}$:
    \begin{enumerate}
    \item \label{itm:wait-confirmation}
      if exactly one of $v$'s neighbors is
      in some state of $Q \times \Set{?}$
      and all others are in states of~$Q$,
      then $v$ remains in $\Tuple{q, !}$;
    \item \label{itm:transition-proceed}
      if exactly one of $v$'s neighbors is in
      a state of the form $\Tuple{p, p'}$
      and all others are in states of~$Q$,
      then $v$ moves to $\delta(p, q)$;
    \item \label{itm:abort-accept}
      otherwise, $v$ moves to state~$q$.
    \end{enumerate}
    Intuitively,
    in rule~\ref{itm:wait-confirmation},
    $v$ waits for its potential transition partner
    to initiate the transition,
    in rule~\ref{itm:transition-proceed},
    $v$ performs its own part of the transition,
    and in rule~\ref{itm:abort-accept},
    $v$~aborts the transition attempt.
    The latter happens if the initiator of the transition signals an error.
  \item In case~$v$ is in state $\Tuple{q, \bot}$:
    \begin{enumerate}
    \item \label{itm:wait-error}
      if some neighbor of~$v$ is in a state of $Q \times \Set{?, !}$,
      then $v$ remains in $\Tuple{q, \bot}$;
    \item \label{itm:resume}
      otherwise, $v$ moves to state~$q$.
    \end{enumerate}
    Intuitively,
    in rule~\ref{itm:wait-error},
    $v$ waits for its affected neighbors to see that an error has occurred,
    and in rule~\ref{itm:resume},
    $v$ returns to the state it had before the last failed transition attempt.
  \item In case~$v$ is in state $\Tuple{q, q'} \in Q^2$:
    \begin{enumerate}
    \item \label{itm:wait-transition}
      if some neighbor of~$v$ is in a state of $Q \times \Set{!}$,
      then $v$ remains in $\Tuple{q, q'}$;
    \item \label{itm:transition-complete}
      otherwise, $v$ moves to state~$q'$.
    \end{enumerate}
    Intuitively,
    in rule~\ref{itm:wait-transition},
    $v$ waits for its transition responder to perform its part of the transition;
    to make this possible,
    $v$ must still keep its old state~$q$ in memory.
    In rule~\ref{itm:transition-complete},
    the transition has been completed,
    so $v$ can remove its old state.
  \end{enumerate}

  By Lemma~\ref{lem:strongfairness},
  strong fairness guarantees that every ordered pair of nodes
  will be able to perform a transition infinitely often,
  and more generally,
  every finite sequence of pairs will be selected infinitely often
  by the simulated scheduler.
  Moreover,
  if several pairs make transitions simultaneously,
  the construction ensures that none of these pairs have a node in common.
  This means that the outcome of the transitions would not change
  if they were rescheduled sequentially.
  Hence,
  every fair run of automaton~$A$ simulates
  a fair run of population protocol~$\varPi$,
  and since~$\varPi$ satisfies the consistency condition,
  so does~$A$.
  Therefore the two devices are equivalent.

  Notice that the above construction relies on the fact that
  $A$ is a \Type{multiset}{stabilizing}{exclusive}{strong}-automaton:
  nodes must be able to count to verify that
  they have exactly one partner during each transition;
  acceptance by stable consensus and strong fairness
  are required to match the way population protocols are executed;
  and just as in the proof of Proposition~\ref{prop:CasF-recog-stars},
  exclusive selection is used to simplify the design of the automaton.
  In particular,
  when a responder accepts the request of an initiator
  (rule~\ref{itm:accept}),
  it is guaranteed that
  none of its other neighbors make a new request at the same time.
  Similarly,
  when a node initiates a transition with a responder
  (rule~\ref{itm:transition-init}),
  it can be sure that
  its request is not simultaneously accepted by another node.
\end{proof}

\subsection{Proofs of Section \ref{sec:conclusions}}

\PropAFkSimulatesCAF*

\newcommand{\fc}{\textit{fc}}
\newcommand{\sco}{\textit{sc}}
\newcommand{\NE}{\textit{NE}}
\newcommand{\ConfigurationWC}{\Configuration_{wc}}

\begin{proof}
Given a \Type{multiset}{stabilizing}{*}{strong}-automaton $A$, we have to describe a \Type{set}{stabilizing}{*}{strong}-automaton \(B\) such that for every graph~$G$ of maximum degree~$k$, every fair run of $B$ on $G$ simulates some fair run of $A$ on $G$. Observe that this is enough to prove that $A$ and $B$ are equivalent on graphs of maximum degree~$k$. Indeed, since by assumption $A$ satisfies the consistency condition, either all fair runs of $A$ on $G$ are accepting, or all are rejecting. If every fair run of $B$ on $G$ simulates some fair run of $A$ on $G$, then $B$ also satisfies the consistency condition and accepts $G$ if{}f $A$ accepts $G$. 

In the following,
we construct a \Type{set}{stabilizing}{liberal}{strong}-automaton~$B$ that simulates
a \Type{multiset}{stabilizing}{liberal}{strong}-automaton~$A$ on any graph of maximum degree~$k$.
(The same construction can also be used to go from \Type{multiset}{stabilizing}{exclusive}{strong}-automata to \Type{set}{stabilizing}{exclusive}{strong}-automata.)

Let $Q$ be the set of states of $A$. A state of $B$ is a fivetuple $\alpha =(q_0, q, p, \fc, \sco)$, where $q_0, q \in Q$ are the \emph{initial} and \emph{current state}, respectively, $p \in \{0,1,2\}$ is the \emph{phase}, and $\fc \in \Range{k^2}$ is the \emph{first color}, and $\sco \in \{0,1\}$ is the \emph{second color}, respectively.

Let $G = \Tuple{V, E, \lambda}$ be a graph of maximum degree~$k$. The initial state of a node $v$ of $G$ in $B$ is $(q_0,q_0,0,0,0)$, where $q_0 = \delta_0(\lambda(v))$ and $\delta_0$ is the initialization function of $A$. 
Let us now give a more precise but still intuitive description of the intended meaning of ``a node $v$ of a graph $G$ is currently in state $\alpha=(q_0, q, p, \fc,\sco)$''.  The first two components are straightforward:

\begin{itemize}
\item $q_0$ is always $\delta_0(\lambda(v))$. (That is, the transition function of $B$, introduced below, never changes the first component of a state.) Sometimes the node needs to go back to its initial state, and this component just tells the node where to go.
\item $q$ is the current state of $v$ in the run of $A$ being simulated.
\end{itemize}

The other three components require some further explanation. Given a node $v$, let $\NE(v)$ be the set containing $v$ and its neighbors. We say that a configuration $\Configuration$ is \emph{well colored} if for every node $v$ the first colors of $v$ and all its neighbors are pairwise distinct in $\Configuration$ (i.e., each first color occurs at most once in $v$'s neighborhood). A goal of the protocol is to eventually reach a well-colored configuration $\ConfigurationWC$ such that from then on no node ever changes its first color.  Intuitively, the first color of a node at $\ConfigurationWC$ becomes its \emph{locally unique identity}: an identifier that never changes, different from the identities of all its neighbors and neighbors' neighbors. With locally unique identities the nodes can then easily simulate the moves of $A$: Indeed, in order to know how many neighbors they have in a state of $A$, say $q_1$, they just count the number of different states they see of the form $(q_0, q_1, p, \fc,\sco)$.

To achieve this goal, the protocol uses the second colors. In phase 0 the nodes restart their states (initially this is superfluous because they are already there), and move to phase~1. In phase 1, the nodes select an arbitrary distribution of first colors. Since the nodes are deterministic, they rely on strong fairness to ensure that eventually a well-colored distribution is chosen. The nodes then move to phase 2, where they start simulating $A$ under the assumption that the current configuration is well colored.
However, at the same time they keep changing their second colors, and start to watch out for neighbors with the same first color as themselves, and for pairs of neighbors with the same first color but distinct second colors. Whenever they detect one of these two situations, they know that their assumption was incorrect, which implies that the simulation they have carried out so far is useless. So they move back to phase 0. We recall that, as in some other proofs, the nodes do not move synchronously from phase to phase; instead, a node moves to a new phase, and waits for its neighbors to follow.

Let us now describe the transition function of $B$. Let $\Configuration$ denote the current configuration of $B$. Fix a node $v$ of $G$, and  let $\alpha=(q_0, q, p, \fc,\sco)$ be the current state of $v$ in $\Configuration$. Further, let $q'$ be the state $v$ would move to in machine $A$ from the configuration of $A$ corresponding to $\Configuration$. Finally, let $(\fc +1)$ denote $(\fc+1) \bmod (k^2+1)$, and  
$(p+1)$ and $(p-1)$ denote $(p+1) \bmod 3$ and $(p-1) \bmod 3$, respectively. If $v$ is selected by the scheduler at $\Configuration$ , then its next state is determined as follows:

\begin{description}
\item[(0)] If $v$ is in phase 0 then:
\begin{description}
\item[(0.a)] If some neighbor of $v$ is in phase 2, then $v$ stays in $\alpha$.
\item[(0.b)] If all neighbors of $v$ are in phase 0 or 1, then $v$ moves to 
$\alpha[q \rightarrow q_0, p \rightarrow 1]$.
\end{description}
\item[(1)] If $v$ is in phase 1 then:
\begin{description}
\item[(1.a)] If at least one neighbor of $v$ is in phase 0, then $v$ moves to $\alpha[\fc \rightarrow \fc+1]$; \\
(Intuitively, $v$ waits for its neighbors in phase 0 to catch up.)
\item[(1.b)] If all neighbors of $v$ are in phase 1, then $v$ moves to 
$\alpha[p \rightarrow 2, \fc \rightarrow \fc+1]$; \\
(The node initiates a new phase.)
\item[(1.c)] If at least one neighbor of $v$ is in phase 2, then $v$ moves to $\alpha[p \rightarrow 2]$. 
\end{description}
\item[(2)] If $v$ is in phase 2 then:
\begin{description}
\item[(2.a)] If some neighbor of $v$ is in phase 1, then $v$ moves to $\alpha[\fc \rightarrow \fc+1]$. 
\item[(2.b)] If all neighbors of $v$ are in phase 2, and any two nodes of $\NE(v)$ with the same first color also have the same second color, then $v$ moves to $\alpha[q\rightarrow q',\sco \rightarrow 1- \sco]$. \\
(In this case $v$ sees no local violation of the well-coloring condition, and so it simulates a move of $A$, and changes its second color.)
\item[(2.c)] If all neighbors of $v$ are in phase 2, and $\NE(v)$ contains two nodes with the same first color but distinct second colors, then $v$ moves to $\alpha[p \rightarrow 0]$; 
\item[(2.d)] If some neighbor of $v$ is in phase 0, then $v$ moves to $\alpha[p \rightarrow 0]$.
\end{description}
\end{description}

This concludes the description of $B$. In the rest of the proof we show that  $B$ is a distributed automaton, i.e., that it satisfies the consistency condition, and that every fair run of $B$ on $G$ simulates some fair run of $A$ on $G$. The proof is in four steps.

\smallskip\noindent\textbf{Claim 1.} Every run of $B$ eventually reaches a well-colored configuration with all nodes in phase 2. \\
By strong fairness and Lemma~\ref{lem:strongfairness}, it suffices to show that for every configuration there exists a finite sequence of selections such that the configuration reached after executing them is well colored with all nodes in phase 2. First we show that it is possible to color the nodes of $G$ with at most $k^2+1$ different colors so that the colors of every set of nodes $\NE(v)$ are pairwise distinct. Let $G'$ be the result of triangulating $G$, i.e., adding an edge $\{v_1, v_3\}$ for every pair of edges $\{v_1, v_2\}, \{v_2, v_3\} \in G$ such that $v_1 \neq v_3$. Since $G$ has maximum degree~$k$, the graph $G'$ has maximum degree at most $k^2$. Clearly, a coloring of $G'$ in the usual graph-theoretical sense (i.e., for every edge $\{v_1, v_2\}$ of $G'$ the nodes $v_1$ and $v_2$ have different colors) satisfies that the colors of every set $\NE(v)$ in $G$ are pairwise distinct. So it suffices to exhibit a coloring of $G'$ with $k^2+1$ colors. Such a coloring can be obtained by applying the standard greedy algorithm that produces a coloring of a graph with maximum degree $m$ using $m+1$ colors (in our case $m = k^2$). 

We prove the existence of a reachable well-colored configuration with all nodes in phase 2 in two steps:
\begin{description}
\item[(1)] Every reachable configuration can reach either a well-colored configuration with all nodes in phase 2, or a configuration with all nodes in phase 0. \\
Let $\Configuration$ be a reachable configuration. Inspection of (0)-(2) shows that from 
$\Configuration$ we can reach~$\Configuration'$ with all nodes in phase 2. If $\Configuration'$
is well colored we are done. Otherwise, there is a node $v$ such that two nodes of $\NE(v)$ have the same first color in $\Configuration'$. If these nodes have distinct second colors, we can select $v$ and bring it to phase 0 with (2.c), and then (2.d) yields the result. If the nodes have the same second colors, we select one of them. If  (2.b) applies, then its second color changes, and
we can select $v$ as before. If (2.c) applies, then this node moves to phase 0, and then (2.d) yields the result. 
\item[(2)] Every configuration with all nodes in phase 0 can reach a well-colored configuration with all nodes in phase 2. \\
Take a spanning tree $T$ of $G$. Starting with $T' := T$, repeatedly select a leaf $v$ of $T'$ as many times as necessary to give it any first color we wish (this is possible by (0.b) and (1.a)); we then remove $v$ from $T'$ and iterate. When $T'$ consists of just one node, we proceed similarly, but using (1.b) and (2.a). This yields a well-colored configuration with one node in phase 2
and all others in phase 1. We repeatedly select nodes in phase 1 with a neighbor in phase 2 and apply (1.c).
\hfill $\square$
\end{description}

\smallskip\noindent\textbf{Claim 2.} The set of well-colored configurations with all nodes in phase 2 is closed under the transition relation. \\
In such configurations only (2.b) is enabled, which changes neither the phase nor the first color of a node.
So after any transition the new configuration is also well-colored, and all nodes stay in phase 2.
\hfill $\square$


\newcommand{\DiffCol}{\mathit{DC}}
\newcommand{\RunA}{\Run^A}
\newcommand{\RunB}{\Run^B}
\newcommand{\ScheduleA}{\Schedule^A}
\newcommand{\ScheduleB}{\Schedule^B}
\newcommand{\ScheduleC}{\Schedule^{\mathit{aux}}}
\newcommand{\SelectionA}{\Selection^A}
\newcommand{\SelectionB}{\Selection^B}
\newcommand{\SelectionC}{\Selection^{\mathit{aux}}}
\newcommand{\NoDuplicatesRunA}{\widehat{\Run}^A}
\newcommand{\NoDuplicatesRunB}{\widehat{\Run}^B}
\newcommand{\ConfigurationA}{\Configuration^A}
\newcommand{\ConfigurationB}{\Configuration^B}
\newcommand{\state}[3]{q^{#1}_{#2#3}}

\smallskip

Let us now prove that $B$ satisfies the consistency condition, and that
it is equivalent to~$A$ on graphs of maximum degree~$k$.
Let  $\RunB = \Tuple{\ConfigurationB_0, \ConfigurationB_1, \ConfigurationB_2, \dots}$ be an arbitrary strongly fair run of $B$ on $G$. It suffices to show that there exists a strongly fair run $\RunA$ of $A$ on $G$ such that $\RunB$ is accepting if{}f $\RunA$ is accepting. Indeed, since $A$ satisfies the consistency condition by hypothesis, it follows that $B$ is also consistent, and that $B$ accepts $G$ if{}f $A$ does, which implies the equivalence of~$A$ and~$B$ on $k$-bounded graphs.

Let $\ScheduleB = (\SelectionB_0, \SelectionB_1, \SelectionB_2, \ldots) \in (2^V)^\omega$ be a schedule that schedules $\RunB$. We now define a schedule $\ScheduleA=(\SelectionA_0, \SelectionA_1, \SelectionA_2, \ldots)$, and then choose $\RunA$ as the run scheduled by $\ScheduleA$. For every node $v$, let $t_v$ be the smallest time after which $v$ and its neighbors reach phase 2 and stay in it forever (in run~$\RunB$), which exists by Claims 1 and 2.  For every $t \in \Naturals$, we decide whether $v \in \SelectionA_t$ or not as follows: 
\begin{quote}
If $t \leq t_v$, then $v \notin \SelectionA_t$; if $t > t_v$, then $v \in \SelectionA_t$ if{}f $v \in \SelectionB_t$. 
\end{quote}
\noindent So, intuitively, in $\ScheduleA$ a node $v$ is never selected before $\NE(v)$ has ``stabilized'', and after that it is selected whenever $\ScheduleB$ selects it. It remains to show that $\RunA$ is strongly fair, and that $\RunA$ is accepting if{}f $\RunB$ is accepting.

\smallskip\noindent\textbf{Claim 3.} $\RunA$ is strongly fair. \\
By Claims 1 and 2 and the definition of $\ScheduleA$, there is a time $t$ such that $\SelectionA_{t'}= \SelectionB_{t'}$ for every $t' \geq t$ (intuitively, $t$ is the time at which all nodes have stabilized in phase 2). Since $\ScheduleB$ is strongly fair by hypothesis, and strong fairness is independent of the properties of any finite prefix, $\ScheduleA$ is also strongly fair. So $\RunA$ is strongly fair.
\hfill $\square$

\smallskip\noindent\textbf{Claim 4.} $\RunA$ is accepting if{}f $\RunB$ is accepting. \\
Let $\RunA = \Tuple{\ConfigurationA_0, \ConfigurationA_1, \ConfigurationA_2, \dots}$,
and let $v$ be an arbitrary node of~$G$. 
It suffices to prove that $\ConfigurationA_t(v) = \ConfigurationB_t(v)$ holds for every $t \geq t_v$. (Indeed, by definition a run is accepting if{}f  every node eventually visits accepting states only, and so, since  $\ConfigurationA_t(v) = \ConfigurationB_t(v)$ for every $t \geq t_v$, this holds for $\RunA$ if{}f it holds for $\RunB$.) We proceed by induction on $t$. 

\noindent \textbf{Base:} $t=t_v$. Let $q_{0v}$ be the initial state of $v$. We prove $\ConfigurationA_{t_v}(v) = q_{0v} = \ConfigurationB_{t_v}(v)$. We have $\ConfigurationA_t(v) = q_{0v}$ for every $t \leq t_v$ because $v \notin \SelectionA_t$ for any $t \leq t_v$.
Moreover, we have $\ConfigurationB_{t_v}(v) = q_{0v}$ because $v$ moves to $q_{0v}$ the last time it moves to phase 1 (case (0.b)), and stays in  $q_{0v}$ until it and all its neighbors reach phase 2 (case (2.b)). But this is precisely the time $t_v$: Since $v$ never leaves phase 2 again, neither do its neighbors (otherwise they would ``drag'' $v$ to phase~$0$ with them).

\noindent \textbf{Step:} $t > t_v$. By induction hypothesis we have $\ConfigurationA_{t-1}(v) = \ConfigurationB_{t-1}(v)$, and by the definition of~$\ScheduleA$ we have $v \in \SelectionA_t$ if{}f $v \in \SelectionB_t$. So it suffices to show 
$\ConfigurationA_{t-1}(u) = \ConfigurationB_{t-1}(u)$ for every neighbor~$u$ of~$v$. 
Fix a neighbor $u$. Consider two cases:
\begin{itemize}
\item $t \geq t_u$. Then $\ConfigurationA_{t-1}(u) = \ConfigurationB_{t-1}(u)$ follows from the induction hypothesis applied to the node $u$.
\item $t < t_u$. Let $q_{0u}$ be the initial state of $u$. Since, by definition, $\ScheduleA$ never selects $u$ before time $t_u$, we have $\ConfigurationA_{t-1}(u) = q_{0u}$. We show $\ConfigurationB_{t-1}(u) = q_{0u}$. Since $t < t_u$ holds but $u$ will never leave phase 2 after $t$ by hypothesis, some neighbor of $u$ will still change its phase after $t$. So its neighbor is in phase 1. But all nodes in phase 1 are in their initial state.  
\qedhere
\end{itemize}
\end{proof}


\end{document}